\title{Polynomially Superintegrable Hamiltonians Separating in Cartesian Coordinates}
\author{Ian Marquette$^{1}$ and Anthony Parr$^{2}$}
\date{$^{1}$ Department of Mathematical and Physical Sciences, La Trobe University \\ Bendigo 3552, Victoria Australia \\ $^{2}$ School of Mathematics and Physics, The University of Queensland \\ Brisbane 4072, Queensland, Australia}
\newcommand{\mrm}[1]{\mathrm{#1}}
\newcommand{\mcl}[1]{\mathcal{#1}}
\newcommand{\mbf}[1]{\mathbf{#1}}
\newcommand{\pd}[2]{\frac{\partial #1}{\partial #2}}
\DeclareMathOperator{\ad}{ad}
\DeclareMathOperator*{\Res}{Res}
\theoremstyle{theorem}
\newtheorem{theorem}{Theorem}
\newtheorem{problem}{Problem}
\newtheorem{lemma}{Lemma}
\newcounter{ex}
\newtheorem*{corollary*}{Corollary}
\theoremstyle{definition}
\newtheorem*{definition*}{Definition}
\newtheorem{example}[ex]{Example}
\newtheorem*{example*}{Example}
\begin{document}
\maketitle
\noindent
\textsc{pacs} numbers: 03.65.Fd, 03.65.Ge
\\
\textsc{email}: \texttt{i.marquette@latrobe.edu.au}, \texttt{anthony.parr@uq.net.au}
\bigskip
\noindent

\begin{abstract}
The problem of finding superintegrable Hamiltonians and their integrals of motion can be reduced to solving a series of compatibility equations that result from the overdetermination of the commutator or Poisson bracket relations. The computation of the compatibility equations requires a general formula for the coefficients, which in turn must depend on the potential to be solved for. This is in general a nonlinear problem and quite difficult. Thus, research has focused on dividing the classes of potential into standard and exotic ones so that a number of parameters may be set to zero and the coefficients may be obtained in a simpler setting. We have developed a new method in both the classical and quantum settings that allows a formula for the coefficients of the integral to be obtained without recourse to this division for Cartesian-separable Hamiltonians. The expressions we obtain are in general non-polynomial in the momenta whose fractional terms can be arbitrarily set to zero. These conditions are equivalent to the compatibility equations, but the only unknowns in addition to the potential are constant parameters. We also classify all the fourth-order standard superintegrable Hamiltonians.
\end{abstract}
\tableofcontents

\section{Introduction}

Scalar potentials whose corresponding Hamiltonians admit more integrals than degrees of freedom have been partitioned into two classes: those which satisfy linear differential equations, called `standard', and those which do not, called `exotic'. The harmonic oscillator and the Kepler model are well-known members of the first class, and in more recent times, the Smorodinsky-Winternitz, Tremblay-Turbiner-Winternitz (\textsc{ttw}) and Post-Winternitz models have been added to this list \cite{smod65,ttw09,post15}. It is known that systems with third-order integrals lead to standard potentials with wave functions involving exceptional orthogonal polynomials \cite{mar13,mar22}. The exotic class includes potentials that are algebraic in the classical case or satisfy the Painlev\'{e} property in the quantum case.

It is difficult, however, to study even separable Hamiltonians with higher order symmetries because of the increasing non-linearity of the resulting \textsc{pde} system. Thus, research has been pressured towards cases where many of the coefficients can be set to zero \cite{abou}. It is our aim in this work to present a general method valid for Cartesian-separable higher-order superintegrable Hamiltonians by which these additional assumptions may be dismissed and we can obtain a system of \textsc{ode} leading to the complete determination of the potential energy. 

We consider a natural Hamiltonian, either classical or quantum, on a flat configuration space with \(d\) degrees of freedom that is separable in Cartesian coordinates, i.e. of the form
\begin{equation}
H=\sum^d_{i=1}\left[\tfrac{1}{2}p_i^2+V_i(q_i)\right]\label{eq:ham}
\end{equation}
where \(p_1,p_2,\ldots,p_d\), \(q_1,q_2,\ldots,q_d\) are the canonical momenta and coordinates respectively and \(V(\mbf{q})=\sum_{i=1}^dV_i(q_i)\) is the potential energy, \(V_i\) we call a potential summand. This Hamiltonian possesses \(d\) second-order integrals
\begin{equation}
H_i=\tfrac{1}{2}p_i^2+V_i,\qquad i=1,2,\ldots,d.
\end{equation}
For two dimensions, such a Hamiltonian is one of four types that is second-order integrable, the others being separable in polar, parabolic or elliptical coordinates. For \(d>2\), separability is a special case of second-order integrability \cite{mak67,ev90,snobl}. We wish to address the case where the Hamiltonian is (minimally and polynomially) superintegrable, i.e. it possesses at least one integral \(X\) which is of \(n\)th order in the momenta that is not a polynomial in the known integrals \(H_1,H_2,\ldots,H_d\).
\begin{problem}
    Given a Hamiltonian of the form \eqref{eq:ham}, to find which assignments of the potential summands \(V_1,V_2,\ldots,V_d\) such that \(H\) is superintegrable.
\end{problem}
Much progress has been made for systems of two degrees of freedom. When the third integral is of \(n\)th order the system is called \(n\)th-order superintegrable. Second-order superintegrable Hamiltonians in Euclidean space are those that are separable in more than one coordinate system. The class of potentials, which is the same in the classical and quantum cases, was found by Fri\v{s} \textit{et al.} \cite{smod65}. Third-order superintegrable models have also been systematically classified \cite{gra02,gra04,tre10,popper} except those potentials which are only separable in elliptical coordinates. Fourth-order superintegrable systems have been studied in Cartesian and polar coordinates \cite{mar17,esc17,esc18a}. The Cartesian case is incomplete, and standard potentials and exotic classical potentials were left for a subsequent paper, which never came. Extensive study \cite{esc18b} on higher-order systems separable in polar coordinates has lead to the conclusion that the radial component of the potential must be of oscillator \(\alpha r^2\) or Kepler-Coulomb type \(\alpha /r\) (the two-parameter form \((\alpha+\beta/r^2)^{1/2}\) discovered by Onofri and Pauri \cite{onofri} has not been detected yet in polynomial superintegrability). Higher-order superintegrable systems separable in Cartesian coordinates have only been studied in the doubly-exotic case \cite{esc20}. The doubly-exotic fifth-order Hamiltonians have been completely classified \cite{abou,ismet}. Ladder operator methods \cite{kalnins2} and Darboux transformations \cite{mar13b,mar16} are invoked for integrals of arbitrarily high order such as with the \textsc{ttw} model \cite{ttw09}.

The simplicity of Cartesian separability greatly facilitates the search for superintegrable systems. Indeed, for classical systems, the use of action-angle variables has been very fruitful in making general statements and deriving diverse families of superintegrable models \cite{onofri,grig18,miller04}. Such methods however fail for quantum systems because the integrals are usually algebraic in the momenta.

The direct method for systematically finding superintegrable models is to expand the integral into its coefficients and solve the partial differential equations that result from requiring it to be in involution (in the classical case) or commute (in the quantum case) with the Hamiltonian. These \textsc{pde} form an overdetermined system. To ensure compatibility, there is an additional series of equations that the potential and the coefficients of the integral must satisfy. Since the leading-order terms of any integral must be a polynomial combination of the linear and angular momenta, the only unknowns of the first compatibility equation (which is linear in the potential) are constant parameters. However, the other compatibility equations involve functional unknowns which must be determined by integrating the potential and the previous set of coefficients of the integral. A variation of this approach is to expand in terms of powers of the known integrals \cite{kalnins1,bern20a}.

The essence of our approach is to treat the homogeneous components rather than the individual coefficients as the fundamental units of the problem. The system of \textsc{pde} relates one order to all the higher orders and this can be consolidated into a very simple form. We have developed an algorithm that can solve each equation in succession. This gives us a formula for the integral as a rational function of the form
\[X=\text{polynomial in the momenta }+\text{ reciprocal powers in the momenta.}\]
The coefficients of the reciprocal powers need to be equal to zero in order for it to be a polynomial integral. Out of these constraints arise a series of equations for the potential summands which are equivalent to the compatibility equations obtained from the direct method. This technique requires we treat the momenta as commuting variables rather than as differential operators. We therefore recast quantum mechanics as a deformation of classical mechanics by assigning Planck's reduced constant \(\hbar\) as the deformation parameter. This allows for a unified treatment of classical and quantum regimes via McCoy's formulas \cite{mccoy}.

In section 2, we derive the \textsc{pde} for the integral components as the interpretation of Problem 1. In section 3, we establish the framework which will enable us to solve these equations. The things to be determined are a sequence of operators which can be solved algorithmically. In section 4, we demonstrate this algorithm for a few cases (sufficient to calculate any integral and the determining equations for the potential up to tenth order). Section 5 will deal with the general construction of the determining equations. Section 6 recapitulates the process. In section 7, we restrict ourselves to two dimensions and give the linear, quadratic and cubic compatibility equations explicitly, valid for all orders. As an application we give all the fourth-order standard potentials, with a new family discovered.
\section{Description of the Problem}
Identifying \(\hbar\) with the deformation parameter means that at \(\hbar=0\), we should obtain the equations for classical motion.
\subsection{Discussion of the Classical Case}
Let us for the moment restrict ourselves to classical models where the system of equations related to this problem can easily be written down. To be in involution with the Hamiltonian \eqref{eq:ham} is equivalent to the following condition:
\begin{equation}
    0=\sum^d_{i=1}\left(p_i\pd{X}{q_i}-V_i'\pd{X}{p_i}\right).\label{eq:classical}
\end{equation}
When \(d=1\), the only solution to \eqref{eq:classical} is an arbitrary function of \(H\). Let us then take \(d>1\). By considering polynomial integrals, we eliminate the momenta as variables. Let us write
\begin{equation}
X\coloneqq\sum_{i_1+i_2+\cdots+i_d\leq n}\gamma_{\mbf{i}}(\mbf{q})\prod^d_{j=1}p_j^{i_j},\label{eq:ikcomp}
\end{equation}
where \(\mathbf{i}=(i_1,i_2,\ldots,i_d)\) is a tuple of non-negative integers. Then the single equation \eqref{eq:classical} becomes a system of equations
\begin{equation}
    0=\sum^d_{j=1}\left[\pd{\gamma_{\mbf{i}-\mbf{e}_j}}{q_j}-(i_j+1)V'_j\gamma_{\mbf{i}+\mbf{e}_j}\right].\label{eq:classicaleq}
\end{equation}
where \(\mbf{e}_j\) is an elementary unit vector
It is clear we may assume \(X\) is even or odd under time reversal, i.e. \(\gamma_{\mbf{i}}=0\) if \(n-i_1-i_2-\cdots-i_d\) is odd. For odd \(n\), there is an additional condition on the linear terms
\[0=\sum^d_{j=1}V'_j\gamma_{\mbf{e}_j}.\]
Equation \eqref{eq:classicaleq} overdetermines the coefficients on the left. This results in additional compatibility equations. In two dimensions, we may express these explicitly:
\begin{equation}
0=\sum^{n-k+1}_{i=0}(-1)^i\frac{\partial^{n-k+1}}{\partial q_1^{n-k+1-i}\partial q_2^i}\left[(i+1)V'_1\gamma_{i+1,n-k+1}+(n-k+2)V'_2\gamma_{i,n-k+2}\right].\label{eq:compatibility}
\end{equation}
Such compatibility equations are the conditions which limit the form of the potential energy. The assumption of separability of the potential reduces the compatibility equations in the most general case from a \textsc{pde} to a system of \textsc{ode} in the potential summands provided the coefficients are also determined. If we are to determine the potentials which lead to superintegrability we must have a constructive method for the integral. Equation \eqref{eq:classicaleq} allows us to determine one order in terms of another. We therefore write our integral as \(X=X_0+X_1+X_2+\cdots\) where \(X_i\) is a homogeneous polynomial in the momenta of order \(n-i\). Equation \eqref{eq:classical} becomes
\begin{equation}
    0=\sum^d_{j=1}\left(p_j\pd{X_k}{q_j}-V'_j\pd{X_{k-2}}{p_j}\right).\label{eq:newclass}
\end{equation}
A similar system of equations was derived by Holt \cite{holt} for his truncation program. Our restriction to Cartesian separability is enough that equation \eqref{eq:newclass} can be solved.
\subsection{Extension to the Quantum Case}
Before we proceed to our analysis, we shall generalise \eqref{eq:newclass} to the quantum case. We shall do this by defining a total symbol, also denoted \(X\). This allows us to speak of division and differentiation of the momenta. The usual procedure for this purpose is to define a Moyal bracket \cite{hiet83}, but this necessitates the Weyl ordering. To preserve the standard ordering (momenta on the right), we make a different choice of operations. Let \(f,g\) be two functions on phase space, polynomial in the momenta and smooth in the coordinates. We define
\begin{equation}
\begin{aligned}
    (f,g) &\coloneqq\sum_{i_1+i_2+\cdots+i_d\geq 1}\frac{1}{i_1!i_2!\cdots i_d!}\left(\frac{\hbar}{\sqrt{-1}}\right)^{i_1+i_2+\cdots+i_d-1}\\
    &\qquad\times\left(\frac{\partial^{i_1+i_2+\cdots+i_d}f}{\partial q_1^{i_1}\partial q_2^{i_2}\cdots\partial q_d^{i_d}}\frac{\partial^{i_1+i_2+\cdots+i_d}g}{\partial p_1^{i_1}\partial p_2^{i_2}\cdots\partial p_d^{i_d}}\right.\\
    &\qquad\left.-\frac{\partial^{i_1+i_2+\cdots+i_d}f}{\partial p_1^{i_1}\partial p_2^{i_2}\cdots\partial p_d^{i_d}}\frac{\partial^{i_1+i_2+\cdots+i_d}g}{\partial q_1^{i_1}\partial q_2^{i_2}\cdots\partial q_d^{i_d}}\right),\label{eq:poisson}
    \end{aligned}
    \end{equation}
    \begin{equation}
            f^*\coloneqq\sum_{i_1,i_2,\ldots,i_d\geq 0}\frac{1}{i_1!i_2!\cdots i_d!}\left(\frac{\hbar}{\sqrt{-1}}\right)^{i_1+i_2+\cdots+i_d}\frac{\partial^{2i_1+2i_2+\cdots+2i_d}\bar{f}}{\partial q_1^{i_1}\partial q_2^{i_2}\cdots \partial q_d^{i_d}\partial p_1^{i_1}\partial p_2^{i_2}\cdots\partial p_d^{i_d}}\label{eq:fdagger}
    \end{equation}
    where \(\bar{f}\) is the complex conjugate of \(f\). We can simplify \eqref{eq:fdagger} by introducing the operator
    \begin{equation}
    a_0\coloneqq\frac{\hbar}{2\sqrt{-1}}\sum^d_{i=1}\frac{\partial^2}{\partial q_i\partial p_i}.\label{eq:a0def}
\end{equation}
Then by the multinomial theorem, \eqref{eq:fdagger} can be written as
\begin{equation}
    f^*=\sum_{i\geq 0}\frac{2^i}{i!}a_0^i\bar{f}=\exp(2a_0)\bar{f}\label{eq:multinom}
\end{equation}
These expansions are well-defined provided \(f,g\) are polynomial in the momenta. We have
\begin{equation}
(\cdot,\cdot)=(\cdot,\cdot)_{PB}+\mathcal{O}(\hbar)
\end{equation}
where \((\cdot,\cdot)_{PB}\) is the Poisson bracket. Moreover, it follows from McCoy's formulas \cite{mccoy} that 
\begin{equation}
    f\circ g-g\circ f=\hbar\sqrt{-1}(f,g)
\end{equation}
where \(\circ\) represents non-commutative multiplication of observables. By a similar induction, \(f^*\) gives precisely the Hermitean conjugate. Throughout we shall use \((\cdot,\cdot)\) doubly as the Poisson bracket and normalised quantum bracket. The coordinates and their conjugate momenta are always understood to commute under \([\cdot,\cdot]\).
\begin{theorem}
    Problem 1 is equivalent to finding \(n+1\) functions \(X_0,X_1,\ldots,X_n\) on \(2d\)-dimensional phase space such that:
    \begin{enumerate}[label=\((\roman*)\)]
    \item \(X_k\) is a homogeneous polynomial of order \(n-k\) in the momenta for each \(k\).
    \item They constitute a non-trivial solution to the system of \textsc{pde}
\begin{equation}
\sum_{i=1}^d\left(p_i\pd{X_k}{q_i}+\frac{\hbar }{2\sqrt{-1}}\frac{\partial^2X_{k-1}}{\partial q_i^2}\right)=\sum^d_{i=1}\sum_{j=1}^{k-1}\frac{1}{j!}\left(\frac{\hbar}{\sqrt{-1}}\right)^{j-1}V_i^{(j)}\frac{\partial^jX_{k-1-j}}{\partial p_i^j}.\label{eq:quantum}
\end{equation}
    \end{enumerate}
\end{theorem}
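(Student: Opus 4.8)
The plan is to show that the single operator equation characterising a quantum integral, namely $H\circ X - X\circ H = 0$, is equivalent — after passing to the total symbol and grading by momentum degree — to the homogeneous system \eqref{eq:quantum} together with condition $(i)$. The starting point is the identity $f\circ g-g\circ f=\hbar\sqrt{-1}(f,g)$ established via McCoy's formulas, so that $X$ is an integral iff $(H,X)=0$. First I would substitute $H$ from \eqref{eq:ham} into the bracket \eqref{eq:poisson}. Because $H$ is quadratic in each $p_i$ and its coordinate-dependence is the separated potential $\sum_i V_i(q_i)$, almost all terms in the sum defining $(H,X)$ vanish: the $p$-derivatives of $H$ of order $\geq 3$ are zero, and the $q$-derivatives of $H$ factor through the individual $V_i^{(j)}$. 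This collapses $(H,X)=0$ to
\[
\sum_{i=1}^d\left(p_i\pd{X}{q_i}+\frac{\hbar}{2\sqrt{-1}}\frac{\partial^2 X}{\partial q_i^2}\right)=\sum_{i=1}^d\sum_{j\geq 1}\frac{1}{j!}\left(\frac{\hbar}{\sqrt{-1}}\right)^{j-1}V_i^{(j)}\frac{\partial^j X}{\partial p_i^j},
\]
which is the ungraded ancestor of \eqref{eq:quantum}; at $\hbar=0$ this is exactly \eqref{eq:classical}.

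Next I would justify restricting to polynomial symbols of pure parity. Since $X$ is required to be a polynomial integral that is functionally independent of $H_1,\dots,H_d$, write $X=\sum_k X_k$ with $X_k$ the momentum-homogeneous piece of degree $n-k$; the decomposition is finite and unique. One then observes that each term on the left side of the collapsed equation that comes from $X_k$ has momentum degree $n-k-1$ (the first summand lowers the $q$-free count by one via $p_i\,\partial_{q_i}$... wait, $p_i\partial_{q_i}X_k$ has degree $(n-k)-0+1=n-k+1$; one must track this carefully), while on the right $V_i^{(j)}\partial_{p_i}^jX_{k-1-j}$ has momentum degree $(n-k-1-j)+0 -$ no: $\partial_{p_i}^j$ of a degree-$(n-k-1-j)$... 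The correct bookkeeping: collect all contributions to a fixed total momentum degree $m$. Reindexing so that the left-hand contributions of $X_k$ and $X_{k-1}$ both land in degree $n-k+1$, and the right-hand contributions of $X_{k-1-j}$ land there too, yields precisely \eqref{eq:quantum} for that value of $k$. The parity reduction $\gamma_{\mbf i}=0$ for $n-|\mbf i|$ odd follows because the system decouples into an even part and an odd part (each $X_k\mapsto X_{k+2}$), so a general solution is a sum of one of each and each is again an integral.

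The converse direction is immediate: given $X_0,\dots,X_n$ satisfying $(i)$ and \eqref{eq:quantum}, summing \eqref{eq:quantum} over $k$ reconstitutes the collapsed bracket equation, hence $(H,X)=0$ with $X=\sum_k X_k$, hence $X\circ H=H\circ X$; non-triviality of the solution ensures $X$ is not the zero symbol, and — modulo absorbing polynomials in $H_1,\dots,H_d$, which themselves solve the homogeneous system trivially — independence from the known integrals is what "non-trivial" encodes. I would also remark that $f^*$ giving the Hermitian conjugate means we may take $X$ self-adjoint without loss, so the symbol is real and the construction is internally consistent.

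The main obstacle is the degree-counting reindexing in the second paragraph: one must verify that after applying $p_i\partial_{q_i}$, $\partial_{q_i}^2$, and $V_i^{(j)}\partial_{p_i}^j$ to the homogeneous pieces, every surviving term in the collapsed equation carries the same total momentum degree once $k$ is fixed, so that the infinite $\hbar$-expansion truncates termwise and matches \eqref{eq:quantum} line by line. This is bookkeeping rather than deep mathematics, but it is where sign conventions ($\hbar/\sqrt{-1}$ versus $\hbar/(2\sqrt{-1})$ in the $a_0$-term) and the off-by-one in the $j$-sum must be pinned down exactly; getting the upper limit $j\leq k-1$ and the shift $X_{k-1-j}$ right is the crux.
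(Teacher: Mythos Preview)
Your approach is essentially identical to the paper's: compute $(X,H)=0$ from the deformed bracket \eqref{eq:poisson} using the specific form of $H$ to obtain the ungraded equation, then decompose $X=\sum_k X_k$ into momentum-homogeneous pieces and collect terms of fixed degree $n-k+1$ to obtain \eqref{eq:quantum}. The paper's proof is terser and omits the converse, the parity discussion, and the self-adjointness remark, but your degree-counting (despite the hesitation) lands on exactly the same bookkeeping the paper uses: $p_i\partial_{q_i}X_k$ has degree $n-k+1$, $\partial_{q_i}^2 X_{k-1}$ has degree $n-k+1$ after the extra $\hbar$-shift in the grading is absorbed into the index shift, and $V_i^{(j)}\partial_{p_i}^j X_{k-1-j}$ has degree $n-k+1$.
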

\begin{proof}
   We need to find a function such that \((X,H)=0\). Using the definition \eqref{eq:poisson} of the deformed Poisson bracket applied to \eqref{eq:ham}, we get
\begin{equation}
    \sum^d_{i=1}\left[\pd{X}{q_i}p_i+\frac{\hbar}{2\sqrt{-1}}\frac{\partial^2X}{\partial q_i^2}-\sum_{j=1}^n\frac{1}{j!}\left(\frac{\hbar}{\sqrt{-1}}\right)^{j-1}V_i^{(j)}\frac{\partial^jX}{\partial p_i^j}\right]=0.\label{eq:comm}
\end{equation}
Expanding into homogeneous components, each term that appears is homogeneous. Therefore, all the terms of a particular order are independently zero. We see that \(p_i\displaystyle\pd{X_k}{q_i}\) is of order \(n-k+1\) while \(\displaystyle\frac{\partial^jX_k}{\partial p_i^j}\)
is of order \(n-k-j\). Equation \eqref{eq:quantum} gives all the terms that are of order \(n-k+1\).
\end{proof}
Theorem 2 in \cite{post15} states there is no loss of generality in taking our integral to be the particular Hermitean form \(X=\frac{1}{2}(Y+Y^*)\) where \(Y\) is real and even or odd in the quantum momenta according to the parity of \(n\). Writing \(Y=\sum_{i=0}^{\lfloor\frac{1}{2}n\rfloor}Y_i\) where \(Y_i\) is the homogeneous component in the momenta of order \(n-2i\), we get
        \begin{align}
X_{2i}&=Y_i+\sum^{i-1}_{j=0}\frac{2^{2i-2j-1}}{(2i-2j)!}a_0^{2i-2j}Y_j,\label{eq:even}\\
X_{2i+1}&=\sum^i_{j=0}\frac{2^{2i-2j}}{(2i-2j+1)!}a_0^{2i-2j+1}Y_j.\label{eq:odd}
\end{align}
There are now \(\lceil \frac{1}{2}n\rceil+1\) independent equations to solve. In the case \(\hbar=0\), \(a_0\) vanishes and \(X_{2i}=Y_i\) and \(X_{2i+1}=0\). This aligns with the classical case after assuming the integral was even or odd under time reversal.
\section{Iterative Integration of the Equations}
Let us define the partial differential operator
\begin{equation}
    L\coloneqq\sum^d_{i=1}p_i\pd{}{q_i}.
\end{equation}
A solution to equation \eqref{eq:quantum} requires the iterative inversion of this operator.
\subsection{The Leading Order Term}
When \(k=0\), equation \eqref{eq:quantum}, for a Hermitean integral, has
\[LY_0=0\]
The kernel of \(L\) coincides with the integrals of a free particle Hamiltonian, or in geometric language the Killing tensors of the Euclidean metric. An integral of arbitrary order can be expressed as a polynomial in the first-order integrals. These are the linear momenta \(p_1,p_2,\ldots,p_d\) and the angular momenta \(m_{ij}\coloneqq q_ip_j-q_jp_i\). The expression is not in general unique thanks to the dependence relations
\begin{equation}
p_im_{jk}+p_jm_{ki}+p_km_{ij}=0.\label{eq:mdep}
\end{equation}
\subsection{Algorithm for Integration}
We shall endeavour to reduce \(Y\) to a series of `integration constants' \(Z_0,Z_1,\ldots\) that are functions of the linear and angular momenta. Our first assignment is \(Y_0=Z_0\) so \(Z_0\) must be a non-zero polynomial in \(p_i,m_{ij}\). For \(k=2\) in equation \eqref{eq:quantum}, we obtain another equation
\begin{equation}
LY_1=\sum^d_{i=1}\left(V_i'\pd{}{p_i}-\frac{\hbar}{2\sqrt{-1}}a_0\frac{\partial^2}{\partial q_i^2}\right)Z_0\label{eq:y1}
\end{equation}
The homogeneous solution is designated \(Z_1\). We now need to find the particular solution. Differential operators that are a linear combination of
\begin{equation}
F=\frac{f(p_i,q_i)}{k!}\frac{\partial^{k+\ell}}{\partial p_i^k\partial q_i^\ell}.\label{eq:mon}
\end{equation}
we will call \textit{additively separable}. The set of additively separable operators forms a Lie algebra under the commutator.
\begin{theorem}
    For any additively separable operator \(F\), there exists another additively separably operator \(G\) such that \([L,G]=F\).
\end{theorem}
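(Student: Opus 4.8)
The plan is to reduce the statement to a single coordinate--momentum pair and then to invert $\mathrm{ad}_L$ by an explicit downward recursion. Since both sides of $[L,G]=F$ are linear in $G$ and $F$, and the additively separable operators are by definition the linear span of the monomials \eqref{eq:mon}, it suffices to treat a single such monomial $F=\frac{f(p_i,q_i)}{k!}\,\partial_{p_i}^{\,k}\partial_{q_i}^{\,\ell}$. The first point to notice is that every summand $p_j\,\partial/\partial q_j$ of $L$ with $j\neq i$ commutes with any operator built solely from $p_i,q_i,\partial_{p_i},\partial_{q_i}$; hence $[L,G]=[\,p_i\,\partial_{q_i},\,G\,]$, and we may take $d=1$, abbreviating $L_0=p\,\partial_q$ and suppressing the index $i$.

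Next I would record the elementary commutator identity, obtained from the Leibniz rule and $[\partial_p,p]=1$,
\[
\bigl[L_0,\;g(p,q)\,\partial_p^{\,a}\partial_q^{\,b}\bigr]=p\,\frac{\partial g}{\partial q}\,\partial_p^{\,a}\partial_q^{\,b}-a\,g\,\partial_p^{\,a-1}\partial_q^{\,b+1}.
\]
Its decisive feature is that $\mathrm{ad}_{L_0}$ preserves the total derivative order $a+b$: it either fixes $(a,b)$ (replacing the coefficient $g$ by $p\,\partial_q g$) or lowers the $p$-order and raises the $q$-order each by one. Writing $m=k+\ell$ and making the ansatz $G=\sum_{a=0}^{m}g_a(p,q)\,\partial_p^{\,a}\partial_q^{\,m-a}$, substitution and comparison of the linearly independent operators $\partial_p^{\,a}\partial_q^{\,m-a}$ turns $[L_0,G]=F$ into the triangular system
\[
p\,\frac{\partial g_a}{\partial q}-(a+1)\,g_{a+1}=\frac{f}{k!}\,\delta_{a,k},\qquad a=0,1,\dots,m,\quad g_{m+1}:=0.
\]

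This system I would solve by downward recursion on $a$. Taking $g_a\equiv 0$ for $a>k$, the equation at $a=k$ reads $p\,\partial_q g_k=f/k!$ and is solved by one quadrature in $q$, $g_k=\tfrac{1}{k!\,p}\int^{q}\!f(p,q')\,\mathrm{d}q'$; for $a<k$ one then has $p\,\partial_q g_a=(a+1)\,g_{a+1}$, again solved by a quadrature, so that $g_a=\tfrac{1}{p}\int^{q}\!(a+1)\,g_{a+1}\,\mathrm{d}q'$, the free additive function of $p$ at each stage being discarded. The recursion stops after $k+1$ steps and yields a bona fide additively separable $G$. The one step that genuinely has to be acknowledged is the division by $p$: the coefficients of $G$ accrue reciprocal powers of the momentum, which is precisely the ``fractional in the momenta'' phenomenon announced in the abstract and is admissible in the operator class under consideration. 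This is also where the real obstacle would lie if one demanded polynomial coefficients --- the opposite, upward recursion keeps the coefficients polynomial but in general leaves an irremovable residual proportional to $p^{\ell}\partial_q^{\ell}f$ --- so permitting the reciprocal powers is exactly what lets the argument close. A quick check on the monomial $F=V'(q)\,\partial_p$ gives $G=\tfrac{1}{p^{2}}\bigl(\int^{q}\!V\bigr)\partial_q+\tfrac{V}{p}\,\partial_p$, and one verifies $[L_0,G]=V'\partial_p$ directly.
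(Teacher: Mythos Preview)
Your proof is correct and follows essentially the same route as the paper: reduce by linearity to a single monomial in one coordinate pair, use the commutator identity (equivalent to the paper's nilpotency observation \eqref{eq:nil}) to obtain a triangular system, and solve by iterated $q$-integration with factors of $1/p$. Your recursion $g_{k-j}=\tfrac{1}{(k-j)!\,p^{j+1}}\int^{(j+1)}\!f$ reproduces exactly the explicit formula \eqref{eq:mon2}, and your check $F=V'\partial_p\Rightarrow G=\tfrac{1}{p^{2}}(\int V)\partial_q+\tfrac{V}{p}\partial_p$ is the $d=1$ version of the paper's $a_1$ in \eqref{eq:aoperator}.
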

\begin{proof}
Given the linearity of solutions, we may without loss of generality assume \(F\) is in the simple form \eqref{eq:mon}. We observe that
\[\left[L,\frac{\partial^{k+\ell}}{\partial p_i^k\partial q_i^\ell}\right]=-k\frac{\partial^{k+\ell}}{\partial p_i^{k-1}\partial q_i^{\ell+1}}\]
Repeating the commutators, we have the nilpotency property
\begin{equation}
\ad_L^{k+1}\left(\frac{\partial^{k+\ell}}{\partial p_i^k\partial q_i^\ell}\right)=0.\label{eq:nil}
\end{equation}
We require an operator \(G\) such that \([L,G]=F\). Here, we use integration by parts with the nilpotency property \eqref{eq:nil} ensuring that the process will always terminate after a finite number of iterations. The solution is plainly
\begin{equation}
G=\frac{\int\! f\,\mathrm{d}q_i}{k!p_i}\frac{\partial^{k+\ell}}{\partial p_i^k\partial q_i^\ell}+\frac{\iint\! f\,\mathrm{d}^2q_i}{(k-1)!p_i^2}\frac{\partial^{k+\ell}}{\partial p_i^{k-1}\partial q_i^{\ell+1}}+\cdots+\frac{\int\cdots\int f\,\mathrm{d}^{k+1}q_i}{p_i^{k+1}}\frac{\partial^{k+\ell}}{\partial q_i^{\ell+1}}\label{eq:mon2}
\end{equation}
\end{proof}
As an application of \eqref{eq:mon2}, we look for operators \(a_j\) such that
\begin{align}
[L,a_j]&=\frac{1}{j!}\left(\frac{\hbar}{\sqrt{-1}}\right)^{j-1}\sum^d_{i=1}V^{(j)}_i\frac{\partial^j}{\partial p_i^j},\qquad j\geq 1.
\end{align}
We integrate to find
\begin{align}
    a_j&\coloneqq\left(\frac{\hbar}{\sqrt{-1}}\right)^{j-1}\sum^d_{i=1}\left(\frac{\int\!V_i}{p_i^{j+1}}\frac{\partial^j}{\partial q_i^j}+\sum^j_{k=1}\frac{1}{k!}\frac{V_i^{(k-1)}}{p_i^{j-k+1}}\frac{\partial^j}{\partial p_i^k\partial q_i^{j-k}}\right),\qquad j\geq 1.\label{eq:aoperator}
\end{align}
Equation \eqref{eq:quantum} may then be consolidated into the algebraic form
\begin{equation}
    LX_k=\sum^{k-1}_{i=0}[L,a_i]X_{k-1-i}.\label{eq:consol}
\end{equation}
using
\[[L,a_0]=-\frac{\hbar}{2\sqrt{-1}}\sum^d_{i=1}\frac{\partial^2}{\partial q_i^2}.\]
where \(a_0\) is given by \eqref{eq:a0def}.
\subsection{The Next-to-Leading Order Term}
With the new notation, equation \eqref{eq:y1} simplifies to the following
\[LY_1=([L,a_1]+[L,a_0]a_0-La_0^2)Z_0.\]
We are now ready to make our next integration
\begin{equation}
Y_1=Z_1+(a_1-\tfrac{1}{2}a_0^2)Z_0\label{eq:y1sol}
\end{equation}
\begin{example}
Let us consider the the Smorodinsky-Winternitz model
\begin{equation}H=\tfrac{1}{2}(p_1^2+p_2^2)+\beta_1(q_1^2+q_2^2)+\frac{\beta_2}{q_1^2}+\frac{\beta_3}{q_2^2}.\label{eq:smod}
\end{equation}
Now that the potential is given, we can calculate \eqref{eq:aoperator}
\begin{align*}
a_1&=\frac{\beta_1 q_1^4-3\beta_2}{3q_1p_1^2}\pd{}{q_1}+\frac{\beta_1q_1^4+\beta_2}{q_1^2p_1}\pd{}{p_1}+\frac{\beta_1q_2^4-3\beta_3}{3q_2p_2^2}\pd{}{q_2}+\frac{\beta_1q_2^4+\beta_3}{q_2^2p_2}\pd{}{p_2}
\end{align*}
The Hamiltonian \eqref{eq:smod} is second-order superintegrable. In order to determine the integral \textit{a priori} we need to find \(Z_0,Z_1\). This is the task of the compatibility equations, which we derive in Section 6. For the present moment, let us set \(Z_0=m_{12}^2\) which coincides with the leading-order term of the integral. Equation \eqref{eq:y1sol} gives us
\begin{align*}
Y_1&=Z_1-\tfrac{1}{2}\hbar^2+\frac{2\beta_1m_{12}^4}{3p_1^2p_2^2}+2(q_1^2p_2^2-q_2^2p_1^2)\left(\frac{\beta_3}{q_2^2p_2^2}-\frac{\beta_2}{q_1^2p_1^2}\right)
\end{align*}
We must choose \(Z_1\) so that the fractional terms in \(Y_1\) are annihilated. This is done if we take
\[Z_1=\frac{2\beta_2p_2^2}{p_1^2}+\frac{2\beta_1p_1^2}{p_2^2}-\frac{2\beta_1m_{12}^4}{3p_1^2p_2^2}.\]
Then our integral is
\[X=q_2^2p_1^2+q_1^2p_2^2-2q_1q_2p_1p_2+\hbar\sqrt{-1}(q_1p_1+q_2p_2)+\tfrac{1}{2}\hbar^2+\frac{2\beta_2q_2^2}{q_1^2}+\frac{2\beta_3q_1^2}{q_2^2}\]
which we calculated by using \(X=\tfrac{1}{2}(Y+Y^*)\) and \eqref{eq:multinom}.
\end{example}
We see then that \eqref{eq:y1sol} splits the polynomial components of the integral into rational parts. The integration constants \(Z_k\) are found by eliminating all the residues. For any other choice of \(Z_0\) in our example, besides \(p_1^2,p_2^2\), we would not have been able to find a polynomial form for \(Y_1\).
\subsection{General Solution as a Recurrence Relation}
To systematically derive each term successively, we consider the following recurrence relation
\begin{equation}
    [L,b_{i,k}]=\sum_{j=0}^ib_{i,k-1}[L,a_{i-j}],\qquad b_{i,0}=\delta_{i0}.\label{eq:bee2}
    \end{equation}
In Section 4, we shall solve \eqref{eq:bee2}. Supposing for now though that \eqref{eq:bee2} has a solution, the convolutive sum
\begin{equation}
    Z_k\coloneqq \sum_{i+j\leq 2k}(-1)^jb_{i,j}X_{2k-i-j}\label{eq:mop}
\end{equation}
satisfies \(LZ_k=0\). We then look for a reecursive inverse. If we substitute \eqref{eq:even} and \eqref{eq:odd}, we obtain
\begin{equation}
    Y_k=\sum^k_{i=0}W_iZ_{k-i}\label{eq:solution}
\end{equation}
where \(W_0=1\) and
\begin{equation}
W_k=\sum^{k-1}_{j=0}\left[\sum^{2k-2j}_{i=1}(-1)^{i+1}b_{2k-i-2j,i}+\sum^{2k-2j-1}_{i=0}\sum^{2k-i}_{\ell=2j+1}\frac{(-1)^{i+1}2^{\ell-2j-1}}{(\ell-2j)!}b_{2k-i-\ell,i}a_0^{\ell-2j}\right]W_j\label{eq:dequ}
\end{equation}
for \(k\geq 1\). It follows by this inversion that \(Z_0,Z_1,Z_2,\ldots\) form a sequence of independent functions of the linear and angular momenta which generate the integral \(X\). As it stands, these functions develop into an infinite sequence, and we must impose extra constraints so that \(Y_0,Y_1,Y_2,\ldots\) truncates.
\section{Determination of the Constituent Operators}
In this section, we present an algorithmic solution to \eqref{eq:bee2}. This is taken up to the point that is necessary to determine \(W_k\) for \(k\leq 5\). This is sufficient to determine a formula for an integral in terms of the potential function of order less than eleven. It also allows us to calculate the first five compatibility equations which define the potentials for integrals up to tenth order.

 Let \(s\) be an indeterminate and write \(A\coloneqq \sum_{i\geq 0} a_is^i\). Since \(A\) is additively separable, we take for granted the existence of a formal differential operator \(A_i\) that satisfies \([L,A_i]=\frac{i}{(i+1)!}(\ad A)^i[L,A]\) with \(A_0=0\). Let \(t\) be another indeterminate and write \(A'\coloneqq \sum_{j\geq 1}A_jt^j\). Repeating this process, we set \(A'_i\) to be a separable operator such that \([L,A'_i]=\frac{i}{(i+1)!}(\ad A')^i[L,A']\) with \(A'_0=0\). The individual coefficients
\begin{align*}
    a_{i,j}&\coloneqq\frac{1}{i!}\frac{\mathrm{d}^iA_j}{\mathrm{d}s^i}\bigg|_{s=0}\\
    a_{i,j,k}&\coloneqq\frac{1}{i!j!}\frac{\partial^{i+j}A'_k}{\partial s^i\partial t^j}\bigg|_{s,t=0}
\end{align*}
may, in principle, be calculated explicitly however a general formula has evaded us.

Let \(B\coloneqq \sum_{i,j\geq 0}b_{i,j}s^it^j\). Then equation \eqref{eq:bee2} becomes
\begin{equation}
[L,B]=tB[L,A],\qquad B(t=0)=1.\label{eq:beq}
\end{equation}
We shall show how \eqref{eq:beq} can be reduced to operators of the form \(a_i,a_{i,j},a_{i,j,k},\ldots\) which can be calculated \textit{ex post facto}.
\begin{lemma}
    For \(k\geq 1\), the following identity holds\label{lem:id}
    \begin{equation}
kA^{k-1}[L,A]=\left[L,A^k\right]+\sum_{i=0}^{k-1}\frac{k!}{(k-1-i)!}[L,A_i]A^{k-1-i}.\label{eq:aye}
\end{equation}
\label{lemma2}
\end{lemma}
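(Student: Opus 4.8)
The plan is to prove the identity by induction on $k$. The base case $k=1$ reads $[L,A]=[L,A]+\frac{1}{0!}[L,A_0]A^0$, which holds because $A_0=0$ by definition. For the inductive step, I would assume \eqref{eq:aye} for some $k\geq 1$ and compute $(k+1)A^k[L,A]$ by writing $A^k = A\cdot A^{k-1}$ and massaging the expression so that the induction hypothesis applies. The natural route is to start from the Leibniz rule for the commutator, $[L,A^{k+1}]=A[L,A^k]+[L,A]A^k$, rearrange to isolate $A[L,A^k]$, and then substitute the hypothesis $[L,A^k] = kA^{k-1}[L,A] - \sum_{i=0}^{k-1}\frac{k!}{(k-1-i)!}[L,A_i]A^{k-1-i}$.

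The key technical ingredient is the defining relation $[L,A_i]=\frac{i}{(i+1)!}(\ad A)^i[L,A]$, equivalently $(i+1)![L,A_i] = i\,(\ad A)^i[L,A]$, together with the commutator expansion $(\ad A)^i[L,A] = \sum_{m=0}^{i}(-1)^m\binom{i}{m}A^{i-m}[L,A]A^m$. I expect that pushing all factors of $A$ to the left (using $[L,A]A^m = A^m[L,A] - (\ad A)$-correction terms, or more cleanly just tracking the $A^{k-1-i}$ tails) will let the cross-terms telescope. Concretely, after substitution one must show that
\begin{equation*}
A\cdot kA^{k-1}[L,A] + [L,A]A^k - A\sum_{i=0}^{k-1}\frac{k!}{(k-1-i)!}[L,A_i]A^{k-1-i}
\end{equation*}
equals $[L,A^{k+1}] + \sum_{i=0}^{k}\frac{(k+1)!}{(k-i)!}[L,A_i]A^{k-i}$. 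The term $[L,A]A^k$ has to be converted into $kA^k[L,A]$ plus a remainder proportional to $[L,A_1]$ (since $(\ad A)[L,A] = -(k+1)!/(k!)\cdot$ nothing directly, but $[L,A_1]=\tfrac12(\ad A)[L,A]$), and this is exactly where the $i=k$ term on the right-hand side, namely $\frac{(k+1)!}{0!}[L,A_k]A^0 = (k+1)![L,A_k]$, gets absorbed, using $(k+1)![L,A_k]=k\,(\ad A)^k[L,A]$.

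The main obstacle will be the bookkeeping of the $(\ad A)^i$ expansions: one must verify that moving $A$ across the commutators $[L,A_i]$ and reindexing the double sum reproduces precisely the coefficients $\frac{(k+1)!}{(k-i)!}$ and no spurious terms survive. I would handle this by working in the associative algebra generated by $L$ and the $A_i$, treating $\ad$ purely formally, and checking the identity degree-by-degree in the formal parameter $s$ (recall $A=\sum a_i s^i$), which reduces each verification to a finite combinatorial identity among binomial coefficients. An alternative, possibly cleaner, approach is to differentiate the generating identity: if one can establish $[L, e^{xA}\text{-type generating function}]$ in closed form, then \eqref{eq:aye} follows by extracting the coefficient of $x^k$; but the induction above is the most direct and I would present that.
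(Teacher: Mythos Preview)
Your induction scheme is a genuinely different route from the paper's and it can be made to work, but your sketch stops short of the one identity that makes the inductive step close. After substituting the hypothesis into $[L,A^{k+1}]=A[L,A^k]+[L,A]A^k$, what remains to prove is
\[
[A^k,[L,A]]+\sum_{i=1}^{k-1}\tfrac{k!}{(k-1-i)!}\,A[L,A_i]A^{k-1-i}=\sum_{i=1}^{k}\tfrac{(k+1)!}{(k-i)!}[L,A_i]A^{k-i}.
\]
Writing $D_i=(\ad A)^i[L,A]$ so that $[L,A_i]=\tfrac{i}{(i+1)!}D_i$, the relation $AD_i=D_iA+D_{i+1}$ handles the sum; the term you leave unresolved is $[A^k,[L,A]]$. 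The expansion you quote, $(\ad A)^i[L,A]=\sum_m(-1)^m\binom{i}{m}A^{i-m}[L,A]A^m$, points the wrong way. What is needed is the dual formula $A^kX=\sum_{m=0}^{k}\binom{k}{m}((\ad A)^mX)A^{k-m}$, which gives $[A^k,[L,A]]=\sum_{m=1}^{k}\binom{k}{m}D_mA^{k-m}$. With this in hand the coefficient of $D_iA^{k-i}$ on the left is $\binom{k}{i}+i\binom{k}{i+1}+(i-1)\binom{k}{i}=i\binom{k+1}{i+1}$ by Pascal's rule, exactly matching the right side. So the induction does close, but only once this expansion is supplied; without it the bookkeeping you allude to does not resolve.

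The paper does not use induction at all. It computes $kA^{k-1}[L,A]-[L,A^k]=\sum_{j}A^j[A^{k-1-j},[L,A]]$ directly and then iteratively expands each inner commutator by Leibniz, tracking the coefficients through an explicit double sequence $c_{j,k,\ell}$ satisfying a one-step recursion; after the dust settles only terms $(\ad A)^i[L,A]\,A^{k-1-i}$ remain with coefficient $c_{-1,k,i+1}=i\binom{k}{i+1}$, which is the claim. Your inductive argument, once completed as above, is shorter and uses only Pascal's identity; the paper's direct computation is more self-contained in that it derives the combinatorics from scratch rather than importing the $A^kX$ expansion.
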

\begin{proof}
For \(j\geq -1\), define
\[c_{j,k,\ell}\coloneqq\sum_{m=0}^{\ell} (-1)^m(\ell-m-1)\begin{pmatrix}k\\\ell-m\end{pmatrix}\begin{pmatrix}j+m\\m\end{pmatrix}\]
This sequence satisfies the recursion
\begin{align*}
    \sum^{k-\ell-1}_{j=i+1}c_{j,k,\ell}&=c_{i,k,\ell+1}
\end{align*}
In particular,
\[c_{-1,k,\ell}=(\ell-1)\begin{pmatrix}k\\\ell\end{pmatrix}.\]
We have:
\begin{align*}
    kA^{k-1}[L,A]&-\sum^{k-1}_{i=0}A^{k-1-i}[L,A]A^i=\sum^{k-1}_{i=0}A^{k-1-i}\left[A^i,[L,A]\right]\\
    &=\sum^{k-1}_{i=0}\sum^{i-1}_{j=0}A^{k-2-j}[A,[L,A]]A^j=\sum^{k-2}_{j=0}(k-1-j)A^{k-2-j}[A,[L,A]]A^j\\
    &=\sum^{k-2}_{j=0}(j+1)A^j[A,[L,A]]A^{k-2-j}=\sum^{k-2}_{j=0}c_{j,k,1}A^{j}[A,[L,A]]A^{k-2-j}\\
    &=\sum^{k-2}_{j=0}c_{j,k,1}[A,[L,A]]A^{k-2}+\sum^{k-2}_{j=1}\sum^{j-1}_{i=0}c_{j,k,1}A^i[A,[A,[L,A]]]A^{k-3-i}\\
    &=c_{-1,k,2}[A,[L,A]]A^{k-2}+\sum^{k-3}_{i=0}c_{i,k,2}A^i[A,[A,[L,A]]]A^{k-3-i}\\
        &=\sum_{i=1}^{k-1}c_{-1,k,i+1}\underbrace{[A,[\cdots[A,}_{i\text{ times}}[L,A]]]]A^{k-1-i}
\end{align*}
Substituting \(c_{-1,k,i}\) yields the lemma.
\end{proof}
For \(C,D,E\) be formal power series such that
\[C=D+s^kt^\ell E.\]
We denote this relation as \(C=D+\mathcal{O}(s^kt^\ell)\).
\begin{theorem}
    Let \(B'_i,i\geq 0\) be a sequence of operators satisfying \([L,B'_i]=B'_{i-1}[L,A']\) with \(B'_0=1\). Then a solution to \eqref{eq:beq} satisfies
    \[B=(B'_0+tB'_1+t^2B'_2+\cdots+t^{k-1}B'_{k-1})\mathrm{e}^{tA}+\mathcal{O}(s^kt^{2k}).\]\label{bth}
\end{theorem}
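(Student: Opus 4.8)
The plan is to exhibit a solution of \eqref{eq:beq} of the stated form by means of the ansatz \(B=P\,\mathrm{e}^{tA}\), where \(P\) solves the \emph{same} equation with \(A\) replaced by \(A'\), and then to truncate \(P\) using a valuation estimate. The one nontrivial ingredient is an exponentiated form of Lemma~\ref{lem:id}. To obtain it I would multiply the identity \eqref{eq:aye} by \(t^k/k!\) and sum over \(k\ge 1\): the left-hand side telescopes to \(t\,\mathrm{e}^{tA}[L,A]\), the first term on the right becomes \([L,\mathrm{e}^{tA}]\), and the double sum on the right, after the substitution \(m=k-1-i\), factors as \(t\bigl(\sum_{i\ge 0}t^i[L,A_i]\bigr)\mathrm{e}^{tA}=t[L,A']\,\mathrm{e}^{tA}\) (using \(A_0=0\)). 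This yields the operator identity
\[[L,\mathrm{e}^{tA}]=t\,\mathrm{e}^{tA}[L,A]-t[L,A']\,\mathrm{e}^{tA},\]
all series being meaningful in the \((s,t)\)-adic sense because each factor of \(tA\) raises the total degree in \(s,t\) by at least one.

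Next I would substitute \(B=P\,\mathrm{e}^{tA}\) into \eqref{eq:beq}. Using the identity above, \([L,B]=[L,P]\mathrm{e}^{tA}+tP\,\mathrm{e}^{tA}[L,A]-tP[L,A']\mathrm{e}^{tA}\); the middle term is exactly \(tB[L,A]\), so \(B\) solves \eqref{eq:beq} precisely when \([L,P]=tP[L,A']\) together with \(P(t=0)=1\). This is \eqref{eq:beq} with \(A\) replaced by \(A'\); inserting the ansatz \(P=\sum_{i\ge 0}t^iB'_i\) and matching the explicit powers of \(t\) reduces it to the recursion \([L,B'_i]=B'_{i-1}[L,A']\), \(B'_0=1\) — the very sequence appearing in the statement — the \(B'_i\) being chosen as the particular solutions furnished by \eqref{eq:mon2}. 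Hence \(B=\bigl(\sum_{i\ge 0}t^iB'_i\bigr)\mathrm{e}^{tA}\) is a solution of \eqref{eq:beq}.

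It then remains to bound the tail \(\sum_{i\ge k}t^iB'_i\). By construction \(A'=\sum_{j\ge 1}A_jt^j\) has \(t\)-valuation at least \(1\); it also has \(s\)-valuation at least \(1\), because the \(s\)-independent part of \((\ad A)^j[L,A]\) is \((\ad a_0)^j[L,a_0]\), which vanishes for \(j\ge 1\) since \(a_0\) and \([L,a_0]=-\tfrac{\hbar}{2\sqrt{-1}}\sum_i\partial^2/\partial q_i^2\) are constant-coefficient operators and hence commute; so each \(A_j\) with \(j\ge 1\), being obtained from \((\ad A)^j[L,A]\) via \eqref{eq:mon2}, is divisible by \(s\). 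Since \eqref{eq:mon2} lowers neither the \(s\)- nor the \(t\)-valuation, an induction on \([L,B'_i]=B'_{i-1}[L,A']\) gives \(B'_i=\mathcal{O}(s^it^i)\), whence \(t^iB'_i=\mathcal{O}(s^it^{2i})\) and \(\sum_{i\ge k}t^iB'_i=\mathcal{O}(s^kt^{2k})\). Multiplying by \(\mathrm{e}^{tA}\), whose constant term is \(1\), preserves this, so \(B=\bigl(\sum_{i=0}^{k-1}t^iB'_i\bigr)\mathrm{e}^{tA}+\mathcal{O}(s^kt^{2k})\), as claimed.

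I expect the genuinely delicate part to be the bidegree bookkeeping in the last paragraph — specifically the observation that \(A'\) is divisible by \(s\), which is what upgrades the obvious \(t^{2k}\)-smallness of the tail to the full \(s^kt^{2k}\)-smallness, together with the verification that the canonical inverse \eqref{eq:mon2} never decreases a valuation. By contrast, the passage from Lemma~\ref{lem:id} to the exponential identity and the checking of the ansatz are routine formal manipulations once the \((s,t)\)-adic convergence of the series involved is noted.
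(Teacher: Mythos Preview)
Your proof is correct and follows essentially the same route as the paper: derive the exponential form \([L,\mathrm{e}^{tA}]=t\,\mathrm{e}^{tA}[L,A]-t[L,A']\mathrm{e}^{tA}\) from Lemma~\ref{lem:id}, use it to reduce \eqref{eq:beq} for \(B=P\,\mathrm{e}^{tA}\) to the same equation with \(A'\) in place of \(A\), and then invoke \(A'=\mathcal{O}(st)\) (via \([a_0,[L,a_0]]=0\)) to get \(B'_i=\mathcal{O}(s^it^i)\). The only organizational difference is that you build the full exact solution \(P=\sum_{i\ge 0}t^iB'_i\) and bound its tail, whereas the paper works directly with the finite truncation and identifies the residual term \(-t^kB'_{k-1}[L,A']\mathrm{e}^{tA}\); your version is in fact slightly cleaner, since it produces an exact solution of \eqref{eq:beq} that visibly agrees with the truncation modulo \(\mathcal{O}(s^kt^{2k})\).
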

\begin{proof}
By definition,
\[\left[L,\sum^{k-1}_{i=0}t^iB'_i\right]=t\sum^{k-2}_{i=0}t^iB'_i[L,A']\]
From the lemma, we have
\[[L,\mrm{e}^{tA}]=t\mrm{e}^{tA}[L,A]-t[L,A']\mrm{e}^{tA}\]
Combining these two identities, we get
    \begin{align*}
        \left[L,\sum^{k-1}_{i=0}t^iB'_i\mrm{e}^{tA}\right]&=t\sum^{k-1}_{i=0}t^iB'_i\mrm{e}^{tA}[L,A]-t^kB'_{k-1}[L,A']\mrm{e}^{tA}.
    \end{align*}
As \([a_0,[L,a_0]]=0\), we have \(A_i|_{s=0}=0\). So \(A'=\mathcal{O}(st)\), and it follows \(B'_k=\mathcal{O}(s^kt^k)\). Then the last term is \(\mathcal{O}(s^kt^{2k})\) so this expansion is a valid solution of \eqref{eq:beq} up to this degree.
\end{proof}
Theorem \ref{bth} gives us a way to compute \(b_{i,j}\) rapidly. If we take
\[B_i=\frac{1}{i!}\frac{\partial^iB}{\partial t^i}\bigg|_{t=0}\]
then \([L,B_i]=B_{i-1}[L,A]\) with \(B_0=1\). This is the same recurrence relation satisfied by \(B'_i\), with \(A'\) swapped for \(A\). Using \(B_0'=1\), we have
\begin{equation}
B=\mathrm{e}^{tA}+\mathcal{O}(st^2)
\end{equation}
which means that
\begin{align*}
b_{0,0}=1, \quad b_{0,1}=a_0, \quad b_{0,2}=\tfrac{1}{2}a_0^2,\quad b_{1,0}=0,\quad b_{1,1}=a_1.
\end{align*}
Then \(W_1\) is readily found by \eqref{eq:dequ} to be
\[W_1=a_1-\tfrac{1}{2}a^2_0\]
which matches \eqref{eq:y1sol}. To go further, we note that \(B_1=A\) so \(B_1'=A'\) by analogy. Using Theorem \ref{bth} again,
\begin{equation}
B=(1+tA')\mrm{e}^{tA}+\mathcal{O}(s^2t^4)
\end{equation}
from which we calculate
\begin{align}
W_2&=\tfrac{1}{2}a_1^2+\tfrac{1}{2}[a_2,a_0]-\tfrac{1}{3}a_0^2a_1-\tfrac{1}{3}a_0a_1a_0+\tfrac{1}{6}a_1a_0^2+\tfrac{5}{24}a_0^4+a_3-a_{2,1}+a_{1,2}\label{eq:w2}
\end{align}
The last two operators are found to be
\begin{align*}
    a_{2,1}&=\sum_{i=1}^d\left[\frac{3V_i\int\! V_i-3\int\! V_i^2}{2p_i^4}\pd{}{q_i}+\frac{V_i'\int\! V_i}{2p_i^3}\pd{}{p_i}-\hbar^2\left(\frac{9\int\! V_i}{4p_i^4}\frac{\partial^3}{\partial q_i^3}+\frac{3V_i}{4p_i^4}\frac{\partial^2}{\partial q_i^2}\right.\right.\\
    &\qquad\left.\left.+\frac{7V_i}{4p_i^3}\frac{\partial^3}{\partial p_i\partial q_i^2}+\frac{V_i'}{2p_i^3}\frac{\partial^2}{\partial p_i\partial q_i}+\frac{5V_i'}{8p_i^2}\frac{\partial^3}{\partial p_i^2\partial q_i}+\frac{V_i''}{8p_i^2}\frac{\partial^2}{\partial p_i^2}+\frac{V_i''}{8p_i}\frac{\partial^3}{\partial p_i^3}\right)\right]\\
    a_{1,2}&=\hbar^2\sum_{i=1}^d\left(-\frac{\int\!V_i}{2p_i^4}\frac{\partial^3}{\partial q_i^3}+\frac{V_i}{2p_i^4}\frac{\partial^2}{\partial q_i^2}-\frac{5V_i}{6p_i^3}\frac{\partial^3}{\partial p_i\partial q_i^2}+\frac{V_i'}{4p_i^4}\pd{}{q_i}-\frac{5V_i'}{12p_i^2}\frac{\partial^3}{\partial q_i\partial p_i^2}\right.\\
    &\qquad\left.+\frac{V_i''}{12p_i^3}\pd{}{p_i}-\frac{V_i''}{12p_i^2}\frac{\partial^2}{\partial p_i^2}-\frac{V_i''}{12p_i}\frac{\partial^3}{\partial p_i^3}\right)
\end{align*}
Then \(Y_2\) is determined from \eqref{eq:solution}. Thus far we have determined
\[B=1+tA+t^2(\tfrac{1}{2}A^2+A_1)+t^3(\tfrac{1}{6}A^3+A_1A+A_2)+\mcl{O}(t^4)\]
up to third order in \(t\). Reading off the quadratic and cubic terms, and using the analogy,
\[B'_2=\tfrac{1}{2}(A')^2+A'_1,\qquad B'_3=\tfrac{1}{6}(A')^3+A'_1A'+A'_2\]
we then generate more terms in the expansion
\[B=\left\{1+tA'+t^2\left[\tfrac{1}{2}(A')^2+A'_1\right]+t^3\left[\tfrac{1}{6}(A')^3+A'_1A'+A'_2\right]\right\}\mrm{e}^{tA}+\mathcal{O}(s^4t^8)\]
This is sufficient to determine \(W_3,W_4,W_5\) in terms of \(a\)-operators. They require the explicit calculation of \(a_{i,j},a_{i,j,k}\) for \(i+j+k<10\), \(i+j+k\neq 8\). It is no challenge in this algebraic exercise to continue working out the components of \(B\) to an arbitrary degree, besides the limitations set upon us by our computers.
\section{Assignment of the Correction Functions}
We have shown how to realise \(W_0,W_1,W_2,\ldots\) as differential operators whose only momenta dependence in the coefficients is by reciprocal powers. Since \(Y_k\) is a homogeneous polynomial of order \(n-2k\), \(W_kZ_0\) must be homogeneous to the same degree, and in particular
\[\left(\prod^d_{i=1}p_i^{2k}\right)W_k\]
has polynomial coefficients in the momenta. It is sufficient to take
\begin{equation}
Z_k=\sum_{i_1+i_2+\cdots+i_d+j_{12}+j_{13}+\cdots+j_{d-1,d}=n+2(d-1)k}\alpha_{\mbf{i},\mbf{j}}\prod^d_{a=1}p_a^{i_a-2k}\prod_{b< c}m_{bc}^{j_{bc}}.\label{eq:mpol}
\end{equation}
where \(\alpha_{\mbf{i},\mbf{j}}\) are (not necessarily unique) constants and \(\mathbf{j}=(j_{k\ell})_{1\leq k<\ell\leq d}\). We set 
\begin{equation}
  R_{kij}\coloneqq \Res_{p_j=0}\frac{Y_k}{p_j^{i+1}}\prod_{\ell=1}^dp_\ell^{2k}.\label{eq:comp}
\end{equation}
It is necessary and sufficient for \(X\) to be a polynomial integral that
\begin{equation}
R_{kij}= 0,\qquad 0\leq i\leq 2k-1,1\leq j\leq d, 1\leq k\leq \lceil\tfrac{1}{2}n\rceil \label{eq:res}
\end{equation}
These are the compatibility equations. They govern the assignment of the parameters \(\alpha_{\mbf{i},\mbf{j}}\) and the potential function \(V\). There will be a total of
\[N(n,d)\coloneqq d\sum^{\lceil\frac{1}{2}n\rceil}_{k=1}\sum^{2k}_{i=1}\begin{pmatrix}n+2(d-1)k-i+d-1\\d-2\end{pmatrix}\]
integro-differential equations involving the \(d\) summands of the potential. In particular,
\begin{align*}
    N(n,2)&=\begin{cases}\frac{1}{2}n(n+2),&\quad n\text{ even},\\\frac{1}{2}(n+1)(n+3),&\quad n\text{ odd};\end{cases}\\
    N(n,3)&=\begin{cases}
        \frac{3}{8}n(n+2)(4n+5),&\quad n\text{ even},\\\frac{3}{8}(n+1)(n+3)(4n+7),&\quad n\text{ odd};
    \end{cases}\\
   N(n,4)&=\begin{cases}
       \frac{1}{12}n(n+2)(45n^2+122n+62),&\quad n\text{ even},\\
        \frac{1}{4}(n+1)(n+3)(15n^2+60n+53),&\quad n\text{ odd}.
    \end{cases}
\end{align*}
Let \(\Pi(n,d)\) be the number of \(\alpha_{\mbf{i},\mbf{j}}\) to determine. Eliminating redundant parameters via \eqref{eq:mdep}, we calculate
\begin{align*}
    \Pi(n,2)&=\sum^{\lceil\frac{1}{2}n\rceil}_{k=0}\begin{pmatrix}n+2k+2\\2\end{pmatrix}\\
    \Pi(n,3)&=\sum_{k=0}^{\lceil\frac{1}{2}n\rceil}\left[2\begin{pmatrix}n+4k+4\\4\end{pmatrix}-\begin{pmatrix}n+4k+3\\3\end{pmatrix}\right]\\
    \Pi(n,4)&=\sum^{\lceil\frac{1}{2}n\rceil}_{k=0}\left[2\begin{pmatrix}n+6k+7\\7\end{pmatrix}-\begin{pmatrix}n+6k+5\\5\end{pmatrix}\right]
\end{align*}
\section{Calculation of the Potential and the Integral}
We give here an outline about how to proceed in finding the potential and the integral. First it is necessary to find the appropriate algebraic expressions for \(W_1,W_2,\ldots\) in terms of \(a_i,a_{i,j},a_{i,j,k},\ldots\) as given in Section 4. Once these are found, the particular \(a\)-operators which appear in these expressions should then be computed. The inversion formula \eqref{eq:mon2} is sufficient. No variables are mixed, so this amounts to computing expressions of the form
\[\iint\cdots\int V_i^{(k)}V_i^{(\ell)}\cdots V_i^{(m)} \]
and such like. Using integration by parts, we can reduce to a few canonical forms which in the following sections are denoted \(Q_{i,j}\). Once this is achieved, we compute the compatibility equations \eqref{eq:comp}. Then we solve for \(V\) for all the possible assignments of the \(\alpha\)-parameters such that \(Z_0\) is not identically zero. Once \(V\) and \(\alpha_{\mathbf{i},\mathbf{j}}\) are determined, we can compute \(Y\) using equation (28) and take the self-adjoint part to find \(X\). Our calculations were performed using Mathematica.
\section{Equations for the Potential in Two Dimensions}
For two dimensions, it is possible for us to express each of the compatibility equations as the integral develops in a concise form that is valid for arbitrary order. Indeed, let us take
\begin{subequations}
\begin{align}
\xi_{ki1}(q_1)&\coloneqq\sum_{j=0}^{n+2k-i}\alpha_{i,j,n+2k-i-j}q_1^{n+2k-i-j}\\
\xi_{kj2}(q_2)&\coloneqq\sum_{i=0}^{n+2k-j}\alpha_{i,j,n+2k-i-j}(-1)^iq_2^{n+2k-i-j}
\end{align}
\end{subequations}
Then \eqref{eq:mpol} can be alternately written as
\begin{subequations}
\begin{align}
Z_k&=\sum^{n+2k}_{i=0}\sum^{n+2k-i}_{j=0}\frac{(-q_2)^j}{j!}\xi^{(j)}_{ki1}(q_1)p_1^{i+j-2k}p_2^{n-i-j}\label{eq:zk}\\
&=(-1)^n\sum^{n+2k}_{j=0}(-1)^j\sum^{n+2k-j}_{i=0}\frac{(-q_1)^i}{i!}\xi^{(i)}_{kj2}(q_2)p_1^{n-i-j}p_2^{i+j-2k}
\end{align}
\end{subequations}
The component in \(R_{ki1}\) which comes from \(Z_k\) is
\[\xi_{ki1}(q_1)+\sum^i_{\ell=1}\frac{(-q_2)^\ell}{\ell!}\xi_{k,i-\ell,1}^{(\ell)}(q_1).\]
A similar expression can be obtained for \(R_{ki2}\). Each equation \(R_{kij}=0\) can then be repurposed as a definition for \(\xi_{kij}(q_j)\) in order to eliminate this term from subsequent equations. We designate \(S_{kij}\) as the equation \(R_{kij}=0\) once this elimination has been performed.
\subsection{Linear Equations}
We calculate
\begin{equation}
S_{10i}:\xi_{00i}'(q_i)Q_{i,1}+\xi_{10i}(q_i)= 0\label{eq:standard1}
\end{equation}
\begin{equation}
S_{11i}:\left[\xi_{01i}(q_i)Q_{i,1}\right]'+\xi_{11i}(q_i)= 0\label{eq:standard2}
\end{equation}
where \(Q_{i,1}\coloneqq\int V_i\,\mrm{d}q_i\). These are required to hold for all integrals regardless of order. When the coefficients are not all equal to zero, the solution is a rational function with no simple poles.
\begin{theorem}
    The potential summand \(V_i\) is a solution to the linear compatibility equations for non-trivial values of the parameters if \(\dfrac{\partial Z_0}{\partial q_i}\bigg|_{p_i=0}\neq 0\) or \(\dfrac{\partial Z_0}{\partial p_i}\bigg|_{p_i=0}\neq 0\).
\end{theorem}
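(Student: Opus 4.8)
The plan is to recast the hypothesis, which concerns the leading Killing polynomial $Z_0$, as a condition on the two coefficient polynomials occurring in the linear equations \eqref{eq:standard1}--\eqref{eq:standard2}, and then to read off from those equations that $Q_{i,1}=\int V_i\,\mathrm{d}q_i$ is forced to be a rational function; we are in two dimensions and, by the symmetry between the two coordinates, I take $i=1$ with $q_2$ the other coordinate. Writing $Z_0=\sum_{i_1+i_2+j=n}\alpha_{i_1,i_2,j}\,p_1^{i_1}p_2^{i_2}m_{12}^{j}$ as in \eqref{eq:mpol} with $m_{12}=q_1p_2-q_2p_1$, putting $p_1=0$ sends $m_{12}\mapsto q_1p_2$ and kills every term with $i_1\ge 1$, so $Z_0|_{p_1=0}=p_2^{n}\xi_{001}(q_1)$ and hence $\partial_{q_1}Z_0|_{p_1=0}=p_2^{n}\xi'_{001}(q_1)$. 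Differentiating in $p_1$ before restricting retains the $i_1=1$ terms together with the $\partial_{p_1}m_{12}=-q_2$ contribution of the $i_1=0$ terms, and a short computation gives $\partial_{p_1}Z_0|_{p_1=0}=p_2^{n-1}\bigl(\xi_{011}(q_1)-q_2\,\xi'_{001}(q_1)\bigr)$. As these right-hand sides are polynomials in $q_1,q_2,p_2$, the hypothesis is equivalent to $\xi'_{001}\not\equiv 0$ or $\xi_{011}\not\equiv 0$, and in either case some $\alpha_{i_1,i_2,j}$ is nonzero, so the parameters in play are non-trivial.

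Next I would exploit \eqref{eq:standard1}--\eqref{eq:standard2} to constrain $Q_{1,1}$. If $\xi'_{001}\not\equiv 0$, then \eqref{eq:standard1} reads $\xi'_{001}Q_{1,1}=-\xi_{101}$; since $\xi_{101}$ is a polynomial and $\xi'_{001}$ a nonzero polynomial, $Q_{1,1}=-\xi_{101}/\xi'_{001}$ is rational, and choosing $\xi_{101}$ to be a polynomial multiple of $\xi'_{001}/\gcd(\xi'_{001},\xi_{011})$ makes $\xi_{011}Q_{1,1}$ a polynomial, so that $\xi_{111}\coloneqq-(\xi_{011}Q_{1,1})'$ is a polynomial and \eqref{eq:standard2} holds too. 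If instead $\xi'_{001}\equiv 0$ but $\xi_{011}\not\equiv 0$, then \eqref{eq:standard1} degenerates to $\xi_{101}=0$ and places no condition on $V_1$, while \eqref{eq:standard2} forces $\xi_{011}Q_{1,1}$ to be an antiderivative of the polynomial $-\xi_{111}$, hence itself a polynomial, so $Q_{1,1}$ is again rational. In both cases $V_1=Q_{1,1}'$ is a rational function, and we have exhibited an actual solution of the linear compatibility equations in which the parameters are non-trivial; conversely, the very relations just written show that with these parameters non-trivial $V_1$ has no choice but to be rational.

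It remains to confirm the ``no simple poles'' statement quoted before the theorem. Writing $Q_{1,1}=N/D$ in lowest terms, at a zero $a$ of $D$ of multiplicity $m\ge 1$ a direct expansion shows $N'D-ND'$ vanishes to order exactly $m-1$ (the coefficient of $(q_1-a)^{m-1}$ there being a nonzero multiple of $N(a)$) while $D^2$ vanishes to order $2m$, so $V_1=(N'D-ND')/D^2$ has a pole of order $m+1\ge 2$ at $a$; and if $Q_{1,1}$ is a polynomial then so is $V_1$. The step I expect to be the main obstacle is the first reduction above: the formula for $\partial_{p_1}Z_0|_{p_1=0}$ mixes the $p_1$-linear coefficients of $Z_0$ with the $-q_2$ term coming from $\partial_{p_1}m_{12}$, so one must keep the index conventions of \eqref{eq:mpol} and of the reparametrisation \eqref{eq:zk} in exact correspondence; everything after that is routine manipulation of rational functions.
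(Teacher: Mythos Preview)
Your first paragraph is exactly the paper's proof: compute $\partial_{q_1}Z_0|_{p_1=0}=p_2^{n}\xi'_{001}$ and $\partial_{p_1}Z_0|_{p_1=0}=p_2^{n-1}(\xi_{011}-q_2\,\xi'_{001})$, and conclude that the hypothesis is equivalent to $\xi'_{001}\not\equiv 0$ or $\xi_{011}\not\equiv 0$, i.e.\ to non-triviality of the coefficients in \eqref{eq:standard1}--\eqref{eq:standard2}. Since the compatibility equations already hold (we are assuming $X$ is an integral), that is the whole theorem, and it is all the paper does.

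Your second and third paragraphs go beyond the theorem and instead treat the sentence immediately preceding it (that the solution is rational with no simple poles). The conclusions are right, but the phrase ``choosing $\xi_{101}$ to be\ldots'' is misframed: $\xi_{101}$ and $\xi_{111}$ are not free parameters you set---they are the coefficient polynomials of $Z_1$, fixed once the integral $X$ is fixed. The correct logic is the one you give in the first half-sentence: \emph{given} that \eqref{eq:standard1} holds with polynomial $\xi_{101}$ and nonzero polynomial $\xi'_{001}$, one solves $Q_{1,1}=-\xi_{101}/\xi'_{001}$, hence rational; similarly from \eqref{eq:standard2} when $\xi'_{001}\equiv 0$. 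You are deducing the form of any solution, not constructing one, so the ``choosing'' step (and the compatibility check it tries to perform) is unnecessary.
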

\begin{proof}
    We require that \(\xi_{00i}'(q_i)\neq 0\) or \(\xi_{01i}(q_i)\neq 0\). From \eqref{eq:zk}, we have
    \begin{align*}
        \pd{Z_0}{q_1}\bigg|_{p_1=0}&=\xi'_{001}(q_1)p_2^n&\pd{Z_0}{q_2}&=(-1)^n\xi'_{002}(q_2)p_1^n\\
        \pd{Z_0}{p_1}\bigg|_{p_1=0}&=[\xi_{011}(q_1)-q_2\xi'_{001}(q_1)]p_2^{n-1}&\pd{Z_0}{p_2}&=(-1)^{n+1}[\xi_{012}(q_2)+q_1\xi'_{002}(q_1)]p_1^{n-1}
    \end{align*}
The theorem follows naturally.
\end{proof}
The above result allows us to identify whether the linear compatibility equations hold just from inspection of the leading-order term.

Working out the compatibility equation from \eqref{eq:compatibility} for \(k=2\), after some simplification, gives
\begin{equation}\begin{aligned}0&=\frac{\mathrm{d}^{n+1}}{\mathrm{d}q_1^{n+1}}\left[\xi_{011}(q_1)Q_{1,1}\right]-q_2\frac{\mathrm{d}^{n+1}}{\mathrm{d}q_1^{n+1}}\left[\xi'_{001}(q_1)Q_{1,1}\right]\\
    &\qquad+\frac{\mathrm{d}^{n+1}}{\mathrm{d}q_2^{n+1}}\left[\xi_{012}(q_2)Q_{2,1}\right]+q_1\frac{\mathrm{d}^{n+1}}{\mathrm{d}q_2^{n+1}}\left[\xi'_{002}(q_2)Q_{2,1}\right].
    \end{aligned}\label{eq:compat2}
    \end{equation}
Equation \eqref{eq:compat2} is manifestly the same as the four simultaneous equations \(S_{101},S_{111},S_{102},S_{112}=0\). 
\subsection{Quadratic Equations}
We compute
\begin{equation}    S_{20i}:\xi_{00i}'(q_i)\left(\tfrac{3}{2}Q_{i,2}-V_iQ_{i,1}\right)-\tfrac{1}{2}\xi_{00i}''(q_i)Q_{i,1}^2-\tfrac{1}{4}\hbar^2\xi_{00i}'''(q_i)Q_{i,1}+\xi_{20i}(q_i)=0\label{eq:s20}\end{equation}
\begin{equation}    S_{21i}:\left[\xi_{01i}(q_i)\left(\tfrac{3}{2}Q_{i,2}-V_iQ_{i,1}\right)-\tfrac{1}{2}\xi_{01i}'(q_i)Q_{i,1}^2-\tfrac{1}{4}\hbar^2\xi_{01i}''(q_i)Q_{i,1}\right]'+\xi_{21i}(q_i)= 0\label{eq:s21}
\end{equation}
\begin{equation}\begin{aligned}S_{22i}&:\xi_{02i}'(q_i)\left(\tfrac{3}{2}Q_{i,2}-\tfrac{1}{4}\hbar^2V_i'\right)+\tfrac{1}{2}\xi_{02i}''(q_i)\left(Q_{i,1}^2-\hbar^2V_i\right)\\
    &\qquad+[\xi_{12i}'(q_i)-\tfrac{1}{4}\hbar^2\xi_{02i}'''(q_i)]Q_{i,1}+\xi_{22i}(q_i)= 0\end{aligned}\label{eq:s22}\end{equation}
\begin{equation}\begin{aligned}S_{23i}&:\left\{\xi_{03i}(q_i)\left(\tfrac{3}{2}Q_{i,2}-\tfrac{1}{4}\hbar^2V_i'\right)+\tfrac{1}{2}\xi_{03i}'(q_i)\left(Q_{i,1}^2-\hbar^2V_i\right)\right.\\
    &\qquad\left.+[\xi_{13i}(q_i)-\tfrac{1}{4}\hbar^2\xi_{03i}''(q_i)]Q_{i,1}\right\}'+\xi_{23i}(q_i)= 0\end{aligned}
    \end{equation}
where \(Q_{i,2}\coloneqq\int V_i^2\,\mrm{d}q_i\). These equations hold for all \(n\) but are superfluous for \(n<3\).

As a special case, let us take \(Z_0=\frac{1}{n-4}p_1^2p_2^2m_{12}^{n-4}\). Then \(\xi_{02i}(q_i)=\frac{1}{n-4}q_i^{n-4}\) and differentiating \eqref{eq:s22}, we obtain
\begin{equation}
\begin{aligned}
0&=\hbar^2q_i^{n-4}Q_{i,1}'''+2\hbar^2(n-5)q_i^{n-5}Q_{i,1}''-6q^{n-4}_i\left(Q'_{i,1}\right)^2-4(n-5)q_i^{n-5}Q_{i,1}Q_{i,1}'\\
&\qquad+2(n-5)q_i^{n-6}Q_{i,1}^2+Q_{i,1}'[(n-5)(n-8)\hbar^2q_i^{n-6}-4q_i\xi'_{12i}(q_i)]\\
&\qquad-2Q_{i,1}[(n-5)(n-6)\hbar^2q_i^{n-7}-2(n-5)\xi'_{12i}(q_i)+2q_i\xi''_{12i}(q_i)]\\
&\qquad+4(n-5)\xi_{22i}(q_i)-4q_i\xi_{22i}'(q_i).
\end{aligned}
\end{equation}
Since \(\xi_{00i}=\xi_{01i}=0\), \eqref{eq:s20} and \eqref{eq:s21} give \(\xi_{20i}=\xi_{21i}=0\) and in particular \(\alpha_{02,n+2}=\alpha_{12,n+1}=\alpha_{20,n+2}=\alpha_{21,n+1}=0\). So \(\xi_{22i}\) is of degree \(n\) or less. This proves Conjecture 2 of Escobar-Ruiz, Linares and Winternitz \cite{esc20}. For \(n=3,4\), the solutions give the exotic potentials of Gravel \cite{gra04} and Marquette, Sajedi and Winternitz \cite{mar17}.
\subsection{Cubic Equations}
The next set of equations is computed likewise. Taking the auxiliary function
\[Q_{i,3}=\int\left[V_i^3+\tfrac{1}{4}\hbar^2(V_i')^2\right]\,\mathrm{d}q_i\]
we obtain
\[\begin{aligned}S_{30i}&: \xi'_{00i}(q_i)\left[\tfrac{5}{2}Q_{i,3}-\tfrac{3}{2}V_iQ_{i,2}-\tfrac{1}{2}V_i^2Q_{i,1}+\tfrac{1}{2}V_i'Q_{i,1}^2+\hbar^2\left(\tfrac{1}{4}V_i''Q_{i,1}-\tfrac{1}{2}V_iV_i'\right)\right]\nonumber\\
  &\qquad+\xi''_{00i}(q_i)\left[V_iQ_{i,1}^2-\tfrac{3}{2}Q_{i,1}Q_{i,2}+\hbar^2\left(\tfrac{3}{4}V_i'Q_{i,1}-\tfrac{1}{2}V_i^2\right)\right]\nonumber\\
  &\qquad+\xi'''_{00i}(q_i)\left[\tfrac{1}{6}Q_{i,1}^3+\hbar^2\left(V_iQ_{i,1}-\tfrac{5}{4}Q_{i,2}\right)\right]+\tfrac{1}{4}\hbar^2\xi''''_{00i}(q_i)Q_{i,1}^2\nonumber\\
  &\qquad+\tfrac{1}{16}\hbar^4\xi'''''_{00i}(q_i)Q_{i,1}+\xi_{30i}(q_i)= 0\end{aligned}\]
\[\begin{aligned}S_{31i}&: \left\{\xi_{01i}(q_i)\left[\tfrac{5}{2}Q_{i,3}-\tfrac{3}{2}V_iQ_{i,2}-\tfrac{1}{2}V_i^2Q_{i,1}+\tfrac{1}{2}V_i'Q_{i,1}^2+\hbar^2\left(\tfrac{1}{4}V_i''Q_{i,1}-\tfrac{1}{2}V_iV_i'\right)\right]\right.\nonumber\\
  &\qquad+\xi'_{01i}(q_i)\left[V_iQ_{i,1}^2-\tfrac{3}{2}Q_{i,1}Q_{i,2}+\hbar^2\left(\tfrac{3}{4}V_i'Q_{i,1}-\tfrac{1}{2}V_i^2\right)\right]\nonumber\\
  &\qquad+\xi''_{01i}(q_i)\left[\tfrac{1}{6}Q_{i,1}^3+\hbar^2\left(V_iQ_{i,1}-\tfrac{5}{4}Q_{i,2}\right)\right]+\tfrac{1}{4}\hbar^2\xi'''_{01i}(q_i)Q_{i,1}^2\nonumber\\
  &\qquad\left.+\tfrac{1}{16}\hbar^4\xi''''_{01i}(q_i)Q_{i,1}\right\}'+\xi_{31i}(q_i)= 0\end{aligned}\]
  \[\begin{aligned}S_{32i}&:\xi'_{02i}(q_i)\left[\tfrac{5}{2}Q_{i,3}-\tfrac{3}{2}V_i^2Q_{i,1}+\hbar^2\left(\tfrac{1}{4}V_i''Q_{i,1}-\tfrac{1}{2}V_iV_i'\right)\right]-\xi''_{02i}(q_i)\left[V_iQ_{i,1}^2\right.\nonumber\\
  &\qquad\left.+\hbar^2\left(\tfrac{1}{2}V_i^2-\tfrac{3}{4}V_i'Q_{i,1}\right)\right]-\xi'''_{02i}(q_i)\left[\tfrac{1}{3}Q_{i,1}^3+\hbar^2\left(\tfrac{5}{4}Q_{i,2}-\tfrac{3}{4}V_iQ_{i,1}\right)-\tfrac{1}{16}\hbar^4V_i'\right]\nonumber\\
  &\qquad+\tfrac{1}{8}\hbar^4\xi''''_{02i}(q_i)V_i+\tfrac{1}{16}\hbar^4\xi'''''_{02i}(q_i)Q_{i,1}+\xi_{12i}'(q_i)\left(\tfrac{3}{2}Q_{i,2}-V_iQ_{i,1}\right)\nonumber\\
  &\qquad-\tfrac{1}{2}\xi_{12i}''(q_i)Q_{i,1}^2-\tfrac{1}{4}\hbar^2\xi_{12i}'''(q_i)Q_{i,1}+\xi_{32i}(q_i)= 0\end{aligned}\]
  \[\begin{aligned}S_{33i}&:\left\{\xi_{03i}(q_i)\left[\tfrac{5}{2}Q_{i,3}-\tfrac{3}{2}V_i^2Q_{i,1}+\hbar^2\left(\tfrac{1}{4}V_i''Q_{i,1}-\tfrac{1}{2}V_iV_i'\right)\right]-\xi'_{03i}(q_i)\left[V_iQ_{i,1}^2\right.\right.\nonumber\\
  &\qquad\left.+\hbar^2\left(\tfrac{1}{2}V_i^2-\tfrac{3}{4}V_i'Q_{i,1}\right)\right]-\xi''_{03i}(q_i)\left[\tfrac{1}{3}Q_{i,1}^3+\hbar^2\left(\tfrac{5}{4}Q_{i,2}-\tfrac{3}{4}V_iQ_{i,1}\right)-\tfrac{1}{16}\hbar^4V_i'\right]\nonumber\\
  &\qquad+\tfrac{1}{8}\hbar^4\xi'''_{03i}(q_i)V_i+\tfrac{1}{16}\hbar^4\xi''''_{03i}(q_i)Q_{i,1}+\xi_{13i}(q_i)\left(\tfrac{3}{2}Q_{i,2}-V_iQ_{i,1}\right)\nonumber\\
  &\qquad\left.-\tfrac{1}{2}\xi_{13i}'(q_i)Q_{i,1}^2-\tfrac{1}{4}\hbar^2\xi_{13i}''(q_i)Q_{i,1}\right\}'+\xi_{33i}(q_i)= 0\end{aligned}\]
  \[\begin{aligned}
      S_{34i}&:\xi_{04i}'(q_i)\left(\tfrac{5}{2}Q_{i,3}-\tfrac{5}{4}\hbar^2V_iV_i'+\tfrac{1}{16}\hbar^4V_i'''\right)+\xi_{04i}''(q_i)\left[\tfrac{3}{2}Q_{i,1}Q_{i,2}\right.\\
      &\qquad\left.-\hbar^2\left(\tfrac{3}{2}V_i^2+\tfrac{1}{4}V_i'Q_{i,1}\right)+\tfrac{1}{4}\hbar^4V_i''\right]+\xi_{04i}'''(q_i)\left[\tfrac{1}{6}Q_{i,1}^3-\hbar^2\left(\tfrac{5}{4}Q_{i,2}+\tfrac{1}{2}V_iQ_{i,1}\right)\right.\\
      &\qquad\left.+\tfrac{3}{8}\hbar^4V_i'\right]-\tfrac{1}{4}\hbar^2\xi_{04i}''''(q_i)\left(Q_{i,1}^2-\hbar^2V_i\right)+\xi_{14i}'(q_i)\left(\tfrac{3}{2}Q_{i,2}-\tfrac{1}{4}\hbar^2V_i'\right)\\
      &\qquad+\xi_{14i}''(q_i)\left(\tfrac{1}{2}Q_{i,1}^2-\tfrac{1}{2}\hbar^2V_i\right)+\left[\tfrac{1}{16}\hbar^4\xi_{04i}'''''(q_i)-\tfrac{1}{4}\hbar^2\xi_{14i}'''(q_i)+\xi_{24i}'(q_i)\right]Q_{i,1}\\
      &\qquad+\xi_{34i}(q_i)= 0
  \end{aligned}\]
  \[\begin{aligned}
      S_{35i}&:\left\{\xi_{05i}(q_i)\left(\tfrac{5}{2}Q_{i,3}-\tfrac{5}{4}\hbar^2V_iV_i'+\tfrac{1}{16}\hbar^4V_i'''\right)+\xi_{05i}'(q_i)\left[\tfrac{3}{2}Q_{i,1}Q_{i,2}\right.\right.\\
      &\qquad\left.-\hbar^2\left(\tfrac{3}{2}V_i^2+\tfrac{1}{4}V_i'Q_{i,1}\right)+\tfrac{1}{4}\hbar^4V_i''\right]+\xi_{05i}''(q_i)\left[\tfrac{1}{6}Q_{i,1}^3-\hbar^2\left(\tfrac{5}{4}Q_{i,2}+\tfrac{1}{2}V_iQ_{i,1}\right)\right.\\
      &\qquad\left.+\tfrac{3}{8}\hbar^4V_i'\right]-\tfrac{1}{4}\hbar^2\xi_{05i}'''(q_i)\left(Q_{i,1}^2-\hbar^2V_i\right)+\xi_{15i}(q_i)\left(\tfrac{3}{2}Q_{i,2}-\tfrac{1}{4}\hbar^2V_i'\right)\\
      &\qquad\left.+\xi_{15i}'(q_i)\left(\tfrac{1}{2}Q_{i,1}^2-\tfrac{1}{2}\hbar^2V_i\right)+\left[\tfrac{1}{16}\hbar^4\xi_{05i}''''(q_i)-\tfrac{1}{4}\hbar^2\xi_{15i}''(q_i)+\xi_{25i}(q_i)\right]Q_{i,1}\right\}'\\
      &\qquad+\xi_{35i}(q_i)= 0
  \end{aligned}\]
 The higher order equations are of a similar form but are prohibitively long. We give here their auxiliary functions 
\begin{align*}
  Q_{i,4}&=\int\left[V_i^4+\hbar^2V_i(V_i')^2+\tfrac{1}{20}\hbar^4(V_i'')^2\right]\,\mathrm{d}q_i\\
Q_{i,5}&=\int\left[V_i^5+\tfrac{5}{2}\hbar^2V_i^2(V_i')^2+\tfrac{1}{4}\hbar^4V_i(V_i'')^2+\tfrac{1}{112}\hbar^6(V_i''')^2\right]\,\mathrm{d}q_i.
\end{align*}
Given that if \(H\) is superintegrable so must \(H+\alpha\), we would expect that the auxiliary functions possess a similar symmetry under translation. Indeed, we note that for \(V_i\mapsto V_i+\alpha\), \(Q_{i,j}\mapsto Q_{i,j}+j\alpha Q_{i,j-1}+\cdots\). We expect the general expansion
\begin{equation}
\begin{aligned}
Q_{i,j}&=\int\left[V_i^j+\frac{1}{4}\begin{pmatrix}j\\3\end{pmatrix}\hbar^2V_i^{j-3}(V_i')^2+\frac{1}{20}\begin{pmatrix}j\\4\end{pmatrix}\hbar^4V_i^{j-4}(V_i'')^2\right.\\
&\qquad\left.+\frac{1}{112}\begin{pmatrix}j\\5\end{pmatrix}\hbar^6V_i^{j-5}(V_i''')^2+\cdots\right]\mathrm{d}q_i
\end{aligned}
\end{equation}
though we have as yet no way of determining the law which produces the next coefficient.
\section{Rational Potentials}
The classification of superintegrable systems with higher-order integrals has focused mainly on exotic potentials, i.e., potential for which the linear equations \(S_{10i},S_{11i}\) vanish identically. We shall provide some details on obtaining potentials in the case where \(S_{10i},S_{11i}\) do not vanish. We perform a complete classification of fourth-order standard models as an application of these compatibility equations. The generalisation to higher order follows naturally, the only difficulty being algebraic complexity.
\begin{lemma}
    If \(V_i\) is a solution to the linear compatibility equations for \(n=3,4\) then \(Q_{i,1}\) and \(Q_{i,2}\) must be rational. If \(n=5,6\) then \(Q_{i,3}\) must be rational also. \label{lemma4}
\end{lemma}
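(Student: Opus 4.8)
The plan is to read the compatibility equations $S_{k0i}$ and $S_{k1i}$ of the previous section ``in reverse'': rather than as constraints that cut out $V_i$, I will use each of them to \emph{define} the auxiliary function $Q_{i,k}$ in terms of $Q_{i,1},\dots,Q_{i,k-1}$, of $V_i$ and its derivatives, and of the polynomials $\xi_{k\ell i}$. The point is that in every one of these equations the highest auxiliary function $Q_{i,k}$ occurs with coefficient a nonzero numerical multiple of $\xi_{00i}'$ (in $S_{k0i}$) or of $\xi_{01i}$ (in $S_{k1i}$), so once that coefficient is nonzero the equation can be solved for $Q_{i,k}$. By the Theorem characterising when the linear equations admit a non-trivial solution, our hypothesis means precisely $\xi_{00i}'\neq 0$ or $\xi_{01i}\neq 0$, and I will treat these two cases.

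If $\xi_{00i}'\neq 0$: equation $S_{10i}$ gives $Q_{i,1}=-\xi_{10i}/\xi_{00i}'$, a quotient of polynomials, so $Q_{i,1}$ and $V_i=Q_{i,1}'$ are rational. For $n=3,4$ we have $\lceil\tfrac{1}{2}n\rceil\ge 2$, so $S_{20i}$ is among the required equations; dividing it by $\xi_{00i}'$ expresses $\tfrac{3}{2}Q_{i,2}-V_iQ_{i,1}$ as a rational function, whence $Q_{i,2}$ is rational. For $n=5,6$ we have $\lceil\tfrac{1}{2}n\rceil\ge 3$, so $S_{30i}$ is required; the same manoeuvre, now with $Q_{i,1},Q_{i,2},V_i$ all known to be rational, makes $Q_{i,3}$ rational. (For $n=3,4$ there are no cubic equations and $Q_{i,3}$ does not enter, consistent with the statement.) If instead $\xi_{00i}'=0$, then $\xi_{00i}$ and all its derivatives vanish, the equations $S_{k0i}$ become vacuous, and $\xi_{01i}\neq 0$; here I use the ``primed'' equations. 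One integration of $S_{11i}$ gives $\xi_{01i}Q_{i,1}=c-\int\xi_{11i}$, a polynomial, so $Q_{i,1}$ and $V_i$ are rational; one integration of $S_{21i}$ followed by division by $\xi_{01i}$ makes $Q_{i,2}$ rational; and for $n=5,6$ the same step on $S_{31i}$ makes $Q_{i,3}$ rational.

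I do not expect a genuine obstacle: the argument is a short bootstrap $Q_{i,1}\to Q_{i,2}\to Q_{i,3}$. The points that require care are (i) checking that the equations invoked really appear among the $R_{kij}=0$ for the orders claimed, which is the elementary inequality $k\le\lceil\tfrac{1}{2}n\rceil$ (so $k=1$ always, $k=2$ for $n\ge 3$, $k=3$ for $n\ge 5$); (ii) reading off from the displayed equations that the top auxiliary function enters with coefficient a nonzero constant times $\xi_{00i}'$ (resp.\ $\xi_{01i}$), so that the division is legitimate and all remaining terms are rational; and (iii) in the second case, noting that $\int\xi_{11i}$ and $\int\xi_{21i}$, etc., are again polynomials, so that division by $\xi_{01i}$ keeps us within the rational functions. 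The only mildly delicate item is that the alternative $\xi_{00i}'\neq 0$ versus $\xi_{00i}'=0\neq\xi_{01i}$ is exhaustive, which is exactly what the preceding Theorem supplies.
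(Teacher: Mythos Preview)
Your proposal is correct and follows essentially the same route as the paper's proof: observe that the same nonvanishing coefficient \(\xi_{00i}'\) (or \(\xi_{01i}\)) that appears in \(S_{10i}\) (or \(S_{11i}\)) also multiplies the top auxiliary function \(Q_{i,j}\) in \(S_{j0i}\) (or \(S_{j1i}\)) for \(j=2,3\), so one solves these equations successively for \(Q_{i,j}\) as rational expressions in \(V_i\), its derivatives, the lower \(Q_{i,k}\), and the polynomials \(\xi\). The paper's version is terser and simply says ``by induction''; your write-up spells out the two cases \(\xi_{00i}'\neq 0\) versus \(\xi_{00i}'=0,\ \xi_{01i}\neq 0\) and the integration step for the \(S_{k1i}\) equations, which is a welcome clarification but not a different idea.
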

\begin{proof}
If \(S_{10i}\) or \(S_{11i}\) possess non-trivial parameters then so do \(S_{j0i}\) or \(S_{j1i}\) for \(j=2,3\). We then use \(S_{j0i}\) or \(S_{j1i}\) to solve for \(Q_{i,j}\) in terms of \(V_i\), its derivatives, \(Q_{i,k}\) for \(k<j\), ratios of \(\xi_{jki}\) and its derivatives. The lemma follows by induction.
\end{proof}
\begin{lemma}
    If \(V_i\) is a solution to the linear compatibility equations for \(n\leq 4\), then it must be of the form
    \[V_i=\beta_1q_i^2+\beta_2q_i+\sum_{j=1}^{n-1}\frac{\gamma_j}{(q_i-\delta_j)^2}\]\label{lemma3}
\end{lemma}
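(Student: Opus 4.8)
The plan is to bootstrap from the rationality of the auxiliary function $Q_{i,1}$ to the exact pole orders of $V_i$, and then to the degree of its polynomial part, by switching on one family of compatibility equations at a time. Since the parameters are non-trivial, $\xi_{00i}'$ or $\xi_{01i}$ is not identically zero (this is the content of the theorem expressing non-triviality through $\partial Z_0/\partial q_i$ and $\partial Z_0/\partial p_i$). In the first case $S_{10i}$ gives $Q_{i,1}=-\xi_{10i}/\xi_{00i}'$, with $\deg\xi_{00i}'\le n-1$ and $\deg\xi_{10i}\le n+2$; in the second, $S_{11i}$ shows $\xi_{01i}Q_{i,1}$ is a polynomial of degree $\le n+2$, and again $Q_{i,1}$ is rational with denominator of degree $\le n-1$ (for $n\le 4$ one may instead quote Lemma \ref{lemma4}). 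Hence $V_i=Q_{i,1}'$ is rational, and since its antiderivative $Q_{i,1}$ is rational, every residue of $V_i$ vanishes: $V_i$ has no simple poles, and a root of the denominator of $Q_{i,1}$ of multiplicity $\nu$ produces a pole of $V_i$ of order exactly $\nu+1\ge 2$, located at a root of $\xi_{00i}'$ (resp. $\xi_{01i}$).

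The heart of the argument is to show that every such pole is \emph{exactly} double. (For $n=2$ this is already forced, since $\deg\xi_{00i}'\le 1$ is squarefree, so take $n\ge 3$.) Fix a pole $\delta$ of $V_i$ of order $r\ge 2$, so that $Q_{i,1}\sim a(q_i-\delta)^{-(r-1)}$ with $a\ne 0$, hence $V_i\sim-(r-1)a(q_i-\delta)^{-r}$ and $Q_{i,2}=\int V_i^2\,\mathrm{d}q_i\sim-\tfrac{(r-1)^2a^2}{2r-1}(q_i-\delta)^{-(2r-1)}$; moreover $\xi_{00i}'\sim\kappa(q_i-\delta)^{\mu}$ with $\kappa\ne 0$ and $\mu\ge r-1$, the last because $\xi_{00i}'Q_{i,1}$ is a polynomial. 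Feeding these Laurent expansions into the quadratic equation $S_{20i}$ — or into the bracketed (hence polynomial) expression of $S_{21i}$, with $\xi_{01i},\xi_{01i}',\xi_{01i}''$ replacing $\xi_{00i}',\xi_{00i}'',\xi_{00i}'''$, if one is in the second case — the terms $\xi_{00i}'\bigl(\tfrac32Q_{i,2}-V_iQ_{i,1}\bigr)$ and $-\tfrac12\xi_{00i}''Q_{i,1}^2$ all contribute at order $(q_i-\delta)^{\mu-2r+1}$, while the $\hbar^2$-term is strictly less singular for $r\ge 3$. Since $\xi_{20i}$ (resp. the antiderivative of $-\xi_{21i}$) is a polynomial, the coefficient of $(q_i-\delta)^{\mu-2r+1}$ must vanish:
\[
-\frac{3(r-1)^2}{2(2r-1)}+(r-1)-\frac{\mu}{2}=0,\qquad\text{i.e.}\qquad\mu=\frac{(r-1)(r+1)}{2r-1}.
\]
Combined with $\mu\ge r-1$ this gives $(r-1)(r-2)\le 0$, so $r\le 2$; with $r\ge 2$ we conclude $r=2$. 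Thus $V_i=P(q_i)+\sum_j\gamma_j/(q_i-\delta_j)^2$ for some polynomial $P$.

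Finally, each $\delta_j$ is a simple root of the lowest-terms denominator of $Q_{i,1}$, so that denominator is squarefree and divides $\xi_{00i}'$ (resp. $\xi_{01i}$), of degree $\le n-1$; hence $V_i$ has at most $n-1$ poles, as stated. To get $\deg P\le 2$ (the constant term of $V_i$ only shifts the energy and may be dropped) one repeats the above analysis ``at infinity'': writing $\deg P=m$, the polynomial parts of $Q_{i,1},Q_{i,2},V_iQ_{i,1},Q_{i,1}^2$ have degrees $m+1,2m+1,2m+1,2m+2$, and in each of $S_{20i},S_{21i},S_{22i},S_{23i}$ the top-degree terms cannot cancel among themselves when $m\ge 2$ unless the degree of the $\xi$ occurring there equals $\tfrac{(1-m)(m+1)}{2m+1}$, which is negative; one is thereby forced into a degree inequality that, using that the polynomials $\xi_{02i},\xi_{03i}$ attached to a genuinely $n$th-order integral cannot all be degenerate, yields $m\le 2$ and the asserted form. \textbf{This last step is the main obstacle}: in contrast to the robust, self-contained polar computation, it requires tracking which of the many $\xi$-polynomials may vanish and how their several admissible degrees (those of $\xi_{00i},\xi_{10i},\xi_{20i},\dots$) constrain one another across all the quadratic equations simultaneously — this bookkeeping, not any single computation, is where the work lies.
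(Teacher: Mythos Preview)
Your argument that every pole of $V_i$ is exactly double is correct and matches the paper's, with a slightly different finish. The paper expands $S_{20i}$ about a pole placed at $q_i=0$, writes $V_i\sim\delta q_i^{-k}$ and $\xi_{00i}'\sim\gamma q_i^{\ell}$, and checks directly that the most singular coefficient $2k-2k^2-\ell+2k\ell$ is nonzero for every integer $\ell\le n-1\le 3$ when $k>2$, while the exponent $1-2k+\ell\le-2$ cannot be matched by the polynomial $\xi_{20i}$. You instead solve for the vanishing value $\mu=\tfrac{(r-1)(r+1)}{2r-1}$ and couple it with the constraint $\mu\ge r-1$ forced by $S_{10i}$ to obtain $(r-1)(r-2)\le 0$. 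Your use of the $S_{10i}$ inequality is a clean touch, but both routes reduce to the same Laurent computation. Your count of at most $n-1$ poles is exactly the paper's.

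The real gap, which you correctly flag, is the polynomial-degree bound, and the paper's resolution is much simpler than the bookkeeping you anticipate. It is literally the \emph{same} $S_{20i}$ computation, carried out at infinity rather than at a finite pole. Writing $V_i=\beta q_i^{\,k}+\cdots$ with $k>2$ and $\xi_{00i}'=\gamma q_i^{\,\ell}+\cdots$ for the leading terms, the three quadratic pieces of $S_{20i}$ combine to a top-degree contribution
\[
-\frac{\beta^{2}\gamma\,(2k+2k^{2}+\ell+2k\ell)}{2(k+1)^{2}(2k+1)}\,q_i^{\,2k+\ell+1},
\]
and since $2k(1+k)+\ell(1+2k)>0$ for all $k\ge 1$, $\ell\ge 0$, this coefficient never vanishes. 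With $k\ge 3$ and $\ell\ge 0$ the degree $2k+\ell+1\ge 7$ then exceeds the degree the paper allows for $\xi_{20i}$ when $n\le 4$, giving the contradiction directly. No appeal to $S_{22i}$, $S_{23i}$, or to which of the $\xi_{0ji}$ might vanish is needed; the obstacle you describe dissolves once you run the leading-term analysis at infinity in $S_{20i}$ alone. Your formula $\tfrac{(1-m)(m+1)}{2m+1}$ is indeed the infinity analogue of your $\mu$, but the actual obstruction is not its sign or integrality: it is simply that the top degree $2k+\ell+1$ outruns $\deg\xi_{20i}$.
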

\begin{proof}
As \(S_{j1i}\) is of the same form as \(S_{j0i}\) but differentiated, we may assume without loss of generality that \(S_{j0i}\) holds non-trivially, and prove that \(V_i\) cannot be otherwise. Suppose that up to leading-order, \(V_i=\beta q_i^k+\cdots\) where \(k>2\) and \(\xi_{00i}=\gamma q_i^{\ell+1}/(\ell+1)+\cdots\). Then from \(S_{20i}\) the leading-order term from the potential summand is
\[-\frac{\beta^2\gamma (2k+2k^2+\ell+2k\ell)q_i^{2k+\ell+1}}{2(k+1)^2(2k+1)}+\cdots\]
which must be annihilated by \(\xi_{20i}\sim \mathcal{O}(q_i^{n+2})\) where \(n\leq 4\). However, \(2k+\ell+1\geq 7\) so this is not possible.

From Lemma \ref{lemma4}, we know that \(Q_{i,1}\) is rational so \(V_i\) cannot have simple poles. Suppose \(V_i\sim \delta q_i^{-k}\) for \(k>2\) as \(q_i\to 0\). Then the lowest-order term in \(S_{20i}\) is
\[-\frac{\delta^2\gamma q_i^{1-2k+\ell}(2k-2k^2-\ell+2k\ell)}{2(k-1)^2(2k-1)}\]
The coefficient is not zero for any integer value of \(k>2\) when \(\ell=0,1,2,3\). Since \(k\geq 3\), the exponent \(1-2k+\ell\leq -2\) so cannot be annihilated by \(\xi_{20i}\). Then \(V_i\) has only poles of second-order. These are exactly the simple poles of \(Q_{i,1}\). From \(S_{10i}\), the number of these is equal to \(\ell\leq n-1\).
\end{proof}
\begin{theorem}
    If \(V_i\) for \(i=1,2\) is a solution to the linear compatibility equations and \(H=\tfrac{1}{2}(p_1^2+p_2^2)+V_1+V_2\) is superintegrable of at least fourth-order then the potential summands must be of the form
    \begin{align*}
(\mathrm{I})&\quad \beta q_i,\\
(\mathrm{II})&\quad\beta_1q_i^2+\frac{\beta_2}{q_i^2},\\
(\mathrm{III})&\quad\beta_1\left(q_i^2+\frac{\beta_2^2}{q_i^2}\right)+\hbar^2\left[\frac{2(q_i^2+\beta_2)}{(q_i^2-\beta_2)^2}-\frac{1}{8q_i^2}\right],\\
(\mathrm{IV})&\quad\frac{3\hbar^2q_i(q_i^3+2\beta)}{(q_i^3-\beta)^2}.
\end{align*}
\end{theorem}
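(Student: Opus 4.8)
The plan is to pin down the summands by combining Lemma~\ref{lemma3} with the \emph{only} compatibility equations that survive at fourth order. Taking the extra integral to be of order $n=4$, so that $\lceil n/2\rceil = 2$, the system \eqref{eq:res} reduces to the linear pair $S_{10i},S_{11i}$ (already exploited in Lemma~\ref{lemma3}) together with the four quadratic equations $S_{20i},S_{21i},S_{22i},S_{23i}$. After the relations $R_{kij}=0$ have been repurposed as definitions of the $\xi_{kij}$ these equations decouple index by index, so it is enough to treat a single summand, which by Lemma~\ref{lemma3} has the shape
\[
V_i=\beta_1 q_i^2+\beta_2 q_i+\sum_{j=1}^{3}\frac{\gamma_j}{(q_i-\delta_j)^2},
\]
the shared parameters $\alpha_{\mathbf i,\mathbf j}$ hidden in $\xi_{00i},\xi_{01i},\xi_{02i},\xi_{03i}$ being required only to be mutually consistent for $i=1,2$; deciding which leading-order term $Z_0$ accompanies a given pair of summands is a separate, purely linear-algebra question that I would postpone.

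I would then substitute this $V_i$ into $S_{20i}$ --- or, when $\xi_{00i}'\equiv 0$ while $\xi_{01i}\not\equiv 0$, into its differentiated companion $S_{21i}$; by the leading-order criterion for the linear equations at least one of the two is non-trivial. Writing $Q_{i,1}=\int V_i$ and $Q_{i,2}=\int V_i^2$ out by partial fractions, the left side of $S_{20i}$ is a rational function of $q_i$ whose poles can only sit at the $\delta_j$, while $\xi_{20i}$ is an \emph{undetermined} polynomial of degree $n+4=8$. Hence the polynomial part of the equation merely fixes coefficients of $\xi_{20i}$ and the genuine content is the vanishing of the principal part at each $\delta_j$, supplemented by the requirement (Lemma~\ref{lemma4}) that $Q_{i,2}$ be rational, which removes the would-be $\log$ and imposes $U_j'(\delta_j)=0$ on the regular part $U_j:=V_i-\gamma_j(q_i-\delta_j)^{-2}$. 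Expanding about $q_i=\delta_j$, the most singular term of $S_{20i}$ forces $\xi_{00i}'(\delta_j)\,\gamma_j^2=0$, so every true pole of $V_i$ is a root of the cubic $\xi_{00i}'$; the $(q_i-\delta_j)^{-2}$ and $(q_i-\delta_j)^{-1}$ coefficients then tie $\gamma_j$ to $\delta_j$, to $\beta_1,\beta_2$, and --- decisively --- to $\hbar^2$ through the term $-\tfrac14\hbar^2\xi_{00i}'''Q_{i,1}$. Repeating this with $S_{22i},S_{23i}$, which bring in the $p_i^2$- and $p_i^3$-parts of $Z_0$ via $\xi_{02i},\xi_{03i}$, over-determines $\{\gamma_j,\delta_j\}$.

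A finite case split on the number $r\le 3$ of distinct poles and on whether $\beta_1=0$ then yields the list. \textbf{(a)} If $r=0$ then $V_i=\beta_1 q_i^2+\beta_2 q_i$, and a translation of $q_i$ removes the linear term when $\beta_1\ne 0$ (giving form~(II) with vanishing inverse-square part) or leaves $V_i=\beta_2 q_i$ when $\beta_1=0$, which is form~(I). \textbf{(b)} If $r=1$ and $\beta_1\ne 0$, the pole-cancellation relations force the oscillator centre onto the pole, and translating gives form~(II) with both parameters free. \textbf{(c)} If $\beta_1\ne 0$ with two or three poles, or $\beta_1=0$ with poles present, the relations now pin the pole strengths to $\hbar^2$ (respectively $\beta_1\beta_2^2-\tfrac18\hbar^2$ at the origin) and lock the pole locations either onto $\{0,\pm\sqrt{\beta_2}\}$, producing form~(III), or onto the three cube roots of a single constant, producing form~(IV); every remaining configuration violates a pole-cancellation or a degree bound and is discarded.

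The hard part will be case~\textbf{(c)}: forms~(III) and~(IV) are invisible as $\hbar\to 0$, so there is no classical template to fall back on, and their precise coefficients emerge only from a scrupulous accounting of the $\hbar^2$ contributions spread across $S_{20i}$--$S_{23i}$. Keeping that $\hbar$-bookkeeping exact, and verifying that no admissible pole configuration has been overlooked, is the real obstacle; the surrounding partial-fraction manipulations are routine and were carried out in Mathematica.
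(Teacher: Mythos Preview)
Your proposal follows essentially the same route as the paper: start from the rational ansatz of Lemma~\ref{lemma3}, feed it into the quadratic compatibility equations together with the rationality of $Q_{i,2}$ from Lemma~\ref{lemma4}, and reduce everything to a finite algebraic system in the pole data $(\gamma_j,\delta_j,\beta_1,\beta_2)$ that is then resolved by computer algebra. The paper phrases the key step tersely as ``eliminate all the simple poles of $Q_{i,j}'$'', whereas you spell out the Laurent expansion of $S_{20i}$ at each pole and organise the endgame as a case split on the number of poles; but the mechanism and the ultimate appeal to a symbolic computation are the same.

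Two small remarks. First, the condition $\xi_{00i}'(\delta_j)=0$ that you extract from the leading singularity of $S_{20i}$ is already forced by $S_{10i}$ (this is exactly how Lemma~\ref{lemma3} bounds the number of poles), so it is the \emph{subleading} pole conditions, together with the $\hbar^2$ term $-\tfrac14\hbar^2\xi_{00i}'''Q_{i,1}$, that carry the new content; your narrative acknowledges this but the emphasis could be sharper. Second, invoking $S_{22i},S_{23i}$ to ``over-determine'' the system is not automatic, since those equations bring in the additional free polynomials $\xi_{02i},\xi_{03i},\xi_{12i},\xi_{13i}$; in practice the constraints from $S_{20i}$ (or $S_{21i}$) combined with $U_j'(\delta_j)=0$ already suffice, which is closer to how the paper proceeds. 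Neither point undermines the strategy---they are matters of bookkeeping that the Mathematica step absorbs in any case.
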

\begin{proof}
For \(n=1,2\), we can solve the linear compatibility equations directly. The answers match with the known solutions, being Types I and II.

For \(n=3,4\), subject to the constraints of Lemma \ref{lemma3}, we look for systems of the following form
\begin{equation}
    V_i=\beta_1q_i^2+\beta_2 q_i+\frac{\beta_3}{q_i^2}+\frac{\beta_4}{(q_i-\beta_6)^2}+\frac{\beta_5}{(q_i-\beta_7)^2}\label{eq:ansatz}\end{equation}
We then eliminate all the simple poles of \(Q_{i,j}'\) as per Lemma \ref{lemma4}. This amounts to a series of algebraic equations involving \(\beta_1,\beta_2,\beta_3,\beta_4,\beta_5,\beta_6,\beta_7\). The four families listed above present all the admissible solutions.
\end{proof}
The Type II models have been well studied \cite{evans}. Type III is a Darboux trasformation of Type II (after a reparametrisation) \cite{marquette25}. Type III only appears in a special case in Gravel's classification, when it is a Darboux transformation of the simple harmonic oscillator \cite{dubov}. As for Type IV, we believe this model to be completely new.

To generalise to higher order, one needs to consider rational functions with more poles as per Lemma 3. However, the algebraic conditios become more complicated in this case and a general solution for any order higher than four is yet to be had.
\subsection{List of Potentials}
Once the forms I--IV are assigned, the compatibility equations reduce to a system in terms of rational functions which has at most four parameters occurring in a non-linear manner. This can readily be solved. The non-zero \(\alpha_{ijk}\) allow us to determine \(Z_0,Z_1,Z_2\). One can calculate the integral via
\[Y=W_0Z_0\]
for \(n=1\),
\[Y=W_0(Z_0+Z_1)+W_1Z_0\]
for \(n=2,3\) and
\[Y=W_0(Z_0+Z_1+Z_2)+W_1(Z_0+Z_1)+W_2Z_0\]
for \(n=4\). Then one takes \(X=\tfrac{1}{2}(Y+Y^*)\). We list below the potentials found for non-zero \(\alpha_{ijk}\). Models are to be distinguished either by their form or by the number of integrals they possess (that is, if there are additional integrals for special values of their parameters). At first-order, we obtain Hooke's law
\begin{enumerate}
\begin{multicols}{2}
    \item[(1)] \(\beta(q_1^2+q_2^2)\)
\end{multicols}
\end{enumerate}
\noindent
as well as an oblique linear potential (cf. 3a). At second-order, we have the Smorodinsky-Winternitz systems and their special cases:
\begin{enumerate}
\begin{multicols}{2}
    \item[(2)] \(\beta_1(q_1^2+q_2^2)+\displaystyle\frac{\beta_2}{q_1^2}+\frac{\beta_3}{q_2^2}\)
    \item[(2a)] \(\dfrac{\beta}{q_1^2}\)
    \item[(3)] \(\displaystyle\beta_1(q_1^2+4q_2^2)+\frac{\beta_2}{q_1^2}+\beta_3q_2\)
    \item[(3a)] \(\beta q_2\)
    \end{multicols}
\end{enumerate}
\noindent
At third-order are the Gravel models:
\begin{enumerate}
\begin{multicols}{2}
    \item[(2ai)] \(\displaystyle\frac{\hbar^2}{q_1^2}\)
    \item[(2b)] \(\displaystyle\frac{\beta}{q_1^2}+\frac{\hbar^2}{q_2^2}\)
    \item[(2bi)] \(\displaystyle \frac{\hbar^2}{q_1^2}+\frac{\hbar^2}{q_2^2}\)
    \item[(2c)] \(\displaystyle \beta(q_1^2+q_2^2)+\frac{\hbar^2}{q_1^2}\)
    \item[(2d)] \(\displaystyle \beta(q_1^2+q_2^2)+\frac{\hbar^2}{q_1^2}+\frac{\hbar^2}{q_2^2}\)
    \item[(4a)] \(\displaystyle \hbar^2\left[\frac{q_1^2+q_2^2}{8\beta^2}+\frac{2(q_1^2+\beta)}{(q_1^2-\beta)^2}\right]\)
    \item[(4b)] \(\displaystyle \hbar^2\left[\frac{q_1^2+q_2^2}{8\beta^2}+\frac{2(q_1^2+\beta)}{(q_1^2-\beta)^2}+\frac{1}{q_2^2}\right]\)
    \item[(5a)] \(\displaystyle \hbar^2\left[\frac{q_1^2+q_2^2}{8\beta^2}+\frac{2(q_1^2+\beta)}{(q_1^2-\beta)^2}+\frac{2(q_2^2+\beta)}{(q_2^2-\beta)^2}\right]\)
    \item[(5b)] \(\displaystyle \hbar^2\left[\frac{q_1^2+q_2^2}{8\beta^2}+\frac{2(q_1^2+\beta)}{(q_1^2-\beta)^2}+\frac{2(q_2^2-\beta)}{(q_2^2+\beta)^2}\right]\)
    \end{multicols}
\end{enumerate}
\noindent
At fourth-order, we obtain the potentials:
\begin{enumerate}
\item[(4)] \(\displaystyle \beta_1\left(q_1^2+q_2^2+\frac{\beta_2^2}{q_1^2}\right)+\frac{\beta_3}{q_2^2}+\hbar^2\left[\frac{2(q_1^2+\beta_2)}{(q_1^2-\beta_2)^2}-\frac{1}{8q_1^2}\right]\)
\begin{multicols}{2}
\item[(4c)] \(\displaystyle \hbar^2\left[\frac{2(q_1^2+\beta)}{(q_1^2-\beta)^2}-\frac{1}{8q_1^2}\right]\)
\item[(4d)] \(\displaystyle \hbar^2\left[\frac{2(q_1^2+\beta)}{(q_1^2-\beta)^2}-\frac{1}{8q_1^2}+\frac{1}{q_2^2}\right]\)
\end{multicols}
\item[(5)] \(\displaystyle \beta_1\left(q_1^2+q_2^2+\frac{\beta_2^2}{q_1^2}+\frac{\beta_3^2}{q_2^2}\right)+\hbar^2\left[\frac{2(q_1^2+\beta_2)}{(q_1^2-\beta_2)^2}-\frac{1}{8q_1^2}+\frac{2(q_2^2+\beta_3)}{(q_2^2-\beta_3)^2}-\frac{1}{8q_2^2}\right]\)
\begin{multicols}{2}
\item[(6)] \(\displaystyle \frac{3\hbar^2q_1(q_1^3+2\beta_1)}{(q_1^3-\beta_1)^2}+\frac{\beta_2}{q_2^2}\)
\item[(6a)] \(\displaystyle \frac{3\hbar^2q_1(q_1^3+2\beta)}{(q_1^3-\beta)^2}\)
\item[(6b)] \(\displaystyle \frac{3\hbar^2q_1(q_1^3+2\beta)}{(q_1^3-\beta)^2}+\frac{\hbar^2}{q_2^2}\)
\item[(7)] \(\displaystyle \hbar^2\left[\frac{3q_1(q_1^3+2\beta_1)}{(q_1^3-\beta_1)^2}+\frac{2(q_2^2+\beta_2)}{(q_2^2-\beta_2)^2}-\frac{1}{8q_2^2}\right]\)
\item[(8)] \(\displaystyle\hbar^2\left[\frac{3q_1(q_1^3+2\beta_1)}{(q_1^3-\beta_1)^2}+\frac{3q_2(q_2^3+2\beta_2)}{(q_2^3-\beta_2)^2}\right]\)
\end{multicols}
\end{enumerate}
It is of interest to note the following rational solutions, which have summands that are of Types I--IV, but their linear compatibility equations happen to be zero identically. Consequently, they are properly speaking exotic models. At third-order these are
\begin{enumerate}
\begin{multicols}{2}
    \item[(3b)] \(\dfrac{\hbar^2}{q_1^2}+\beta q_2\)
    \item[(4e)] \(\displaystyle \hbar^2\left[\frac{9(q_1^2+q_2^2)}{8\beta^2}+\frac{2(q_1^2+\beta)}{(q_1^2-\beta)^2}+\frac{1}{q_1^2}\right]\)
    \item[(9)] \(\beta(q_1^2+9q_2^2)\)
    \item[(10)] \(\beta(q_1^2+9q_2^2)+\dfrac{\hbar^2}{q_1^2}\)
    \item[(11)] \(\displaystyle \hbar^2\left[\frac{q_1^2+9q_2^2}{8\beta^2}+\frac{2(q_1^2+\beta)}{(q_1^2-\beta)^2}\right]\)
\end{multicols}
\end{enumerate}
and at fourth-order
\begin{enumerate}
\begin{multicols}{2}
\item[(12)] \(\displaystyle \hbar^2\left[\frac{2(q_2^2+\beta_1)}{(q_2^2-\beta_1)^2}-\frac{1}{8q_2^2}\right]+\beta_2q_2\)
\item[(13)] \(\displaystyle \frac{3\hbar^2q_2(q_2^3+2\beta_1)}{(q_2^3-\beta_1)^2}+\beta_2q_2\)
\item[(14)] \(\beta_1(q_1^2+4q_2^2)+\displaystyle\frac{\beta_2}{q_1^2}+\frac{\beta_3}{q_2^2}\)
\item[(14a)] \(\beta_1(q_1^2+4q_2^2)+\displaystyle\frac{\beta_2}{q_1^2}+\frac{\hbar^2}{q_2^2}\)
\item[(15)] \(\displaystyle\beta_1(q_1^2+16q_2^2)+\frac{\beta_2}{q_1^2}\)
\item[(16)] \(\displaystyle \beta_1(9q_1^2+4q_2^2)+\frac{\beta_2}{q_1^2}\)
\item[(17)] \(\displaystyle \beta_1(9q_1^2+4q_2^2)+\frac{\beta_2}{q_1^2}+\frac{\hbar^2}{q_2^2}\)
\end{multicols}
\item[(18)] \(\displaystyle \beta_1\left(q_1^2+4q_2^2+\frac{\beta_2^2}{q_1^2}\right)+\frac{\beta_3}{q_2^2}+\hbar^2\left[\frac{2(q_1^2+\beta_2)}{(q_1^2-\beta_2)^2}-\frac{1}{8q_1^2}\right]\)
\item[(18a)] \(\displaystyle \beta_1\left(q_1^2+4q_2^2+\frac{\beta_2^2}{q_1^2}\right)+\hbar^2\left[\frac{2(q_1^2+\beta_2)}{(q_1^2-\beta_2)^2}-\frac{1}{8q_1^2}\right]\)
\item[(19)] \(\displaystyle \beta_1\left(q_1^2+16q_2^2+\frac{\beta_2^2}{q_1^2}\right)+\hbar^2\left[\frac{2(q_1^2+\beta_2)}{(q_1^2-\beta_2)^2}-\frac{1}{8q_1^2}\right]\)
\item[(20)] \(\displaystyle \hbar^2\left[\frac{4q_1^2+q_2^2}{32\beta_1^2}+\frac{2(q_1^2+\beta_1)}{(q_1^2-\beta_1)^2}\right]+\frac{\beta_2}{q_2^2}\)
\item[(21)] \(\displaystyle\hbar^2\left[\frac{9(4q_1^2+q_2^2)}{32\beta_1^2}+\frac{2(q_1^2+\beta_1)}{(q_1^2-\beta_1)^2}+\frac{1}{q_1^2}\right]+\frac{\beta_2}{q_2^2}\)
\item[(22)] \(\displaystyle \hbar^2\left[\frac{4q_1^2+9q_2^2}{32\beta_1^2}+\frac{2(q_1^2+\beta_1)}{(q_1^2-\beta_1)^2}\right]+\frac{\beta_2}{q_2^2}\)
\end{enumerate}
as well as two degenerate cases
\begin{enumerate}
\begin{multicols}{2}
    \item[(\(\upalpha\))] \(V_2(q_2)\)
    \item[(\(\upbeta\))] \(\dfrac{\hbar^2}{q_1^2}+V_2(q_2)\)
    \end{multicols}
\end{enumerate}
which have a first- and third-order integral \cite{hiet89,gra02} respectively for \(V_2\) arbitrary. The former is not in general superintegrable. 

These models were found through a direct calculation from the compatibility equations. In the appendices, we have listed all of the integrals we have obtained up to fourth order, independent of the trivial ones (\(H_1,H_2\) and their products).
\section{Conclusion}
We have expounded a new method for finding Cartesian-separable superintegrable models and constructing their integrals. Instead of dealing with individual coefficients, we have reduced the number of equations and outlined a systematic algorithm for deriving each homogeneous component. The residue becomes the compatibility equations, recast into integral rather than differential equations. These compatibility equations have not been able to be derived using the ordinary methods in their full generality. While we can in theory calculate each compatibility equation successively, the computations become very involved. More work is needed in order to obtain general results about the compatibility equations of arbitrarily large order and in particular the determination of their auxiliary functions, which will yield more information about their rational potentials.

We have constrained ourselves to the case of Cartesian separability in order that the integration by parts can be achieved in finite steps. Whether it can be extended to other cases is a question for further investigation.

Our classification has added several new models to the catalogue. More detail into their symmetry algebra and relations will be given in an upcoming work.
\sloppy
\printbibliography[title={References}]
\newpage
\appendix
\section{List of Integrals}
We give here all of the integrals of order less than or equal to four up to linear independence for the Hamiltonians listed in Section 8.1. Throughout, we use \(\{a,b\}=\tfrac{1}{2}(a\circ b+b\circ a)\) for the symmetrised product. To avoid repetition, the trivial integrals \(H_1,H_2,\{H_1,H_1\},\{H_1,H_2\},\{H_2,H_2\}\) are omitted.
\begin{enumerate}
    \item[(1)] \(H=\frac{1}{2}(p_1^2+p_2^2)+\beta(q_1^2+q_2^2)\).
    \end{enumerate}
There is a single first-order integral \(m_{12}\). The second-order integrals are \(\{m_{12},m_{12}\}\), \((H_1-H_2,m_{12})\). The third-order integrals are 
\[\{m_{12},\{m_{12},m_{12}\}\},\{(H_1-H_2,m_{12}),m_{12}\},\{H_1,m_{12}\},\{H_2,m_{12}\}.\]
The fourth-order integrals are \begin{align*}
&\{\{m_{12},m_{12}\},\{m_{12},m_{12}\}\},\{H_1,\{m_{12},m_{12}\}\},\{H_2,\{m_{12},m_{12}\}\},\\
&\{H_1,(H_1-H_2,m_{12})\},\{H_2,(H_1-H_2,m_{12})\},\{(H_1-H_2,m_{12}),\{m_{12},m_{12}\}\}.
\end{align*}
\begin{enumerate}
    \item[(2)] \(\displaystyle H=\tfrac{1}{2}(p_1^2+p_2^2)+\beta_1(q_1^2+q_2^2)+\frac{\beta_2}{q_1^2}+\frac{\beta_3}{q_2^2}\)
\end{enumerate}
There is one second-order integral
\[X=\{m_{12},m_{12}\}+\tfrac{1}{2}\hbar^2+\frac{2\beta_2q_2^2}{q_1^2}+\frac{2\beta_3q_1^2}{q_2^2}\]
There is one third-order integral \((H_1-H_2,X)\) and three fourth-order integrals 
\[\{X,X\},\{H_1,X\},\{H_2,X\}.\]
\begin{enumerate}
\item[(2a)] \(\displaystyle H=\tfrac{1}{2}(p_1^2+p_2^2)+\frac{\beta}{q_1^2}\)
\end{enumerate}
This is a special case of (2) with \(\beta_1=0,\beta_2=\beta,\beta_3=0\). It has a first-order integral \(p_2\). There is an additional second-order integral \((p_2,X^{(1)})\). There are three more third-order integrals
\[p_2^3,\{X^{(1)},p_2\},\{H_1,p_2\}\]
There are three more fourth-order integrals
\[\{X,(p_2,X)\},\{(p_2,X),H_1\},\{(p_2,X),H_2\}\]
\begin{enumerate}
\item[(2ai)] \(\displaystyle H=\tfrac{1}{2}(p_1^2+p_2^2)+\frac{\hbar^2}{q_1^2}\)
\end{enumerate}
This is a special case of (2a) with \(\beta=\hbar^2\). It has four additional third-order integrals
\begin{align*}
    X^{(1)}&=\{m_{12},\{m_{12},m_{12}\}\}-\hbar^2q_2\left\{p_1,2+\frac{3q_2^2}{q_1^2}\right\}+\frac{\hbar^2}{q_1}\left\{p_2,2q_1^2+3q_2^2\right\}
\end{align*}
and \((p_2,X^{(1)}),(p_2,(p_2,X^{(1)})),(p_2,(p_2,(p_2,X^{(1)})))\). There are four more fourth-order integrals 
\[\{p_2,X^{(1)}\},\{p_2,(p_2,X^{(1)})\},\{p_2,(p_2,(p_2,X^{(1)}))\},\{p_2,(p_2,(p_2,(p_2,X^{(1)})))\}\]
\begin{enumerate}
\item[(2b)] \(\displaystyle H=\tfrac{1}{2}(p_1^2+p_2^2)+\frac{\beta}{q_1^2}+\frac{\hbar^2}{q_2^2}\)
\end{enumerate}
This is a special case of (2) with \(\beta_1=0,\beta_2=\beta,\beta_3=\hbar^2\). It has two additional third-order integrals
\begin{align*}
    X^{(1)}&=\{p_2,\{m_{12},m_{12}\}\}-\frac{2\hbar^2}{q_2}\{p_1,q_1\}+\left\{p_2,\frac{\hbar^2}{2}+\frac{3\hbar^2q_1^2}{q_2^2}+\frac{2\beta q_2^2}{q_1^2}\right\}\\
    X^{(2)}&=p_2^3+3\hbar^2\left\{p_2,\frac{1}{q_2^2}\right\}
\end{align*}
There are two more fourth-order integrals \((X,X^{(1)}),(X,X^{(2)})\).
\begin{enumerate}
\item[(2bi)] \(\displaystyle H=\tfrac{1}{2}(p_1^2+p_2^2)+\frac{\hbar^2}{q_1^2}+\frac{\hbar^2}{q_2^2}\)
\end{enumerate}
This is a special case of (2b) with \(\beta=\hbar^2\). There are three more third-order integrals
\begin{align*}
    X^{(3)}&=\{m_{12},\{m_{12},m_{12}\}\}-\frac{\hbar^2}{q_2}\left\{p_1,3q_1^2+2q_2^2+\frac{3q_2^4}{q_1^2}\right\}+\frac{\hbar^2}{q_1}\left\{p_2,3q_2^2+2q_1^2+\frac{3q_1^4}{q_2^2}\right\}\\
X^{(4)}&=\{p_1,\{m_{12},m_{12}\}\}+\hbar^2\left\{p_1,\frac{1}{2}+\frac{3\hbar^2q_2^2}{q_1^2}+\frac{2q_1^2}{q_2^2}\right\}-\frac{2\hbar^2}{q_1}\{p_2,q_2\}\\
    X^{(5)}&=p_1^3+3\hbar^2\left\{p_1,\frac{1}{q_1^2}\right\}
\end{align*}
There are consequently three more fourth-order integrals \((X,X^{(4)}),(X,X^{(5)}),(H_1-H_2,X^{(3)})\).
\begin{enumerate}
\item[(2c)] \(\displaystyle H=\tfrac{1}{2}(p_1^2+p_2^2)+\beta(q_1^2+q_2^2)+\frac{\hbar^2}{q_1^2}\)
\end{enumerate}
This is a special case of (2) with \(\beta_1=\beta,\beta_2=\hbar^2,\beta_3=0\). There are two more third-order integrals
\begin{align*}
    X^{(1)}&=\{m_{12},\{m_{12},m_{12}\}\}-\hbar^2q_2\left\{p_1,2+\frac{3q_2^2}{q_1^2}\right\}+\frac{\hbar^2}{q_1}\left\{p_2,2q_1^2+3q_2^2\right\}\\
    X^{(2)}&=\{p_1^2,m_{12}\}-q_2\left\{p_1,2\beta q_1^2+\frac{3\hbar^2}{q_1^2}\right\}+\left(2\beta q_1^3+\frac{\hbar^2}{q_1}\right)p_2
\end{align*}
and two more fourth-order integrals \((H_1-H_2,X^{(1)}),(H_1-H_2,X^{(2)})\).
\begin{enumerate}
    \item[(2d)] \(\displaystyle H=\tfrac{1}{2}(p_1^2+p_2^2)+\beta(q_1^2+q_2^2)+\frac{\hbar^2}{q_1^2}+\frac{\hbar^2}{q_2^2}\)
    \end{enumerate}
This is a special case of (2) with \(\beta_1=\beta,\beta_2=\hbar^2,\beta_3=\hbar^2\). There is one more third-order integral
\[X^{(1)}=\{m_{12},\{m_{12},m_{12}\}\}-\frac{\hbar^2}{q_2}\left\{p_1,3q_1^2+2q_2^2+\frac{3q_2^4}{q_1^2}\right\}+\frac{\hbar^2}{q_1}\left\{p_2,2q_1^2+3q_2^2+\frac{3q_1^4}{q_2^2}\right\}\]
and one more fourth-order integral \((H_1-H_2,X^{(1)})\).
\begin{enumerate}
    \item[(3)] \(\displaystyle H=\tfrac{1}{2}(p_1^2+p_2^2)+\beta_1(q_1^2+4q_2^2)+\frac{\beta_2}{q_1^2}+\beta_3q_2\)
\end{enumerate}
There is one second-order integral
\[X=\{p_1,m_{12}\}-\frac{2\beta_2q_2}{q_1^2}+\tfrac{1}{2}q_1^2(4q_2\beta_1+\beta_3)\]
There is one third-order integral \((H_1-H_2,X)\) and three fourth-order integrals 
\[\{X,X\},\{H_1,X\},\{H_2,X\}.\]
\begin{enumerate}
    \item[(3a)] \(\displaystyle H=\tfrac{1}{2}(p_1^2+p_2^2)+\beta q_2\)
\end{enumerate}
This is a special case of (3) with \(\beta_1=\beta_2=0,\beta_3=\beta\). It has a first-order integral \(p_1\). It has another second-order integral \((p_1,X)\). There are three more third-order integrals are fourfold
\[p_1^3,\{X,p_1\},\{(p_1,X),p_1\}.\]
There are three more fourth-order integrals
\[\{(p_1,X),X\},\{(p_1,X),H_1\},\{(p_1,X),H_2\}\]
\begin{enumerate}
    \item[(3b)] \(H=\tfrac{1}{2}(p_1^2+p_2^2)+\dfrac{\hbar^2}{q_1^2}+\beta q_2\)
    \end{enumerate}
This is a special case of (3) with \(\beta_1=0,\beta_2=\hbar^2,\beta_3=\beta\). There are two more third-order integrals
\begin{align*}
X^{(1)}&=\{p_1^2,m_{12}\}+\left\{p_1,\tfrac{1}{2}\beta q_1^2-\frac{3\hbar^2q_2}{q_1^2}\right\}+\frac{\hbar^2p_2}{q_1}\\
X^{(2)}&=p_1^3+3\hbar^2\left\{p_1,\frac{1}{q_1^2}\right\}
\end{align*}
and two more fourth-order integrals \((H_1-H_2,X^{(1)}),(X,X^{(1)})\).
   \begin{enumerate}
\item[(4)] \(\displaystyle H=\tfrac{1}{2}(p_1^2+p_2^2)+\beta_1\left(q_1^2+q_2^2+\frac{\beta_2^2}{q_1^2}\right)+\frac{\beta_3}{q_2^2}+\hbar^2\left[\frac{2(q_1^2+\beta_2)}{(q_1^2-\beta_2)^2}-\frac{1}{8q_1^2}\right]\)
\end{enumerate}
There are two fourth-order integrals
\[\begin{aligned}
    X^{(1)}&=\{\{m_{12},m_{12}\},\{m_{12},m_{12}\}-2\beta_2p_2^2\}+\left\{p_1^2,-\beta _2 \hbar ^2+\frac{16 \beta _2 q_2^4 \hbar ^2}{\left(q_1^2-\beta _2\right){}^2}+\frac{8 q_2^4 \hbar ^2}{q_1^2-\beta _2}\right.\\
    &\qquad\left.+\frac{8 \beta _1 \beta _2^2 q_2^4-q_2^4 \hbar ^2}{2
   q_1^2}-4 \beta _1 \beta _2 q_2^4+4 \beta _3 q_1^2+5 q_2^2 \hbar ^2\right\}+\left\{p_1p_2,-\frac{48 \beta _2 q_2^3 q_1 \hbar ^2}{\left(q_1^2-\beta _2\right){}^2}\right.\\
   &\qquad\left.+\frac{2 q_1 \left(4 \beta _2 \beta _3+4 \beta _1 \beta _2 q_2^4-5 q_2^2 \hbar ^2\right)}{q_2}-\frac{16 q_2^3
   q_1 \hbar ^2}{q_1^2-\beta _2}+\frac{q_2^3 \hbar ^2-8 \beta _1 \beta _2^2 q_2^3}{q_1}-\frac{8 \beta _3 q_1^3}{q_2}\right\}\\
   &\qquad+\left\{p_2^2,-\frac{q_1^2 \left(8 \beta _2 \beta _3+4 \beta _1 \beta _2 q_2^4-5 q_2^2 \hbar ^2\right)}{q_2^2}+\frac{32 \beta _2^2 q_2^2 \hbar ^2}{\left(q_1^2-\beta _2\right){}^2}+\frac{40 \beta _2
   q_2^2 \hbar ^2}{q_1^2-\beta _2}\right.\\
   &\qquad\left.+\frac{1}{2} \left(-2 \beta _2 \hbar ^2+8 \beta _1 \beta _2^2 q_2^2+15 q_2^2 \hbar ^2\right)+\frac{\beta _2 q_2^2 \hbar ^2-8 \beta _1 \beta _2^3
   q_2^2}{2 q_1^2}+\frac{4 \beta _3 q_1^4}{q_2^2}\right\}\\
   &\qquad+\frac{64 \beta _2^2 q_2^4 \hbar ^4}{\left(q_1^2-\beta _2\right){}^4}+\frac{16 q_2^2 \left(9 \beta _2^2 \hbar ^4+4 \beta _2 q_2^2 \hbar ^4\right)}{\left(q_1^2-\beta
   _2\right){}^3}-\frac{2 q_1^2 \left(4 \beta _2 \beta _3^2+\beta _1 \beta _2 q_2^4 \hbar ^2-3 \beta _3 q_2^2 \hbar ^2\right)}{q_2^4}\\
   &\qquad+\frac{\hbar ^2 \left(-8 \beta _2 \beta _3+24
   \beta _1 \beta _2 q_2^4+8 \beta _1 \beta _2^2 q_2^2+8 \beta _3 q_2^2+q_2^2 \hbar ^2\right)}{4 q_2^2}\\
   &\qquad+\frac{4 \left(4 \beta _2^2 \hbar ^4+8 \beta _2^2 \beta _3 \hbar ^2+33 \beta _2
   q_2^2 \hbar ^4+3 q_2^4 \hbar ^4\right)}{\left(q_1^2-\beta _2\right){}^2}\\
   &\qquad+\frac{2 \left(10 \beta _2^2 \hbar ^4+16 \beta _2^2 \beta _3 \hbar ^2+6 \beta _2 q_2^2 \hbar ^4-8 \beta _1
   \beta _2^2 q_2^4 \hbar ^2+q_2^4 \hbar ^4\right)}{\beta _2 \left(q_1^2-\beta _2\right)}\\
   &\qquad+\frac{\beta _2^2 \hbar ^4-8 \beta _1 \beta _2^4 \hbar ^2-3 \beta _2 q_2^2 \hbar ^4+24 \beta
   _1 \beta _2^3 q_2^2 \hbar ^2+68 \beta _1 \beta _2^2 q_2^4 \hbar ^2-32 \beta _1^2 \beta _2^4 q_2^4-8 q_2^4 \hbar ^4}{4 \beta _2 q_1^2}\\
   &\qquad+\frac{-16 \beta _1 \beta _2^2 q_2^4 \hbar
   ^2+64 \beta _1^2 \beta _2^4 q_2^4+q_2^4 \hbar ^4}{16 q_1^4}+\frac{4 \beta _3^2 q_1^4}{q_2^4}
\end{aligned}\]
\[\begin{aligned}X^{(2)}&=\{p_1^2,\{m_{12},m_{12}\}\}+\left\{p_1^2,\frac{16 \beta _2 q_2^2 \hbar ^2}{\left(q_1^2-\beta _2\right){}^2}+\frac{1}{2} \left(\hbar ^2-8 \beta _1 \beta _2 q_2^2\right)+\frac{8 q_2^2 \hbar ^2}{q_1^2-\beta _2}\right.\\
&\qquad\left.+\frac{8 \beta _1
   \beta _2^2 q_2^2-q_2^2 \hbar ^2}{2 q_1^2}+\frac{2 q_1^2 \left(\beta _3+\beta _1 q_2^4\right)}{q_2^2}\right\}+\left\{p_1p_2,-\frac{24 \beta _2 q_2 q_1 \hbar ^2}{\left(q_1^2-\beta _2\right){}^2}\right.\\
   &\qquad\left.-\frac{8 q_2 q_1 \hbar ^2}{q_1^2-\beta _2}+\frac{q_2 \hbar ^2-8 \beta _1 \beta _2^2 q_2}{2 q_1}-4 \beta _1 q_2
   q_1^3+8 \beta _1 \beta _2 q_2 q_1\right\}\\
   &\qquad+\left\{p_2^2,\frac{8 \beta _2^2 \hbar ^2}{\left(q_1^2-\beta _2\right){}^2}+\frac{8 \beta _2 \hbar ^2}{q_1^2-\beta _2}+2 \beta _1 q_1^4-4 \beta _1 \beta _2 q_1^2+\frac{\hbar ^2}{2}\right\}\\
   &\qquad+\frac{64 \beta _2^2 q_2^2 \hbar ^4}{\left(q_1^2-\beta _2\right){}^4}+\frac{8 \left(3 \beta _2^2 \hbar ^4+8 \beta _2 q_2^2 \hbar ^4\right)}{\left(q_1^2-\beta _2\right){}^3}+\frac{\hbar
   ^2 \left(\beta _3+\beta _1 q_2^4+2 \beta _1 \beta _2 q_2^2\right)}{q_2^2}\\
   &\qquad-\frac{\beta _1 q_1^2 \left(8 \beta _2 \beta _3+3 q_2^2 \hbar ^2\right)}{q_2^2}+\frac{-\beta _2 \hbar ^4+8
   \beta _1 \beta _2^3 \hbar ^2+136 \beta _1 \beta _2^2 q_2^2 \hbar ^2-64 \beta _1^2 \beta _2^4 q_2^2-16 q_2^2 \hbar ^4}{8 \beta _2 q_1^2}\\
   &\qquad+\frac{-16 \beta _1 \beta _2^2 q_2^2 \hbar
   ^2+64 \beta _1^2 \beta _2^4 q_2^2+q_2^2 \hbar ^4}{16 q_1^4}+\frac{2 \left(8 \beta _2^2 \beta _3 \hbar ^2+\beta _2 q_2^2 \hbar ^4+q_2^4 \hbar ^4\right)}{\beta _2 q_2^2
   \left(q_1^2-\beta _2\right)}\\
   &\qquad+\frac{2 \left(8 \beta _2^2 \beta _3 \hbar ^2+11 \beta _2 q_2^2 \hbar ^4+8 \beta _1 \beta _2^2 q_2^4 \hbar ^2+6 q_2^4 \hbar ^4\right)}{q_2^2
   \left(q_1^2-\beta _2\right){}^2}+\frac{4 \beta _1 \beta _3 q_1^4}{q_2^2}\end{aligned}\]
\begin{enumerate}
    \item[(4a)] \(\displaystyle H=\tfrac{1}{2}(p_1^2+p_2^2)+\hbar^2\left[\frac{q_1^2+q_2^2}{8\beta^2}+\frac{2(q_1^2+\beta)}{(q_1^2-\beta)^2}\right]\)
    \end{enumerate}
This is a special case of (4) where \(\beta_1=\dfrac{\hbar^2}{8\beta^2},\beta_2=\beta,\beta_3=0\). There are two more third-order integrals
\[\begin{aligned}
    X^{(3)}&=\{m_{12},\{m_{12},m_{12}\}-3\beta p_2^2\}+\hbar^2\left\{p_1,-\frac{6 q_2^3}{q_1^2-\beta }-\frac{12 \beta  q_2^3}{\left(q_1^2-\beta \right){}^2}-\frac{q_2 \left(8 \beta -3 q_2^2\right)}{4 \beta }\right\}\\
    &\qquad+\hbar^2\left\{p_2,\frac{6 q_1 q_2^2}{q_1^2-\beta }+\frac{24 \beta  q_1 q_2^2}{\left(q_1^2-\beta \right){}^2}+\frac{q_1 \left(8 \beta -3 q_2^2\right)}{4 \beta }\right\}
\end{aligned}\]
\[\begin{aligned}
    X^{(4)}&=\{p_1^2,m_{12}\}+\hbar^2\left\{p_1,-\frac{q_2 q_1^2}{4 \beta ^2}+\frac{q_2}{\beta }-\frac{6 q_2}{q_1^2-\beta }-\frac{12 \beta  q_2}{\left(q_1^2-\beta \right){}^2}\right\}\\
    &\qquad+\hbar^2\left\{p_2,\frac{q_1^3}{4 \beta ^2}-\frac{q_1}{\beta }+\frac{2 q_1}{q_1^2-\beta }+\frac{8 \beta  q_1}{\left(q_1^2-\beta \right){}^2}\right\}
\end{aligned}\]
and two more fourth-order integrals \((H_1-H_2,X^{(3)}),(H_1-H_2,X^{(4)})\).
\begin{enumerate}
    \item[(4b)] \(\displaystyle H=\tfrac{1}{2}(p_1^2+p_2^2)+\hbar^2\left[\frac{q_1^2+q_2^2}{8\beta^2}+\frac{2(q_1^2+\beta)}{(q_1^2-\beta)^2}+\frac{1}{q_2^2}\right]\)
    \end{enumerate}
This is a special case of (4) where \(\beta_1=\dfrac{\hbar^2}{8\beta^2},\beta_2=\beta,\beta_3=\hbar^2\). There is a third-order integral 
\[\begin{aligned}
    X^{(3)}&=\{m_{12},\{m_{12},m_{12}\}-3\beta p_2^2\}+\hbar^2\left\{p_1,\frac{12 \beta ^2-8 \beta  q_2^2+3 q_2^4}{4 \beta  q_2}-\frac{6 q_2^3}{q_1^2-\beta }-\frac{12 \beta  q_2^3}{\left(q_1^2-\beta \right){}^2}-\frac{3 q_1^2}{q_2}\right\}\\
    &\qquad+\hbar^2\left\{p_2,-\frac{q_1 \left(36 \beta ^2-8 \beta  q_2^2+3 q_2^4\right)}{4 \beta  q_2^2}+\frac{6 q_2^2 q_1}{q_1^2-\beta }+\frac{24 \beta  q_2^2 q_1}{\left(q_1^2-\beta \right){}^2}+\frac{3
   q_1^3}{q_2^2}\right\}
\end{aligned}\]
and one more fourth-order integral \((H_1-H_2,X^{(3)})\)
\begin{enumerate}
    \item[(4c)] \(\displaystyle H=\tfrac{1}{2}(p_1^2+p_2^2)+\hbar^2\left[\frac{2(q_1^2+\beta)}{(q_1^2-\beta)^2}-\frac{1}{8q_1^2}\right]\)
\end{enumerate}
This is a special case of (4) where \(\beta_1=0,\beta_2=\beta,\beta_3=0\). There is a first-order integral \(p_2\), a third-order integral \(p_1^3\) and two additional fourth-order integrals \((p_2,X^{(1)}),(p_2,(p_2,(p_2,X^{(1)})))\).
\begin{enumerate}
\item[(4d)] \(\displaystyle H=\tfrac{1}{2}(p_1^2+p_2^2)+\hbar^2\left[\frac{2(q_1^2+\beta)}{(q_1^2-\beta)^2}-\frac{1}{8q_1^2}+\frac{1}{q_2^2}\right]\)
\end{enumerate}
This is a special case of (4) where \(\beta_1=0,\beta_2=\beta,\beta_3=\hbar^2\). There is a third-order integral
\[X^{(3)}=p_2^3+3\hbar^2\left\{p_2,\frac{1}{q_2^2}\right\}\]
and one more fourth-order integral
\[\begin{aligned}
    X^{(4)}&=\{p_1,\{m_{12},\{m_{12},m_{12}\}-\beta p_2^2\}\}+\hbar^2\left\{p_1^2,-\frac{8 q_2^3}{q_1^2-\beta }-\frac{16 \beta  q_2^3}{\left(q_1^2-\beta \right){}^2}\right.\\
    &\qquad\left.-\frac{2 q_2^2-\beta }{q_2}+\frac{q_2^3}{2 q_1^2}-\frac{3 q_1^2}{q_2}\right\}+\hbar^2\left\{p_1p_2,\frac{12 q_2^2 q_1}{q_1^2-\beta }+\frac{36 \beta  q_2^2 q_1}{\left(q_1^2-\beta \right){}^2}\right.\\
    &\qquad\left.-\frac{q_1 \left(3 \beta -2 q_2^2\right)}{q_2^2}+\frac{3 q_1^3}{q_2^2}-\frac{3 q_2^2}{4 q_1}\right\}+\hbar^2\left\{p_2^2,-\frac{16 \beta ^2 q_2}{\left(q_1^2-\beta \right){}^2}\right.\\
    &\qquad\left.-\frac{20 \beta  q_2}{q_1^2-\beta }-\frac{\beta  q_2}{4 q_1^2}-\frac{15 q_2}{4}\right\}+\hbar^4\left[-\frac{64 \beta ^2 q_2^3}{\left(q_1^2-\beta \right){}^4}-\frac{8 q_2 \left(9 \beta ^2+8 \beta  q_2^2\right)}{\left(q_1^2-\beta \right){}^3}\right.\\
    &\qquad-\frac{2 \left(10 \beta ^2+3 \beta 
   q_2^2+q_2^4\right)}{\beta  q_2 \left(q_1^2-\beta \right)}-\frac{2 \left(8 \beta ^2+33 \beta  q_2^2+6 q_2^4\right)}{q_2 \left(q_1^2-\beta \right){}^2}+\frac{-2 \beta ^2+3 \beta 
   q_2^2+16 q_2^4}{8 \beta  q_1^2 q_2}\\
   &\qquad\left.+\frac{3 \left(2 \beta -7 q_2^2\right)}{4 q_2^3}-\frac{q_2^3}{16 q_1^4}-\frac{9 q_1^2}{2 q_2^3}\right]
\end{aligned}\]
\begin{enumerate}
    \item[(4e)] \(\displaystyle H=\tfrac{1}{2}(p_1^2+p_2^2)+\hbar^2\left[\frac{9(q_1^2+q_2^2)}{8\beta^2}+\frac{2(q_1^2+\beta)}{(q_1^2-\beta)^2}+\frac{1}{q_1^2}\right]\)
\end{enumerate}
This is a special case of (4) where \(\beta_1=\dfrac{9\hbar^2}{8\beta^2},\beta_2=\beta,\beta_3=0\). There is one third-order integral
\begin{align*}
X^{(3)}&=\{p_1^2,m_{12}\}+\hbar^2q_2\left\{p_1,\frac{6}{\beta}-\frac{3}{q_1^2}-\frac{9q_1^2}{4\beta^2}-\frac{6(q_1^2+\beta)}{(q_1^2-\beta)^2}\right\}\\
&\qquad+\hbar^2p_2\left[\frac{1}{q_1}-\frac{6q_1}{\beta}+\frac{9q_1^3}{4\beta^2}+\frac{2q_1(q_1^2+3\beta)}{(q_1^2-\beta)^2}\right]
\end{align*}
and one more fourth-order integral \((H_1-H_2,X^{(3)})\).
\begin{enumerate}
\item[(5)] \(\displaystyle H=\tfrac{1}{2}(p_1^2+p_2^2)+\beta_1\left(q_1^2+q_2^2+\frac{\beta_2^2}{q_1^2}+\frac{\beta_3^2}{q_2^2}\right)+\hbar^2\left[\frac{2(q_1^2+\beta_2)}{(q_1^2-\beta_2)^2}-\frac{1}{8q_1^2}+\frac{2(q_2^2+\beta_3)}{(q_2^2-\beta_3)^2}-\frac{1}{8q_2^2}\right]\)
\end{enumerate}
There is one fourth-order integral\footnotesize
\[\begin{aligned}
    X&=\{\{m_{12},m_{12}\},\{m_{12},m_{12}\}-2\beta_2p_2^2-2\beta_3p_1^2\}+\left\{p_1^2,\frac{8 q_2^2 \left(q_2^2 \hbar ^2-2 \beta _3 \hbar ^2\right)}{q_1^2-\beta _2}+\frac{16 q_2^2 \left(\beta _2 q_2^2 \hbar ^2-2 \beta _2 \beta _3 \hbar ^2\right)}{\left(q_1^2-\beta
   _2\right){}^2}\right.\\
   &\qquad+q_2^2 \left(8 \beta _1 \beta _2 \beta _3-4 \beta _1 \beta _3 q_1^2+5 \hbar ^2\right)+\frac{8 \left(5 \beta _3 q_1^2 \hbar ^2-2 \beta _2 \beta _3 \hbar
   ^2\right)}{q_2^2-\beta _3}+\frac{2 \beta _3 q_2^2 \hbar ^2+8 \beta _1 \beta _2^2 q_2^4-16 \beta _1 \beta _2^2 \beta _3 q_2^2+q_2^4 \left(-\hbar ^2\right)}{2 q_1^2}\\
   &\qquad\left.+\frac{1}{2}
   \left(-2 \beta _2 \hbar ^2-2 \beta _3 \hbar ^2+8 \beta _1 \beta _3^2 q_1^2+15 q_1^2 \hbar ^2\right)+\frac{16 \left(2 \beta _3^2 q_1^2 \hbar ^2-\beta _2 \beta _3^2 \hbar
   ^2\right)}{\left(q_2^2-\beta _3\right){}^2}+\frac{\beta _3 q_1^2 \hbar ^2-8 \beta _1 \beta _3^3 q_1^2}{2 q_2^2}-4 \beta _1 \beta _2 q_2^4\right\}\\
   &\qquad+\left\{p_1p_2,-2 q_1 q_2 \left(8 \beta _1 \beta _2 \beta _3-4 \beta _1 \beta _3 q_1^2+5 \hbar ^2\right)-\frac{16 \left(q_1 q_2^3 \hbar ^2-\beta _3 q_1 q_2 \hbar ^2\right)}{q_1^2-\beta _2}-\frac{48
   \left(\beta _2 q_1 q_2^3 \hbar ^2-\beta _2 \beta _3 q_1 q_2 \hbar ^2\right)}{\left(q_1^2-\beta _2\right){}^2}\right.\\
   &\qquad-\frac{48 \left(\beta _3 q_1^3 q_2 \hbar ^2-\beta _2 \beta _3 q_1 q_2
   \hbar ^2\right)}{\left(q_2^2-\beta _3\right){}^2}+\frac{-\beta _3 q_2 \hbar ^2-8 \beta _1 \beta _2^2 q_2^3+8 \beta _1 \beta _2^2 \beta _3 q_2+q_2^3 \hbar ^2}{q_1}-\frac{16
   \left(q_1^3 q_2 \hbar ^2-\beta _2 q_1 q_2 \hbar ^2\right)}{q_2^2-\beta _3}\\
   &\qquad\left.+\frac{-\beta _2 q_1 \hbar ^2-8 \beta _1 \beta _3^2 q_1^3+8 \beta _1 \beta _2 \beta _3^2 q_1+q_1^3 \hbar
   ^2}{q_2}+8 \beta _1 \beta _2 q_1 q_2^3\right\}+\left\{p_2^2,-\beta _2 \hbar ^2-\beta _3 \hbar ^2\right.\\
   &\qquad+\frac{1}{2} q_2^2 \left(8 \beta _1 \beta _2^2-8 \beta _1 \beta _2 q_1^2+15 \hbar ^2\right)+\frac{8 \left(5 \beta _2 q_2^2 \hbar ^2-2 \beta _2 \beta
   _3 \hbar ^2\right)}{q_1^2-\beta _2}+\frac{16 \left(\beta _3 q_1^4 \hbar ^2-2 \beta _2 \beta _3 q_1^2 \hbar ^2\right)}{\left(q_2^2-\beta _3\right){}^2}\\
   &\qquad+\frac{16 \left(2 \beta _2^2
   q_2^2 \hbar ^2-\beta _2^2 \beta _3 \hbar ^2\right)}{\left(q_1^2-\beta _2\right){}^2}+\frac{2 \beta _2 q_1^2 \hbar ^2+8 \beta _1 \beta _3^2 q_1^4-16 \beta _1 \beta _2 \beta _3^2
   q_1^2+q_1^4 \left(-\hbar ^2\right)}{2 q_2^2}+\frac{8 \left(q_1^4 \hbar ^2-2 \beta _2 q_1^2 \hbar ^2\right)}{q_2^2-\beta _3}\\
   &\qquad\left.+\frac{\beta _2 q_2^2 \hbar ^2-8 \beta _1 \beta _2^3
   q_2^2}{2 q_1^2}-4 \beta _1 \beta _3 q_1^4+8 \beta _1 \beta _2 \beta _3 q_1^2+5 q_1^2 \hbar ^2\right\}+\frac{2475 \beta _2 \beta _3^2 \hbar ^4}{4 \left(q_1^2-\beta _2\right){}^3}+q_2^4 \left(\frac{4 \beta _1^2 \beta _2^6}{\left(q_1^2-\beta _2\right){}^4}\right.\\
   &\qquad-\frac{8 \beta _1^2 \beta
   _2^5}{\left(q_1^2-\beta _2\right){}^3}+\frac{4 \beta _1^2 \beta _2^4}{q_1^4}+\frac{12 \beta _1^2 \beta _2^4}{\left(q_1^2-\beta _2\right){}^2}-\frac{\hbar ^2 \beta _1 \beta
   _2^4}{\left(q_1^2-\beta _2\right){}^4}-\frac{16 \beta _1^2 \beta _2^3}{q_1^2-\beta _2}+\frac{2 \hbar ^2 \beta _1 \beta _2^3}{\left(q_1^2-\beta _2\right){}^3}-\frac{\hbar ^2 \beta _1
   \beta _2^2}{q_1^4}\\
   &\qquad-\frac{3 \hbar ^2 \beta _1 \beta _2^2}{\left(q_1^2-\beta _2\right){}^2}+\frac{64 \hbar ^4 \beta _2^2}{\left(q_1^2-\beta _2\right){}^4}+\frac{4 \hbar ^2 \beta _1
   \beta _2}{q_1^2-\beta _2}+\frac{64 \hbar ^4 \beta _2}{\left(q_1^2-\beta _2\right){}^3}+\frac{17 \hbar ^2 \beta _1 \beta _2-8 \beta _1^2 \beta _2^3}{q_1^2}+\frac{3 \left(8 \beta _1^2
   \beta _2^3-7 \hbar ^2 \beta _1 \beta _2\right)}{q_1^2-\beta _2}\\
   &\qquad-\frac{8 \beta _1^2 \beta _2^3-\hbar ^2 \beta _1 \beta _2}{q_1^2-\beta _2}+\frac{6 \left(2 \beta _1^2 \beta _2^4-3
   \hbar ^2 \beta _1 \beta _2^2\right)}{\left(q_1^2-\beta _2\right){}^2}-\frac{3 \left(8 \beta _1^2 \beta _2^4-7 \hbar ^2 \beta _1 \beta _2^2\right)}{\left(q_1^2-\beta
   _2\right){}^2}+\frac{3 \left(8 \beta _1^2 \beta _2^5-7 \hbar ^2 \beta _1 \beta _2^3\right)}{\left(q_1^2-\beta _2\right){}^3}\\
   &\qquad-\frac{16 \beta _1^2 \beta _2^5-19 \hbar ^2 \beta _1
   \beta _2^3}{\left(q_1^2-\beta _2\right){}^3}-\frac{3 \left(8 \beta _1^2 \beta _2^6-7 \hbar ^2 \beta _1 \beta _2^4\right)}{\left(q_1^2-\beta _2\right){}^4}-\frac{2 \left(18 \beta
   _1^2 \beta _2^6-19 \hbar ^2 \beta _1 \beta _2^4\right)}{\left(q_1^2-\beta _2\right){}^4}+\frac{2 \left(28 \beta _1^2 \beta _2^6-29 \hbar ^2 \beta _1 \beta
   _2^4\right)}{\left(q_1^2-\beta _2\right){}^4}\\
   &\qquad\left.+\frac{12 \hbar ^4}{\left(q_1^2-\beta _2\right){}^2}+\frac{\hbar ^4}{16 q_1^4}+\frac{2 \hbar ^4}{\left(q_1^2-\beta _2\right) \beta
   _2}-\frac{2 \hbar ^4}{q_1^2 \beta _2}\right)+\frac{3 \left(107 \hbar ^4 \beta _2-5 \hbar ^4 \beta _3\right)}{4 \left(q_1^2-\beta _2\right)}+\frac{-\beta _2 \hbar ^4-\beta _3 \hbar
   ^4}{4 \left(q_1^2-\beta _2\right)}\\
   &\qquad+\frac{\beta _2 \hbar ^4+\beta _3 \hbar ^4}{4 q_1^2}+\frac{3 \left(101 \hbar ^4 \beta _2^2-59 \hbar ^4 \beta _2 \beta _3\right)}{4
   \left(q_1^2-\beta _2\right){}^2}+\frac{2 \left(\hbar ^2 \beta _1 \beta _2^3+\hbar ^2 \beta _1 \beta _3 \beta _2^2\right)}{q_1^2-\beta _2}-\frac{2 \left(\hbar ^2 \beta _1 \beta
   _2^3+\hbar ^2 \beta _1 \beta _3 \beta _2^2\right)}{q_1^2}\\
   &\qquad+\frac{3 \hbar ^4 \beta _2^4-29 \hbar ^4 \beta _2^3 \beta _3}{4 \left(q_1^2-\beta _2\right){}^4}-\frac{2 \left(\hbar ^2
   \beta _1 \beta _2^5+\hbar ^2 \beta _1 \beta _3 \beta _2^4\right)}{\left(q_1^2-\beta _2\right){}^3}+\frac{\beta _2^2 \hbar ^4-72 \beta _3^2 \hbar ^4+\beta _2 \beta _3 \hbar ^4}{4
   \left(q_1^2-\beta _2\right){}^2}\\
   &\qquad-\frac{2 \left(-9 \beta _3^2 \hbar ^4+\beta _1 \beta _2^4 \hbar ^2+\beta _1 \beta _2^3 \beta _3 \hbar ^2\right)}{\left(q_1^2-\beta _2\right){}^2}+2
   \left(2 \hbar ^4+\beta _1 \beta _2^2 \hbar ^2+\beta _1 \beta _3^2 \hbar ^2-q_1^2 \beta _1 \beta _2 \hbar ^2+3 q_1^2 \beta _1 \beta _3 \hbar ^2-2 \beta _1 \beta _2 \beta _3 \hbar
   ^2\right)\\
   &\qquad+\frac{\beta _2^3 \hbar ^4-2475 \beta _2 \beta _3^2 \hbar ^4-191 \beta _2^2 \beta _3 \hbar ^4}{4 \left(q_1^2-\beta _2\right){}^3}+\frac{2 \left(\hbar ^2 \beta _1 \beta
   _2^5+\hbar ^2 \beta _1 \beta _3 \beta _2^4-9 \hbar ^4 \beta _3^2 \beta _2\right)}{\left(q_1^2-\beta _2\right){}^3}+\frac{-\beta _2^3 \hbar ^4+72 \beta _2 \beta _3^2 \hbar ^4-\beta _2^2
   \beta _3 \hbar ^4}{4 \left(q_1^2-\beta _2\right){}^3}\\
   &\qquad+\frac{16 \left(9 q_1^2 \beta _3^2 \hbar ^4-3 \beta _2 \beta _3^2
   \hbar ^4+4 q_1^4 \beta _3 \hbar ^4-8 q_1^2 \beta _2 \beta _3 \hbar ^4\right)}{\left(q_2^2-\beta _3\right){}^3}-\frac{2 \left(\hbar ^2 \beta _1 \beta _2^3+\hbar ^2 \beta _1 \beta _3 \beta _2^2-16 \hbar ^2 \beta _1 \beta _3^2 \beta
   _2\right)}{q_1^2-\beta _2}\\
   &\qquad+\frac{\beta _2^4 \hbar ^4-72 \beta _2^2 \beta _3^2 \hbar ^4+\beta _2^3 \beta _3 \hbar ^4}{4 \left(q_1^2-\beta _2\right){}^4}-\frac{2 \left(\hbar ^2 \beta
   _1 \beta _2^6+\hbar ^2 \beta _1 \beta _3 \beta _2^5-9 \hbar ^4 \beta _3^2 \beta _2^2\right)}{\left(q_1^2-\beta _2\right){}^4}+\frac{2 \left(\hbar ^2 \beta _1 \beta _2^4+\hbar ^2
   \beta _1 \beta _3 \beta _2^3+16 \hbar ^2 \beta _1 \beta _3^2 \beta _2^2\right)}{\left(q_1^2-\beta _2\right){}^2}
   \end{aligned}\]
   \[\begin{aligned}&-\frac{2 \left(4 \hbar ^2 \beta _1 \beta _2^6+\hbar ^2 \beta _1 \beta
   _3 \beta _2^5+\hbar ^2 \beta _1 \beta _3^2 \beta _2^4\right)}{\left(q_1^2-\beta _2\right){}^4}+\frac{10 \hbar ^2 \beta _1 \beta _2^6+4 \hbar ^2 \beta _1 \beta _3 \beta _2^5-\hbar ^4
   \beta _2^4+2 \hbar ^2 \beta _1 \beta _3^2 \beta _2^4+7 \hbar ^4 \beta _3 \beta _2^3}{\left(q_1^2-\beta _2\right){}^4}\\
   &+\frac{q_1^4 \hbar ^4-2 q_1^2 \beta _2 \hbar ^4-16 q_1^4 \beta
   _1 \beta _3^2 \hbar ^2+32 q_1^2 \beta _1 \beta _2 \beta _3^2 \hbar ^2+64 q_1^4 \beta _1^2 \beta _3^4-128 q_1^2 \beta _1^2 \beta _2 \beta _3^4}{16 q_2^4}\\
   &+\frac{-8 \hbar ^4 q_1^8-32
   \beta _1^2 \beta _3^4 q_1^8+68 \hbar ^2 \beta _1 \beta _3^2 q_1^8+128 \beta _1^2 \beta _2 \beta _3^4 q_1^6+24 \hbar ^2 \beta _1 \beta _3^3 q_1^6-272 \hbar ^2 \beta _1 \beta _2 \beta
   _3^2 q_1^6+32 \hbar ^4 \beta _2 q_1^6-3 \hbar ^4 \beta _3 q_1^6}{4 q_2^2 \left(q_1^2-\beta _2\right){}^2 \beta _3}\\
   &+\frac{-160 \beta _1^2 \beta _2^2 \beta _3^4 q_1^4-8 \hbar ^2 \beta _1 \beta _3^4 q_1^4-56 \hbar ^2 \beta _1 \beta _2 \beta
   _3^3 q_1^4-40 \hbar ^4 \beta _2^2 q_1^4+\hbar ^4 \beta _3^2 q_1^4+340 \hbar ^2 \beta _1 \beta _2^2 \beta _3^2 q_1^4+7 \hbar ^4 \beta _2 \beta _3 q_1^4+64 \beta _1^2 \beta _2^3 \beta
   _3^4 q_1^2}{4 q_2^2 \left(q_1^2-\beta _2\right){}^2 \beta _3}\\
   &+\frac{-112 \hbar ^2 \beta _1 \beta _2 \beta _3^4 q_1^2+16 \hbar ^4 \beta _2^3 q_1^2+40 \hbar ^2 \beta _1 \beta _2^2 \beta _3^3 q_1^2-136 \hbar ^2 \beta _1 \beta _2^3 \beta _3^2
   q_1^2+14 \hbar ^4 \beta _2 \beta _3^2 q_1^2-5 \hbar ^4 \beta _2^2 \beta _3 q_1^2-8 \hbar ^2 \beta _1 \beta _2^2 \beta _3^4}{4 q_2^2 \left(q_1^2-\beta _2\right){}^2 \beta _3}\\
   &+\frac{-8 \hbar ^2 \beta _1 \beta _2^3 \beta _3^3+\hbar ^4 \beta
   _2^2 \beta _3^2+\hbar ^4 \beta _2^3 \beta _3}{4 q_2^2 \left(q_1^2-\beta _2\right){}^2 \beta _3}+q_2^2 \left(-\frac{8 \beta _1^2 \beta _3 \beta _2^6}{\left(q_1^2-\beta
   _2\right){}^4}+\frac{16 \beta _1^2 \beta _3 \beta _2^5}{\left(q_1^2-\beta _2\right){}^3}-\frac{8 \beta _1^2 \beta _3 \beta _2^4}{q_1^4}-\frac{24 \beta _1^2 \beta _3 \beta
   _2^4}{\left(q_1^2-\beta _2\right){}^2}+\frac{2 \hbar ^2 \beta _1 \beta _3 \beta _2^4}{\left(q_1^2-\beta _2\right){}^4}\right.\\
   &+\frac{32 \beta _1^2 \beta _3 \beta _2^3}{q_1^2-\beta
   _2}-\frac{4 \hbar ^2 \beta _1 \beta _3 \beta _2^3}{\left(q_1^2-\beta _2\right){}^3}+\frac{87 \hbar ^4 \beta _2^3}{4 \left(q_1^2-\beta _2\right){}^4}+\frac{2 \hbar ^2 \beta _1 \beta
   _3 \beta _2^2}{q_1^4}+\frac{6 \hbar ^2 \beta _1 \beta _3 \beta _2^2}{\left(q_1^2-\beta _2\right){}^2}-\frac{265 \hbar ^4 \beta _3 \beta _2^2}{2 \left(q_1^2-\beta
   _2\right){}^4}-\frac{8 \hbar ^2 \beta _1 \beta _3 \beta _2}{q_1^2-\beta _2}+\frac{991 \hbar ^4 \beta _3 \beta _2}{4 \left(q_1^2-\beta _2\right){}^3}\\
   &+\frac{531 \hbar ^4 \beta _2}{4
   \left(q_1^2-\beta _2\right){}^2}-\frac{57 \hbar ^4 \beta _3}{2 \left(q_1^2-\beta _2\right){}^2}-\frac{\hbar ^4 \beta _3}{8 q_1^4}-\frac{3 \hbar ^4 \left(\beta _2+6 \beta
   _3\right)}{4 \left(q_1^2-\beta _2\right){}^2}+2 \left(3 \hbar ^2 \beta _1 \beta _2-\hbar ^2 \beta _1 \beta _3\right)+\frac{3 \hbar ^4 \left(\beta _2^2+6 \beta _3 \beta _2\right)}{4
   \left(q_1^2-\beta _2\right){}^3}\\
   &+\frac{3 \left(191 \hbar ^4 \beta _2^2-495 \hbar ^4 \beta _2 \beta _3\right)}{4 \left(q_1^2-\beta _2\right){}^3}-\frac{3 \hbar ^4 \left(\beta _2^3+6
   \beta _3 \beta _2^2\right)}{4 \left(q_1^2-\beta _2\right){}^4}+\frac{2 \left(8 \beta _1^2 \beta _3 \beta _2^3+3 \hbar ^2 \beta _1 \beta _2^2-17 \hbar ^2 \beta _1 \beta _3 \beta
   _2\right)}{q_1^2-\beta _2}\\
   &+\frac{2 \left(8 \beta _1^2 \beta _3 \beta _2^3+3 \hbar ^2 \beta _1 \beta _2^2-17 \hbar ^2 \beta _1 \beta _3 \beta _2\right)}{q_1^2}-\frac{6 \left(8 \beta
   _1^2 \beta _3 \beta _2^3+\hbar ^2 \beta _1 \beta _2^2-7 \hbar ^2 \beta _1 \beta _3 \beta _2\right)}{q_1^2-\beta _2}-\frac{2 \left(12 \beta _1^2 \beta _3 \beta _2^4+3 \hbar ^2 \beta
   _1 \beta _2^3-2 \hbar ^2 \beta _1 \beta _3 \beta _2^2\right)}{\left(q_1^2-\beta _2\right){}^2}\\
   &+\frac{3 \left(3 \beta _3 \hbar ^4+2 \beta _1 \beta _2^3 \hbar ^2-14 \beta _1 \beta
   _2^2 \beta _3 \hbar ^2+16 \beta _1^2 \beta _2^4 \beta _3\right)}{\left(q_1^2-\beta _2\right){}^2}+\frac{2 \left(16 \beta _1^2 \beta _3 \beta _2^5+3 \hbar ^2 \beta _1 \beta _2^4-19
   \hbar ^2 \beta _1 \beta _3 \beta _2^3\right)}{\left(q_1^2-\beta _2\right){}^3}\\
   &-\frac{3 \left(16 \beta _1^2 \beta _3 \beta _2^5+2 \hbar ^2 \beta _1 \beta _2^4-14 \hbar ^2 \beta _1
   \beta _3 \beta _2^3+3 \hbar ^4 \beta _3 \beta _2\right)}{\left(q_1^2-\beta _2\right){}^3}+\frac{3 \left(16 \beta _1^2 \beta _3 \beta _2^6+2 \hbar ^2 \beta _1 \beta _2^5-14 \hbar ^2
   \beta _1 \beta _3 \beta _2^4+3 \hbar ^4 \beta _3 \beta _2^2\right)}{\left(q_1^2-\beta _2\right){}^4}\\
   &+\frac{2 \left(36 \beta _1^2 \beta _3 \beta _2^6+6 \hbar ^2 \beta _1 \beta
   _2^5-37 \hbar ^2 \beta _1 \beta _3 \beta _2^4\right)}{\left(q_1^2-\beta _2\right){}^4}-\frac{112 \beta _1^2 \beta _3 \beta _2^6+18 \hbar ^2 \beta _1 \beta _2^5-114 \hbar ^2 \beta _1
   \beta _3 \beta _2^4+21 \hbar ^4 \beta _2^3}{\left(q_1^2-\beta _2\right){}^4}+\frac{12 \hbar ^4}{q_1^2-\beta _2}-\frac{3 \hbar ^4}{4 q_1^2}\\
   &\left.-\frac{4 \hbar ^4 \beta
   _3}{\left(q_1^2-\beta _2\right) \beta _2}+\frac{4 \hbar ^4 \beta _3}{q_1^2 \beta _2}\right)+\frac{1}{q_2^2-\beta _3}\left(\frac{765 \beta _2^2 \beta _3^3 \hbar ^4}{\left(q_1^2-\beta
   _2\right){}^4}+\frac{2475 \beta _2 \beta _3^3 \hbar ^4}{2 \left(q_1^2-\beta _2\right){}^3}+\frac{225 \beta _3^3 \hbar ^4}{\left(q_1^2-\beta _2\right){}^2}+\frac{319 \beta _2^3 \beta
   _3^2 \hbar ^4}{2 \left(q_1^2-\beta _2\right){}^4}+\frac{2101 \beta _2^2 \beta _3^2 \hbar ^4}{2 \left(q_1^2-\beta _2\right){}^3}\right.\\
   &+\frac{1947 \beta _2 \beta _3^2 \hbar ^4}{2
   \left(q_1^2-\beta _2\right){}^2}+\frac{165 \beta _3^2 \hbar ^4}{2 \left(q_1^2-\beta _2\right)}-\frac{4 \beta _2 \beta _3 \hbar ^4}{q_1^2-\beta _2}+\frac{4 \beta _2 \beta _3 \hbar
   ^4}{q_1^2}+16 \beta _3 \hbar ^4+\frac{3 \left(214 \beta _2 \beta _3-55 \beta _3^2\right) \hbar ^4}{2 \left(q_1^2-\beta _2\right)}+\frac{\beta _2^2 \beta _3 \left(7 \beta _2^2+9
   \beta _3^2\right) \hbar ^4}{2 \left(q_1^2-\beta _2\right){}^4}\\
   &-\frac{\beta _2 \beta _3 \left(7 \beta _2^2+9 \beta _3^2\right) \hbar ^4}{2 \left(q_1^2-\beta
   _2\right){}^3}+\frac{\beta _3 \left(7 \beta _2^2+9 \beta _3^2\right) \hbar ^4}{2 \left(q_1^2-\beta _2\right){}^2}+\frac{3 \left(202 \beta _2^2 \beta _3-649 \beta _2 \beta
   _3^2\right) \hbar ^4}{2 \left(q_1^2-\beta _2\right){}^2}-\frac{q_1^4 \hbar ^4}{4 \beta _3}-\frac{4 q_1^2 \beta _2 \hbar ^4}{\beta _3}-\frac{32 \beta _1 \beta _2^3 \beta _3 \hbar
   ^2}{q_1^2}\\
   &+\frac{32 \beta _1 \beta _2^3 \beta _3 \hbar ^2}{q_1^2-\beta _2}+32 \beta _1 \beta _2^2 \beta _3 \hbar ^2-20 q_1^4 \beta _1 \beta _3 \hbar ^2+8 q_1^2 \beta _1 \beta _2
   \beta _3 \hbar ^2+16 q_1^4 \beta _1^2 \beta _3^3-32 q_1^2 \beta _1^2 \beta _2 \beta _3^3+\frac{-\hbar ^4 \beta _2^5-\hbar ^4 \beta _3 \beta _2^4}{4 \left(q_1^2-\beta _2\right){}^4}\\
   &+\frac{1}{4} \left(45 q_1^2 \hbar ^4-15 \beta _2 \hbar ^4+257 \beta _3 \hbar
   ^4\right)+\frac{\hbar ^4 \beta _2^5+\hbar ^4 \beta _3 \beta _2^4}{4 \left(q_1^2-\beta
   _2\right){}^4}+\frac{13 \hbar ^4 \beta _2^4 \beta _3-15 \hbar ^4 \beta _2^5}{4 \left(q_1^2-\beta _2\right){}^4}+\frac{9 q_1^4 \hbar ^4-\beta _3^2 \hbar ^4+3 q_1^2 \beta _3 \hbar
   ^4-\beta _2 \beta _3 \hbar ^4}{4 \beta _3}\\
   &+\frac{15 \hbar ^4 \beta _2^5-17 \hbar ^4 \beta _3 \beta _2^4-638 \hbar ^4 \beta _3^2 \beta _2^3}{4 \left(q_1^2-\beta
   _2\right){}^4}+\frac{\hbar ^4 \beta _2^2 \beta _3-63 \hbar ^4 \beta _3^3}{2 \left(q_1^2-\beta _2\right){}^2}+\frac{27 \hbar ^4 \beta _3^3-32 \hbar ^2 \beta _1 \beta _2^4 \beta
   _3}{\left(q_1^2-\beta _2\right){}^2}+\frac{-32 \hbar ^2 \beta _1 \beta _3 \beta _2^5+3 \hbar ^4 \beta _3 \beta _2^3-990 \hbar ^4 \beta _3^3 \beta _2}{\left(q_1^2-\beta
   _2\right){}^3}\\
   &+\frac{-495 \beta _2 \beta _3^3 \hbar ^4-2101 \beta _2^2 \beta _3^2 \hbar ^4+2 \beta _2^3 \beta _3 \hbar ^4}{2 \left(q_1^2-\beta _2\right){}^3}+\frac{32 \hbar ^2 \beta
   _1 \beta _2^5 \beta _3-27 \hbar ^4 \beta _2 \beta _3^3}{\left(q_1^2-\beta _2\right){}^3}+\frac{63 \hbar ^4 \beta _2 \beta _3^3-\hbar ^4 \beta _2^3 \beta _3}{2 \left(q_1^2-\beta
   _2\right){}^3}\\
   &+\frac{-57 \beta _2 \beta _3 \hbar ^4-32 \beta _1 \beta _2 \beta _3^3 \hbar ^2-32 \beta _1 \beta _2^3 \beta _3 \hbar ^2}{q_1^2-\beta _2}-\frac{4 \left(\hbar ^4 \beta
   _2 \beta _3-8 \hbar ^2 \beta _1 \beta _2 \beta _3^3\right)}{q_1^2-\beta _2}+\frac{\hbar ^4 \beta _2^4 \beta _3-63 \hbar ^4 \beta _2^2 \beta _3^3}{2 \left(q_1^2-\beta _2\right){}^4}\\
   &+4 \left(-4 \beta _1^2 \beta _3^3 q_1^4+\hbar ^2 \beta _1 \beta _3 q_1^4+8 \beta _1^2 \beta _2 \beta _3^3
   q_1^2-2 \hbar ^2 \beta _1 \beta _2 \beta _3 q_1^2\right)+\frac{27 \hbar ^4
   \beta _2^2 \beta _3^3-32 \hbar ^2 \beta _1 \beta _2^6 \beta _3}{\left(q_1^2-\beta _2\right){}^4}
   \end{aligned}\]
   \[\begin{aligned}&+\frac{-225 \beta _3^3 \hbar ^4-111 \beta _2^2 \beta _3 \hbar ^4-32 \beta _1 \beta
   _2^2 \beta _3^3 \hbar ^2+32 \beta _1 \beta _2^4 \beta _3 \hbar ^2}{\left(q_1^2-\beta _2\right){}^2}-\frac{4 \left(\hbar ^4 \beta _2^2 \beta _3-8 \hbar ^2 \beta _1 \beta _2^2 \beta
   _3^3\right)}{\left(q_1^2-\beta _2\right){}^2}\\
   &\left.+\frac{160 \hbar ^2 \beta _1 \beta _3 \beta _2^6-2 \hbar ^2 \beta _1 \beta _3^2 \beta _2^5-2 \hbar ^2 \beta _1 \beta _3^3 \beta _2^4-3
   \hbar ^4 \beta _3 \beta _2^4}{\left(q_1^2-\beta _2\right){}^4}+\frac{-128 \hbar ^2 \beta _1 \beta _3 \beta _2^6+2 \hbar ^2 \beta _1 \beta _3^2 \beta _2^5+2 \hbar ^2 \beta _1 \beta
   _3^3 \beta _2^4-765 \hbar ^4 \beta _3^3 \beta _2^2}{\left(q_1^2-\beta _2\right){}^4}\right)\\
   &+\frac{1}{\left(q_2^2-\beta _3\right){}^2}\left(\frac{57}{4} q_1^4 \hbar ^4+\frac{9033 \beta _2^2 \beta _3^4 \hbar ^4}{4
   \left(q_1^2-\beta _2\right){}^4}+\frac{1485 \beta _2 \beta _3^4 \hbar ^4}{\left(q_1^2-\beta _2\right){}^3}+\frac{675 \beta _3^4 \hbar ^4}{\left(q_1^2-\beta _2\right){}^2}-\frac{4
   \beta _2 \beta _3^2 \hbar ^4}{q_1^2-\beta _2}+\frac{4 \beta _2 \beta _3^2 \hbar ^4}{q_1^2}-117 \beta _3^2 \hbar ^4-24 q_1^2 \beta _2 \hbar ^4\right.\\
   &+99 q_1^2 \beta _3 \hbar ^4-33 \beta _2
   \beta _3 \hbar ^4+2 \beta _1 \beta _3^4 \hbar ^2-6 q_1^2 \beta _1 \beta _3^3 \hbar ^2+2 \beta _1 \beta _2 \beta _3^3 \hbar ^2-\frac{32 \beta _1 \beta _2^3 \beta _3^2 \hbar
   ^2}{q_1^2}+\frac{32 \beta _1 \beta _2^3 \beta _3^2 \hbar ^2}{q_1^2-\beta _2}+32 \beta _1 \beta _2^2 \beta _3^2 \hbar ^2\\
   &-18 q_1^4 \beta _1 \beta _3^2 \hbar ^2+4 q_1^2 \beta _1 \beta
   _2 \beta _3^2 \hbar ^2+12 q_1^4 \beta _1^2 \beta _3^4-24 q_1^2 \beta _1^2 \beta _2 \beta _3^4+\frac{1}{4} \left(-9 q_1^4 \hbar ^4+\beta _3^2 \hbar ^4-3 q_1^2 \beta _3 \hbar ^4+\beta
   _2 \beta _3 \hbar ^4\right)\\
   &+\frac{3}{4} \left(257 \beta _3^2 \hbar ^4+45 q_1^2 \beta _3 \hbar ^4-15 \beta _2 \beta _3 \hbar ^4\right)-\frac{4 \beta _3^2 \left(8 \hbar ^2 \beta _1
   \beta _2 \beta _3^2-\hbar ^4 \beta _2\right)}{q_1^2-\beta _2}-\frac{4 \beta _3^2 \left(8 \hbar ^2 \beta _1 \beta _2^2 \beta _3^2-\hbar ^4 \beta _2^2\right)}{\left(q_1^2-\beta
   _2\right){}^2}\\
   &-\frac{3 \left(79 \hbar ^4 \beta _2 \beta _3^2-120 \hbar ^4 \beta _3^3\right)}{2 \left(q_1^2-\beta _2\right)}+\frac{9 \left(107 \hbar ^4 \beta _2 \beta _3^2-40 \hbar
   ^4 \beta _3^3\right)}{2 \left(q_1^2-\beta _2\right)}-\frac{3 \left(\hbar ^4 \beta _2^2 \beta _3^2-1416 \hbar ^4 \beta _2 \beta _3^3\right)}{2 \left(q_1^2-\beta
   _2\right){}^2}+\frac{9 \left(101 \hbar ^4 \beta _2^2 \beta _3^2-472 \hbar ^4 \beta _2 \beta _3^3\right)}{2 \left(q_1^2-\beta _2\right){}^2}\\
   &-\beta _3 \left(24 \beta _1^2 \beta _3^3
   q_1^4-21 \hbar ^2 \beta _1 \beta _3 q_1^4-48 \beta _1^2 \beta _2 \beta _3^3 q_1^2-6 \hbar ^2 \beta _1 \beta _3^2 q_1^2+42 \hbar ^2 \beta _1 \beta _2 \beta _3 q_1^2+2 \hbar ^2 \beta
   _1 \beta _3^3+2 \hbar ^2 \beta _1 \beta _2 \beta _3^2\right)\\
   &\left.+\frac{4584 \hbar ^4 \beta _2^2 \beta _3^3-13 \hbar ^4 \beta _2^3 \beta _3^2}{2 \left(q_1^2-\beta _2\right){}^3}+\frac{3
   \left(15 \hbar ^4 \beta _3 \beta _2^5-23 \hbar ^4 \beta _3^2 \beta _2^4-464 \hbar ^4 \beta _3^3 \beta _2^3\right)}{4 \left(q_1^2-\beta _2\right){}^4}+\frac{696 \beta _2^3 \beta _3^3
   \hbar ^4+35 \beta _2^4 \beta _3^2 \hbar ^4}{2 \left(q_1^2-\beta _2\right){}^4}+\frac{\hbar ^4 \beta _2^2 \beta _3^2-27 \hbar ^4 \beta _3^4}{2 \left(q_1^2-\beta
   _2\right){}^2}\right.\\
   &+\frac{9 \beta _3^4 \hbar ^4+7 \beta _2^2 \beta _3^2 \hbar ^4}{2 \left(q_1^2-\beta _2\right){}^2}+\frac{9 \hbar ^4 \beta _3^4-32 \hbar ^2 \beta _1 \beta _2^4 \beta
   _3^2}{\left(q_1^2-\beta _2\right){}^2}+\frac{-32 \hbar ^2 \beta _1 \beta _3^2 \beta _2^5+9 \hbar ^4 \beta _3^2 \beta _2^3-2970 \hbar ^4 \beta _3^4 \beta _2}{\left(q_1^2-\beta
   _2\right){}^3}+\frac{-9 \beta _2 \beta _3^4 \hbar ^4-7 \beta _2^3 \beta _3^2 \hbar ^4}{2 \left(q_1^2-\beta _2\right){}^3}\\
   &+\frac{32 \hbar ^2 \beta _1 \beta _2^5 \beta _3^2-9 \hbar ^4
   \beta _2 \beta _3^4}{\left(q_1^2-\beta _2\right){}^3}+\frac{27 \hbar ^4 \beta _2 \beta _3^4-\hbar ^4 \beta _2^3 \beta _3^2}{2 \left(q_1^2-\beta _2\right){}^3}+\frac{3 \left(990
   \beta _2 \beta _3^4 \hbar ^4-1528 \beta _2^2 \beta _3^3 \hbar ^4+\beta _2^3 \beta _3^2 \hbar ^4\right)}{2 \left(q_1^2-\beta _2\right){}^3}\\
   &+\frac{-171 \beta _2 \beta _3^2 \hbar ^4+32
   \beta _1 \beta _2 \beta _3^4 \hbar ^2-32 \beta _1 \beta _2^3 \beta _3^2 \hbar ^2}{q_1^2-\beta _2}-3 \left(-4 \beta _1^2 \beta _3^4 q_1^4+\hbar ^2 \beta _1 \beta _3^2 q_1^4+8 \beta
   _1^2 \beta _2 \beta _3^4 q_1^2-2 \hbar ^2 \beta _1 \beta _2 \beta _3^2 q_1^2\right)\\
   &+\frac{\hbar ^4 \beta _2^4 \beta _3^2-27 \hbar ^4 \beta _2^2 \beta _3^4}{2 \left(q_1^2-\beta
   _2\right){}^4}+\frac{7 \hbar ^4 \beta _3^2 \beta _2^4+9 \hbar ^4 \beta _3^4 \beta _2^2}{2 \left(q_1^2-\beta _2\right){}^4}+\frac{9 \hbar ^4 \beta _2^2 \beta _3^4-32 \hbar ^2 \beta
   _1 \beta _2^6 \beta _3^2}{\left(q_1^2-\beta _2\right){}^4}+\frac{-177 \hbar ^4 \beta _3 \beta _2^5-5 \hbar ^4 \beta _3^2 \beta _2^4+147 \hbar ^4 \beta _3^4 \beta _2^2}{4
   \left(q_1^2-\beta _2\right){}^4}\\
   &+\frac{-675 \beta _3^4 \hbar ^4-333 \beta _2^2 \beta _3^2 \hbar ^4+32 \beta _1 \beta _2^2 \beta _3^4 \hbar ^2+32 \beta _1 \beta _2^4 \beta _3^2 \hbar
   ^2}{\left(q_1^2-\beta _2\right){}^2}+\frac{160 \hbar ^2 \beta _1 \beta _3^2 \beta _2^6+2 \hbar ^2 \beta _1 \beta _3^3 \beta
   _2^5+2 \hbar ^2 \beta _1 \beta _3^4 \beta _2^4-3 \hbar ^4 \beta _3^2 \beta _2^4}{\left(q_1^2-\beta _2\right){}^4}\\
   &\left.+\frac{-128 \hbar ^2 \beta _1 \beta _3^2 \beta _2^6-2 \hbar ^2 \beta _1 \beta _3^3 \beta _2^5+33 \hbar ^4 \beta _3 \beta _2^5-2 \hbar ^2 \beta _1
   \beta _3^4 \beta _2^4-2295 \hbar ^4 \beta _3^4 \beta _2^2}{\left(q_1^2-\beta _2\right){}^4}\right)+\frac{1}{\left(q_2^2-\beta _3\right){}^4}\left(\frac{765 \hbar ^4 \beta
   _2^2 \beta _3^6}{\left(q_1^2-\beta _2\right){}^4}+\frac{2475 \hbar ^4 \beta _2 \beta _3^6}{4 \left(q_1^2-\beta _2\right){}^3}\right.\\
   &+\frac{225 \hbar ^4 \beta _3^6}{\left(q_1^2-\beta
   _2\right){}^2}+\frac{2 \left(\hbar ^2 \beta _1 \beta _2^5+\hbar ^2 \beta _1 \beta _3 \beta _2^4\right) \beta _3^5}{\left(q_1^2-\beta _2\right){}^4}-\frac{4 \left(8 \hbar ^2 \beta _1
   \beta _2 \beta _3^2-\hbar ^4 \beta _2\right) \beta _3^4}{q_1^2-\beta _2}-\frac{4 \left(8 \hbar ^2 \beta _1 \beta _2^2 \beta _3^2-\hbar ^4 \beta _2^2\right) \beta
   _3^4}{\left(q_1^2-\beta _2\right){}^2}\\
   &-\frac{\left(\hbar ^4 \beta _2^5+\hbar ^4 \beta _3 \beta _2^4\right) \beta _3^3}{4 \left(q_1^2-\beta _2\right){}^4}-\left(24 \beta _1^2 \beta
   _3^3 q_1^4-21 \hbar ^2 \beta _1 \beta _3 q_1^4-48 \beta _1^2 \beta _2 \beta _3^3 q_1^2-6 \hbar ^2 \beta _1 \beta _3^2 q_1^2+42 \hbar ^2 \beta _1 \beta _2 \beta _3 q_1^2+2 \hbar ^2
   \beta _1 \beta _3^3\right.\\
   &\left.+2 \hbar ^2 \beta _1 \beta _2 \beta _3^2\right) \beta _3^3+\frac{113}{8} \hbar ^4 q_1^4 \beta _3^2+\frac{35}{8} \hbar ^4 q_1^2 \beta _2 \beta _3^2+\frac{1}{4}
   \left(-9 q_1^4 \hbar ^4+\beta _3^2 \hbar ^4-3 q_1^2 \beta _3 \hbar ^4+\beta _2 \beta _3 \hbar ^4\right) \beta _3^2+\left(4 \beta _1^2 \beta _3^4 q_1^4-\hbar ^2 \beta _1 \beta _3^2
   q_1^4\right.\\
   &\left.-8 \beta _1^2 \beta _2 \beta _3^4 q_1^2+2 \hbar ^2 \beta _1 \beta _2 \beta _3^2 q_1^2\right) \beta _3^2+\frac{3}{8} \left(49 \hbar ^4 q_1^4 \beta _3^2-75 \hbar ^4 q_1^2 \beta
   _2 \beta _3^2\right)+\frac{1}{2} \left(147 \beta _3^4 \hbar ^4-177 q_1^2 \beta _3^3 \hbar ^4+59 \beta _2 \beta _3^3 \hbar ^4-135 q_1^2 \beta _2 \beta _3^2 \hbar
   ^4\right)\\
   &+\frac{1}{4} \left(257 \beta _3^4 \hbar ^4+45 q_1^2 \beta _3^3 \hbar ^4-15 \beta _2 \beta _3^3 \hbar ^4+135 q_1^4 \beta _3^2 \hbar ^4-147 q_1^2 \beta _2 \beta _3^2 \hbar
   ^4\right)+\frac{\hbar ^4 \left(161 \beta _2 \beta _3^4-30 \beta _3^5\right)}{2 \left(q_1^2-\beta _2\right)}-\frac{3 \hbar ^4 \left(79 \beta _2 \beta _3^4-10 \beta _3^5\right)}{2
   \left(q_1^2-\beta _2\right)}\\
   &+\frac{\hbar ^4 \left(143 \beta _2^2 \beta _3^4-354 \beta _2 \beta _3^5\right)}{2 \left(q_1^2-\beta _2\right){}^2}-\frac{3 \hbar ^4 \left(\beta _2^2
   \beta _3^4-118 \beta _2 \beta _3^5\right)}{2 \left(q_1^2-\beta _2\right){}^2}-\frac{\hbar ^4 \left(13 \beta _2^3 \beta _3^4-382 \beta _2^2 \beta _3^5\right)}{2 \left(q_1^2-\beta
   _2\right){}^3}+\frac{15 \hbar ^4 \beta _3^3 \beta _2^5-3 \hbar ^4 \beta _3^4 \beta _2^4-116 \hbar ^4 \beta _3^5 \beta _2^3}{4 \left(q_1^2-\beta _2\right){}^4}\\
   &+\frac{-59 \hbar ^4
   \beta _3^3 \beta _2^5+2 \hbar ^4 \beta _3^4 \beta _2^4+58 \hbar ^4 \beta _3^5 \beta _2^3}{2 \left(q_1^2-\beta _2\right){}^4}+\frac{\hbar ^4 \left(1485 \beta _2 \beta _3^6-764 \beta
   _2^2 \beta _3^5+2 \beta _2^3 \beta _3^4\right)}{4 \left(q_1^2-\beta _2\right){}^3}+\frac{6 \left(\hbar ^4 \beta _2^3 \beta _3^4-165 \hbar ^4 \beta _2 \beta
   _3^6\right)}{\left(q_1^2-\beta _2\right){}^3}\\
   &+\frac{2 \left(16 \hbar ^2 \beta _1 \beta _2 \beta _3^6+17 \hbar ^4 \beta _2 \beta _3^4\right)}{q_1^2-\beta _2}+2 \left(10 q_1^4 \beta
   _1^2 \beta _3^6+\hbar ^2 \beta _1 \beta _3^6-20 q_1^2 \beta _1^2 \beta _2 \beta _3^6-3 \hbar ^2 q_1^2 \beta _1 \beta _3^5+\hbar ^2 \beta _1 \beta _2 \beta _3^5-69 \hbar ^4 \beta
   _3^4-10 \hbar ^2 q_1^4 \beta _1 \beta _3^4\right.
   \end{aligned}\]
   \[\begin{aligned}
   &\left.+20 \hbar ^2 q_1^2 \beta _1 \beta _2 \beta _3^4+39 \hbar ^4 q_1^2 \beta _3^3-13 \hbar ^4 \beta _2 \beta _3^3\right)+\frac{-225 \hbar ^4
   \beta _3^6+32 \hbar ^2 \beta _1 \beta _2^2 \beta _3^6-74 \hbar ^4 \beta _2^2 \beta _3^4}{\left(q_1^2-\beta _2\right){}^2}\\
   &\left.+\frac{-2 \hbar ^2 \beta _1 \beta _2^4 \beta _3^6-765 \hbar
   ^4 \beta _2^2 \beta _3^6-2 \hbar ^2 \beta _1 \beta _2^5 \beta _3^5+26 \hbar ^4 \beta _2^5 \beta _3^3}{\left(q_1^2-\beta _2\right){}^4}\right)
\end{aligned}\]\normalsize
\begin{enumerate}
    \item[(5a)] \(\displaystyle H=\tfrac{1}{2}(p_1^2+p_2^2)+\hbar^2\left[\frac{q_1^2+q_2^2}{8\beta^2}+\frac{2(q_1^2+\beta)}{(q_1^2-\beta)^2}+\frac{2(q_2^2+\beta)}{(q_2^2-\beta)^2}\right]\)
    \end{enumerate}
This is a special case of (5) where \(\beta_1=\dfrac{\hbar^2}{8\beta^2},\beta_2=\beta_3=\beta\). There is a third-order integral
\[\begin{aligned}
    X^{(1)}&=\{m_{12},\{m_{12},m_{12}\}-3\beta p_1^2-3\beta p_2^2\}+\hbar^2\left\{p_1,-\frac{3 q_2 q_1^2 \left(23 \beta ^2+10 \beta  q_2^2-q_2^4\right)}{4 \beta  \left(\beta -q_2^2\right){}^2}\right.\\
    &\qquad\left.+\frac{12 \left(3 \beta ^2 q_2-\beta  q_2^3\right)}{\left(q_1^2-\beta
   \right){}^2}+\frac{q_2 \left(52 \beta ^3+67 \beta ^2 q_2^2-26 \beta  q_2^4+3 q_2^6\right)}{4 \beta  \left(\beta -q_2^2\right){}^2}+\frac{6 \left(3 \beta 
   q_2-q_2^3\right)}{q_1^2-\beta }\right\}\\
   &\qquad+\hbar^2\left\{p_2,\frac{3 q_2^2 q_1 \left(23 \beta ^2+10 \beta  q_1^2-q_1^4\right)}{4 \beta  \left(\beta -q_1^2\right){}^2}+\frac{138 \beta ^2 q_1-127 \beta  q_1^3}{4 \left(\beta
   -q_1^2\right){}^2}-\frac{5 \left(19 \beta ^2 q_1-6 \beta  q_1^3\right)}{2 \left(\beta -q_1^2\right){}^2}\right.\\
   &\qquad\left.-\frac{12 \left(3 \beta ^2 q_1-\beta  q_1^3\right)}{\left(q_2^2-\beta
   \right){}^2}+\frac{8 q_1^5}{\left(q_1^2-\beta \right){}^2}-\frac{3 \left(2 \beta  q_1^5+q_1^7\right)}{4 \beta  \left(\beta -q_1^2\right){}^2}-\frac{6 \left(3 \beta 
   q_1-q_1^3\right)}{q_2^2-\beta }\right\}
\end{aligned}\]
and one more fourth-order integral \((H_1-H_2,X^{(1)})\).
\begin{enumerate}
    \item[(5b)] \(\displaystyle H=\tfrac{1}{2}(p_1^2+p_2^2)+\hbar^2\left[\frac{q_1^2+q_2^2}{8\beta^2}+\frac{2(q_1^2+\beta)}{(q_1^2-\beta)^2}+\frac{2(q_2^2-\beta)}{(q_2^2+\beta)^2}\right]\)
    \end{enumerate}
This is a special case of (5) where \(\beta_1=\dfrac{\hbar^2}{8\beta^2},\beta_2=\beta,\beta_3=-\beta\). There is a third-order integral
\[\begin{aligned}
    X^{(1)}&=\{m_{12},\{m_{12},m_{12}\}+3\beta p_1^2-3\beta p_2^2\}+\hbar^2\left\{p_1,\frac{3 q_2 q_1^2 \left(23 \beta ^2-10 \beta  q_2^2-q_2^4\right)}{4 \beta  \left(\beta +q_2^2\right){}^2}-\frac{12 \left(3 \beta ^2 q_2+\beta  q_2^3\right)}{\left(q_1^2-\beta
   \right){}^2}\right.\\
   &\qquad\left.-\frac{q_2 \left(68 \beta ^3-35 \beta ^2 q_2^2-10 \beta  q_2^4-3 q_2^6\right)}{4 \beta  \left(\beta +q_2^2\right){}^2}-\frac{6 \left(3 \beta 
   q_2+q_2^3\right)}{q_1^2-\beta }\right\}+\hbar^2\left\{p_2,-\frac{3 q_1^3 \left(7 \beta ^2-10 \beta  q_2^2-q_2^4\right)}{4 \beta  \left(\beta +q_2^2\right){}^2}\right.\\
   &\qquad\left.+\frac{q_1 \left(68 \beta ^3-83 \beta ^2 q_2^2-10 \beta  q_2^4-3 q_2^6\right)}{4
   \beta  \left(\beta +q_2^2\right){}^2}+\frac{6 q_1 \left(\beta +q_2^2\right)}{q_1^2-\beta }+\frac{24 \beta  q_1 \left(\beta +q_2^2\right)}{\left(q_1^2-\beta \right){}^2}\right\}
\end{aligned}\]
and one more fourth-order integral \((H_1-H_2,X^{(1)})\).
\begin{enumerate}
\item[(6)] \(\displaystyle H=\tfrac{1}{2}(p_1^2+p_2^2)+\frac{3\hbar^2q_1(q_1^3+2\beta_1)}{(q_1^3-\beta_1)^2}+\frac{\beta_2}{q_2^2}\)
\end{enumerate}
There are three fourth-order integrals
\[\begin{aligned}\displaystyle X^{(1)}&=\{\{m_{12}, m_{12}\},\{m_{12},m_{12}\}\}-4\beta_1\{p_2^3, m_{12}\}\\
&\qquad+\left\{p_1^2,\frac{36 \beta _1 q_1 q_2^4 \hbar ^2}{\left(q_1^3-\beta _1\right){}^2}+\frac{12 q_1 q_2^4 \hbar ^2}{q_1^3-\beta _1}+4 \beta _2 q_1^2+5 q_2^2 \hbar ^2\right\}\\
&\qquad+\left\{p_1p_2,-\frac{108 \beta _1 q_2^3 q_1^2 \hbar ^2}{\left(q_1^3-\beta _1\right){}^2}-\frac{24 q_2^3 q_1^2 \hbar ^2}{q_1^3-\beta _1}-\frac{8 \beta _2
   q_1^3}{q_2}+\frac{8 \beta _1 \beta _2}{q_2}-10 q_2 q_1 \hbar ^2\right\}\\
   &\qquad+\left\{p_2^2,\frac{108 \beta _1^2 q_2^2 \hbar ^2}{\left(q_1^3-\beta _1\right){}^2}+\frac{108 \beta _1 q_2^2 \hbar ^2}{q_1^3-\beta _1}+\frac{4 \beta _2
   q_1^4}{q_2^2}-\frac{16 \beta _1 \beta _2 q_1}{q_2^2}+5 q_1^2 \hbar ^2+12 q_2^2 \hbar ^2\right\}\\
   &\qquad+\frac{1}{2} \hbar ^2 \left(4 \beta _2+5 \hbar ^2\right)+\frac{324 \beta _1^2 q_2^4 q_1^2 \hbar ^4}{\left(q_1^3-\beta _1\right){}^4}+\frac{54 \left(9
   \beta _1^2 q_1 q_2^2 \hbar ^4+4 \beta _1 q_1^2 q_2^4 \hbar ^4\right)}{\left(q_1^3-\beta _1\right){}^3}+\frac{6 \beta _2 q_1^2 \hbar
   ^2}{q_2^2}\\
   &\qquad+\frac{18 \left(3 \beta _1 \hbar ^4+4 \beta _1 \beta _2 \hbar ^2+q_1 q_2^2 \hbar ^4\right)}{q_1^3-\beta _1}+\frac{18 \left(3 \beta _1^2
   \hbar ^4+4 \beta _1^2 \beta _2 \hbar ^2+21 \beta _1 q_1 q_2^2 \hbar ^4+2 q_1^2 q_2^4 \hbar ^4\right)}{\left(q_1^3-\beta _1\right){}^2}\\
   &\qquad+\frac{4
   \beta _2^2 q_1^4}{q_2^4}-\frac{16 \beta _1 \beta _2^2 q_1}{q_2^4}\end{aligned}\]
\[\begin{aligned}\displaystyle X^{(2)}&=\{p_1^2,\{m_{12},m_{12}\}\}+\left\{p_1^2,\frac{36 \beta _1 q_1 q_2^2 \hbar ^2}{\left(q_1^3-\beta _1\right){}^2}+\frac{12 q_1 q_2^2 \hbar ^2}{q_1^3-\beta _1}+\frac{2 \beta _2
   q_1^2}{q_2^2}+\frac{\hbar ^2}{2}\right\}\\
   &\qquad+\left\{p_1p_2,-\frac{54 \beta _1 q_2 q_1^2 \hbar ^2}{\left(q_1^3-\beta _1\right){}^2}-\frac{12 q_2 q_1^2 \hbar ^2}{q_1^3-\beta _1}\right\}+\left\{p_2^2,\frac{18 \beta _1^2 \hbar ^2}{\left(q_1^3-\beta _1\right){}^2}+\frac{18 \beta _1 \hbar ^2}{q_1^3-\beta _1}+\frac{\hbar ^2}{2}\right\}\\
   &\qquad+\frac{324 \beta _1^2 q_1^2 q_2^2 \hbar ^4}{\left(q_1^3-\beta _1\right){}^4}+\frac{27 \left(3 \beta _1^2 q_1 \hbar ^4+8 \beta _1 q_1^2 q_2^2 \hbar
   ^4\right)}{\left(q_1^3-\beta _1\right){}^3}+\frac{\beta _2 \hbar ^2}{q_2^2}+\frac{3 \left(12 \beta _1 \beta _2 \hbar ^2+q_1 q_2^2 \hbar
   ^4\right)}{q_2^2 \left(q_1^3-\beta _1\right)}\\
   &\qquad+\frac{9 \left(4 \beta _1^2 \beta _2 \hbar ^2+7 \beta _1 q_1 q_2^2 \hbar ^4+4 q_1^2 q_2^4 \hbar
   ^4\right)}{q_2^2 \left(q_1^3-\beta _1\right){}^2}\end{aligned}\]
\[\begin{aligned}\displaystyle X^{(3)}&=\{p_1^2p_2, m_{12}\}+\left\{p_1^2,\frac{2 \beta _2 q_1}{q_2^2}-\frac{q_2^2 \hbar ^2}{2 \beta _1}\right\}+\left\{p_1p_2,-\frac{27 \beta _1 q_1 q_2 \hbar ^2}{\left(q_1^3-\beta _1\right){}^2}-\frac{9 q_1 q_2 \hbar ^2}{q_1^3-\beta _1}+\frac{q_1 q_2 \hbar ^2}{\beta _1}\right\}\\
&\qquad+\left\{p_2^2,\frac{18 \beta _1 q_1^2 \hbar ^2}{\left(q_1^3-\beta _1\right){}^2}+\frac{3 q_1^2 \hbar ^2}{q_1^3-\beta _1}-\frac{q_1^2 \hbar ^2}{2 \beta _1}\right\}+\frac{\hbar ^4}{4 \beta _1}+\frac{81 \beta _1^2 \hbar ^4}{2 \left(q_1^3-\beta _1\right){}^3}-\frac{\beta _2 q_1^2 \hbar ^2}{\beta _1 q_2^2}\\
&\qquad-\frac{9
   \left(-9 \beta _1 q_2^2 \hbar ^4-8 \beta _1 \beta _2 q_1^2 \hbar ^2+2 q_1 q_2^4 \hbar ^4\right)}{2 q_2^2 \left(q_1^3-\beta _1\right){}^2}-\frac{3
   \left(-3 \beta _1 q_2^2 \hbar ^4-4 \beta _1 \beta _2 q_1^2 \hbar ^2+2 q_1 q_2^4 \hbar ^4\right)}{2 \beta _1 q_2^2 \left(q_1^3-\beta _1\right)}\end{aligned}\]
\begin{enumerate}
\item[(6a)] \(\displaystyle H=\tfrac{1}{2}(p_1^2+p_2^2)+\frac{3\hbar^2q_1(q_1^3+2\beta)}{(q_1^3-\beta)^2}\)
\end{enumerate}
This a special case of (6) where \(\beta_1=\beta,\beta_2=0\). There is a first-order integral \(p_2\), a third-order integral \(p_2^3\) and three more fourth-order integrals \[(p_2,X^{(1)}),(p_2,(p_2,(p_2,X^{(1)}))),(p_2,X^{(3)})\]
\begin{enumerate}
\item[(6b)] \(\displaystyle H=\tfrac{1}{2}(p_1^2+p_2^2)+\frac{3\hbar^2q_1(q_1^3+2\beta)}{(q_1^3-\beta)^2}+\frac{\hbar^2}{q_2^2}\)
\end{enumerate}
This is a special case of (6) where \(\beta_1=\beta,\beta_2=\hbar^2\). There is a third-order integral
\[X^{(4)}=p_2^3+3\hbar^2\left\{p_2,\frac{1}{q_2^2}\right\}\]
and one more fourth-order integral
\[\begin{aligned}\displaystyle X^{(5)}&=\{p_1,\{m_{12},\{m_{12},m_{12}\}\}-\beta p_2^3\}+\hbar^2\left\{p_1^2,-\frac{12 q_1 q_2^3}{q_1^3-\beta }-\frac{36 \beta  q_1 q_2^3}{\left(q_1^3-\beta \right){}^2}-2 q_2-\frac{3 q_1^2}{q_2}\right\}\\
&\qquad+\hbar^2\left\{p_1p_2,\frac{18 q_2^2 q_1^2}{q_1^3-\beta }+\frac{81 \beta  q_2^2 q_1^2}{\left(q_1^3-\beta \right){}^2}-\frac{3 \beta }{q_2^2}+\frac{3 q_1^3}{q_2^2}+2 q_1\right\}\\
&\qquad+\hbar^2\left\{p_2^2,-\frac{54 \beta ^2 q_2}{\left(q_1^3-\beta \right){}^2}-\frac{54 \beta  q_2}{q_1^3-\beta }-6 q_2\right\}+\hbar^4\left(-\frac{324 \beta ^2 q_1^2 q_2^3}{\left(q_1^3-\beta \right){}^4}-\frac{27 \left(9 \beta ^2 q_1 q_2+8 \beta  q_1^2 q_2^3\right)}{\left(q_1^3-\beta
   \right){}^3}\right.\\
   &\qquad\left.-\frac{9 \left(6 \beta ^2+21 \beta  q_1 q_2^2+4 q_1^2 q_2^4\right)}{q_2 \left(q_1^3-\beta \right){}^2}-\frac{9 \left(6 \beta +q_1
   q_2^2\right)}{q_2 \left(q_1^3-\beta \right)}-\frac{15}{2 q_2}-\frac{9 q_1^2}{2 q_2^3}\right)\end{aligned}\]
\begin{enumerate}
\item[(7)] \(\displaystyle H=\tfrac{1}{2}(p_1^2+p_2^2)+\hbar^2\left[\frac{3q_1(q_1^3+2\beta_1)}{(q_1^3-\beta_1)^2}+\frac{2(q_2^2+\beta_2)}{(q_2^2-\beta_2)^2}-\frac{1}{8q_2^2}\right]\)
\end{enumerate}
There is one fourth-order integral
\[\begin{aligned}\displaystyle X&=\{\{m_{12},m_{12}\},\{m_{12},m_{12}\}-2\beta_2p_1^2\}-4\beta_1\{p_2^3,m_{12}\}\\
&\qquad+\hbar^2\left\{p_1^2,-\beta _2+q_1^2 \left(\frac{\beta _2^5}{2 q_2^2 \left(q_2^2-\beta _2\right){}^4}-\frac{45 \beta _2^2 q_2^4-29 \beta _2^3 q_2^2}{\left(q_2^2-\beta
   _2\right){}^4}+\frac{-5 \beta _2^4+21 \beta _2 q_2^6+15 q_2^8}{2 \left(q_2^2-\beta _2\right){}^4}\right)\right.\\
   &\qquad\left.+\frac{12 q_1 \left(q_2^4-2 \beta _2
   q_2^2\right)}{q_1^3-\beta _1}+\frac{36 q_1 \left(\beta _1 q_2^4-2 \beta _1 \beta _2 q_2^2\right)}{\left(q_1^3-\beta _1\right){}^2}+5 q_2^2\right\}\\
   &\qquad+\hbar^2\left\{p_1p_2,-\frac{q_1^3 \left(-\beta _2^2+34 \beta _2 q_2^2+15 q_2^4\right)}{q_2 \left(q_2^2-\beta _2\right){}^2}-\frac{24 q_2 q_1^2 \left(q_2^2-\beta
   _2\right)}{q_1^3-\beta _1}-\frac{108 q_1^2 \left(\beta _1 q_2^3-\beta _1 \beta _2 q_2\right)}{\left(q_1^3-\beta _1\right){}^2}\right.\\
   &\qquad\left.+\frac{\beta _1
   \left(-\beta _2^2+34 \beta _2 q_2^2+15 q_2^4\right)}{q_2 \left(q_2^2-\beta _2\right){}^2}-10 q_2 q_1\right\}+\hbar^2\left\{p_2^2,-\beta _2+q_1^4 \left(\frac{15 q_2^6}{2 \left(q_2^2-\beta _2\right){}^4}\right.\right.\\
   &\qquad\left.-\frac{-10 \beta _2^3+6 \beta _2 q_2^4+11 \beta _2^2 q_2^2}{\left(q_2^2-\beta
   _2\right){}^4}-\frac{\beta _2^4}{2 q_2^2 \left(q_2^2-\beta _2\right){}^4}\right)+q_1 \left(\frac{30 \beta _1 q_2^6}{\left(q_2^2-\beta
   _2\right){}^4}\right.\\
   &\qquad\left.-\frac{4 \left(10 \beta _1 \beta _2^3+15 \beta _1 q_2^6-6 \beta _1 \beta _2 q_2^4-11 \beta _1 \beta _2^2
   q_2^2\right)}{\left(q_2^2-\beta _2\right){}^4}+\frac{2 \beta _1 \beta _2^4}{q_2^2 \left(q_2^2-\beta _2\right){}^4}\right)+\frac{36 \left(3 \beta _1
   q_2^2-\beta _1 \beta _2\right)}{q_1^3-\beta _1}\\
   &\qquad\left.+\frac{36 \left(3 \beta _1^2 q_2^2-\beta _1^2 \beta _2\right)}{\left(q_1^3-\beta _1\right){}^2}+5
   q_1^2+12 q_2^2\right\}+\hbar^4\left(\frac{\beta _2^5}{4 q_2^2 \left(q_2^2-\beta _2\right){}^4}\right.\\
   &\qquad+\frac{54 \left(4 q_1^2 \beta _1 q_2^4+9 q_1 \beta _1^2 q_2^2-8 q_1^2 \beta _1 \beta _2
   q_2^2-3 q_1 \beta _1^2 \beta _2\right)}{\left(q_1^3-\beta _1\right){}^3}+\frac{324 \left(q_1^2 q_2^4 \beta _1^2-2 q_1^2 q_2^2 \beta _1^2 \beta
   _2\right)}{\left(q_1^3-\beta _1\right){}^4}\\
   &\qquad-\frac{3 \left(133 q_2^4 \beta _2^2-67 q_2^2 \beta _2^3\right)}{2 \left(q_2^2-\beta
   _2\right){}^4}+\tfrac{3 \left(6 q_1 q_2^8+63 \beta _1 q_2^6-14 q_1 \beta _2 q_2^6+10 q_1 \beta _2^2 q_2^4+117 \beta _1 \beta _2 q_2^4-2 q_1 \beta
   _2^3 q_2^2+9 \beta _1 \beta _2^2 q_2^2+3 \beta _1 \beta _2^3\right)}{q_2^2 \left(q_1^3-\beta _1\right) \left(q_2^2-\beta _2\right){}^2}\\
   &\qquad+\tfrac{9
   \left(4 q_1^2 q_2^{10}+42 q_1 \beta _1 q_2^8-16 q_1^2 \beta _2 q_2^8+21 \beta _1^2 q_2^6+20 q_1^2 \beta _2^2 q_2^6-98 q_1 \beta _1 \beta _2 q_2^6-8
   q_1^2 \beta _2^3 q_2^4+70 q_1 \beta _1 \beta _2^2 q_2^4\right)}{q_2^2 \left(q_1^3-\beta _1\right){}^2 \left(q_2^2-\beta _2\right){}^2}\\
   &\qquad+\frac{9\left(39 \beta _1^2 \beta _2 q_2^4-14 q_1 \beta _1 \beta _2^3 q_2^2+3 \beta _1^2 \beta _2^2
   q_2^2+\beta _1^2 \beta _2^3\right)}{q_2^2 \left(q_1^3-\beta _1\right){}^2 \left(q_2^2-\beta _2\right){}^2}+\frac{5 \left(5 q_2^8+73 \beta _2
   q_2^6+\beta _2^4\right)}{4 \left(q_2^2-\beta _2\right){}^4}\\
   &\qquad+q_1^4 \left(\frac{225 q_2^4}{16 \left(q_2^2-\beta _2\right){}^4}+\frac{135 \beta _2
   q_2^2}{4 \left(q_2^2-\beta _2\right){}^4}+\frac{147 \beta _2^2}{8 \left(q_2^2-\beta _2\right){}^4}-\frac{9 \beta _2^3}{4 \left(q_2^2-\beta
   _2\right){}^4 q_2^2}+\frac{\beta _2^4}{16 \left(q_2^2-\beta _2\right){}^4 q_2^4}\right)\\
   &\qquad+q_1 \left(-\frac{225 \beta _1 q_2^4}{4 \left(q_2^2-\beta
   _2\right){}^4}+\frac{135 \beta _1 \beta _2 q_2^2}{\left(q_2^2-\beta _2\right){}^4}+\frac{147 \beta _1 \beta _2^2}{2 \left(q_2^2-\beta
   _2\right){}^4}-\frac{3 \left(90 \beta _1 \beta _2 q_2^2+49 \beta _1 \beta _2^2\right)}{\left(q_2^2-\beta _2\right){}^4}\right.\\
   &\qquad\left.+\frac{9 \beta _1 \beta
   _2^3}{\left(q_2^2-\beta _2\right){}^4 q_2^2}-\frac{\beta _1 \beta _2^4}{4 \left(q_2^2-\beta _2\right){}^4 q_2^4}\right)+q_1^2 \left(\frac{45
   q_2^6}{4 \left(q_2^2-\beta _2\right){}^4}-\frac{177 \beta _2^2 q_2^2}{2 \left(q_2^2-\beta _2\right){}^4}\right.\\
   &\qquad\left.\left.+\frac{3 \left(33 q_2^4 \beta _2-7 \beta
   _2^3\right)}{\left(q_2^2-\beta _2\right){}^4}-\frac{3 \beta _2^4}{4 \left(q_2^2-\beta _2\right){}^4 q_2^2}\right)\right)\end{aligned}\]
\begin{enumerate}
\item[(8)] \(\displaystyle H=\tfrac{1}{2}(p_1^2+p_2^2)+\hbar^2\left[\frac{3q_1(q_1^3+2\beta_1)}{(q_1^3-\beta_1)^2}+\frac{3q_2(q_2^3+2\beta_2)}{(q_2^3-\beta_2)^2}\right]\)
\end{enumerate}
There is one fourth-order integral
\[\begin{aligned}\displaystyle X&=\{\{m_{12}, m_{12}\},\{m_{12}, m_{12}\}+5\hbar^2\}+4\{\beta_2p_1^3-\beta_1p_2^3, m_{12}\}+\hbar^2\left\{p_1^2,\frac{12 q_1 q_2^4 \left(2 \beta _1+q_1^3\right)}{\left(q_1^3-\beta _1\right){}^2}\right.\\
&\qquad\left.-\frac{48 \beta _2 q_1 q_2 \left(2 \beta
   _1+q_1^3\right)}{\left(q_1^3-\beta _1\right){}^2}+\frac{108 \beta _2^2 q_1^2}{\left(q_2^3-\beta _2\right){}^2}+\frac{108 \beta _2
   q_1^2}{q_2^3-\beta _2}+12 q_1^2\right\}+\hbar^2\left\{p_1p_2,-\frac{12 q_2^2 q_1^3 \left(7 \beta _2+2 q_2^3\right)}{\left(q_2^3-\beta _2\right){}^2}\right.\\
   &\qquad\left.-\frac{24 q_1^2 \left(q_2^3-\beta _2\right)}{q_1^3-\beta
   _1}-\frac{108 q_1^2 \left(\beta _1 q_2^3-\beta _1 \beta _2\right)}{\left(q_1^3-\beta _1\right){}^2}+\frac{12 \beta _1 q_2^2 \left(7 \beta _2+2
   q_2^3\right)}{\left(q_2^3-\beta _2\right){}^2}\right\}+\hbar^2\left\{p_2^2,\frac{12 q_2 q_1^4 \left(2 \beta _2+q_2^3\right)}{\left(q_2^3-\beta _2\right){}^2}\right.\\
   &\qquad\left.-\frac{48 \beta _1 q_2 q_1 \left(2 \beta
   _2+q_2^3\right)}{\left(q_2^3-\beta _2\right){}^2}+\frac{108 \beta _1^2 q_2^2}{\left(q_1^3-\beta _1\right){}^2}+\frac{108 \beta _1
   q_2^2}{q_1^3-\beta _1}+12 q_2^2\right\}+\hbar^4\left(\frac{36 q_2^2 q_1^4 \left(2 \beta _2+q_2^3\right){}^2}{\left(q_2^3-\beta _2\right){}^4}\right.\\
   &\qquad+\frac{324 q_1^2 \left(\beta _1^2 q_2^4-4 \beta _1^2 \beta _2
   q_2\right)}{\left(q_1^3-\beta _1\right){}^4}+\frac{18 q_2 q_1^2 \left(7 \beta _2^2+19 \beta _2 q_2^3+q_2^6\right)}{\left(q_2^3-\beta
   _2\right){}^3}-\frac{144 \beta _1 q_2^2 q_1 \left(2 \beta _2+q_2^3\right){}^2}{\left(q_2^3-\beta _2\right){}^4}\\
   &\qquad+\frac{54 \left(4 \beta _1 q_1^2
   q_2^4+9 \beta _1^2 q_1 q_2^2-16 \beta _1 \beta _2 q_1^2 q_2\right)}{\left(q_1^3-\beta _1\right){}^3}+\frac{17 \beta _2^2+506 \beta _2 q_2^3+17
   q_2^6}{2 \left(q_2^3-\beta _2\right){}^2}\\
   &\qquad+\frac{18 \left(15 \beta _1 \beta _2^2+15 \beta _1 q_2^6-2 \beta _2 q_1 q_2^5+78 \beta _1 \beta _2
   q_2^3+\beta _2^2 q_1 q_2^2+q_1 q_2^8\right)}{\left(q_1^3-\beta _1\right) \left(q_2^3-\beta _2\right){}^2}\\
   &\qquad\left.+\tfrac{18 \left(15 \beta _1^2 \beta
   _2^2+21 \beta _1 q_1 q_2^8-12 \beta _2 q_1^2 q_2^7+15 \beta _1^2 q_2^6-42 \beta _1 \beta _2 q_1 q_2^5+18 \beta _2^2 q_1^2 q_2^4+78 \beta _1^2 \beta
   _2 q_2^3+21 \beta _1 \beta _2^2 q_1 q_2^2-8 \beta _2^3 q_1^2 q_2+2 q_1^2 q_2^{10}\right)}{\left(q_1^3-\beta _1\right){}^2 \left(q_2^3-\beta
   _2\right){}^2}\right)\end{aligned}\]
\begin{enumerate}
    \item[(9)] \(H=\tfrac{1}{2}(p_1^2+p_2^2)+\beta(q_1^2+9q_2^2)\)
    \end{enumerate}
There is one third-order integral
\[X=\{p_1^2,m_{12}\}+6\beta q_2\{p_1,q_1^2\}-\tfrac{2}{3}\beta q_1^3p_2\]
and one fourth-order integral \((H_1-H_2,X)\).
\begin{enumerate}
    \item[(10)] \(H=\tfrac{1}{2}(p_1^2+p_2^2)+\beta(q_1^2+9q_2^2)+\dfrac{\hbar^2}{q_1^2}\)
    \end{enumerate}
There is one third-order integral
\[X=\{p_1^2,m_{12}\}+3q_2\left\{p_1,2\beta q_1^2-\frac{\hbar^2}{q_1^2}\right\}+\left(\frac{\hbar^2}{q_1}-\frac{2\beta q_1^3}{3}\right)p_2\]
and one fourth-order integral \((H_1-H_2,X)\).
\begin{enumerate}
    \item[(11)] \(\displaystyle H=\tfrac{1}{2}(p_1^2+p_2^2)+\hbar^2\left[\frac{q_1^2+9q_2^2}{8\beta^2}+\frac{2(q_1^2+\beta)}{(q_1^2-\beta)^2}\right]\)
    \end{enumerate}
There is one third-order integral
\[X=\{p_1^2,m_{12}\}+3\hbar^2q_2\left\{p_1,\frac{q_1^2}{4\beta^2}-\frac{2(q_1^2+\beta)}{(q_1^2-\beta)^2}\right\}+\hbar^2q_1p_2\left[\frac{2(q_1^2+3\beta)}{(q_1^2-\beta)^2}-\frac{q_1^2}{12\beta^2}\right]\]
and one fourth-order integral \((H_1-H_2,X)\).
\begin{enumerate}
\item[(12)] \(\displaystyle H=\tfrac{1}{2}(p_1^2+p_2^2)+\hbar^2\left[\frac{2(q_1^2+\beta_1)}{(q_1^2-\beta_1)^2}-\frac{1}{8q_1^2}\right]+\beta_2q_2\)
\end{enumerate}
There are two fourth-order integrals
\[\begin{aligned}
    X^{(1)}&=\{p_1^2,\{m_{12},m_{12}\}\}+\left\{p_1^2,\frac{16 \beta _1 q_2^2 \hbar ^2}{\left(q_1^2-\beta _1\right){}^2}+\frac{1}{2} \left(2 \beta _1 \beta _2 q_2+\hbar ^2\right)+\frac{8 q_2^2 \hbar ^2}{q_1^2-\beta _1}\right.\\
    &\qquad\left.+\beta _2
   \left(-q_2\right) q_1^2-\frac{q_2^2 \hbar ^2}{2 q_1^2}\right\}+\left\{p_1p_2,-\frac{24 \beta _1 q_2 q_1 \hbar ^2}{\left(q_1^2-\beta _1\right){}^2}-\frac{8 q_2 q_1 \hbar ^2}{q_1^2-\beta _1}+\beta _2 q_1^3\right.\\
   &\qquad\left.-\beta _1 \beta _2 q_1+\frac{q_2 \hbar ^2}{2 q_1}\right\}+\frac{\hbar ^2p_2^2 \left(\beta _1^2+14 \beta _1 q_1^2+q_1^4\right)}{2 \left(q_1^2-\beta _1\right){}^2}+\frac{64 \beta _1^2 q_2^2 \hbar ^4}{\left(q_1^2-\beta _1\right){}^4}\\
   &\qquad+\frac{2 \left(11 \beta _1 \hbar ^4+6 q_2^2 \hbar ^4\right)}{\left(q_1^2-\beta _1\right){}^2}+\frac{8 \left(3 \beta
   _1^2 \hbar ^4+8 \beta _1 q_2^2 \hbar ^4\right)}{\left(q_1^2-\beta _1\right){}^3}-\frac{17}{4} \beta _2 q_2 \hbar ^2\\
   &\qquad+\frac{-\beta _1 \hbar ^4-2 \beta _1^2 \beta _2 q_2 \hbar ^2-16
   q_2^2 \hbar ^4}{8 \beta _1 q_1^2}-\frac{2 \left(-\beta _1 \hbar ^4+2 \beta _1^2 \beta _2 q_2 \hbar ^2-q_2^2 \hbar ^4\right)}{\beta _1 \left(q_1^2-\beta _1\right)}\\
   &\qquad+\frac{1}{4} \beta
   _2^2 q_1^4-\frac{1}{2} \beta _1 \beta _2^2 q_1^2+\frac{q_2^2 \hbar ^4}{16 q_1^4}
\end{aligned}\]
\[\begin{aligned}
    X^{(2)}&=\{p_1^3,m_{12}\}+\left\{p_1^2,-\frac{16 \beta _1 q_2 \hbar ^2}{\left(q_1^2-\beta _1\right){}^2}-\frac{8 q_2 \hbar ^2}{q_1^2-\beta _1}+\frac{1}{2} \beta _2 q_1^2+\frac{q_2 \hbar ^2}{2 q_1^2}\right\}\\
    &\qquad+\hbar^2\left\{p_1p_2,\frac{-\beta _1^2+34 \beta _1 q_1^2+15 q_1^4}{4 q_1 \left(q_1^2-\beta _1\right){}^2}\right\}+\frac{\beta _2 \hbar ^2}{4}-\frac{64 \beta _1^2 q_2 \hbar ^4}{\left(q_1^2-\beta _1\right){}^4}\\
    &\qquad-\frac{64 \beta _1 q_2 \hbar ^4}{\left(q_1^2-\beta _1\right){}^3}+\frac{2 q_2 \hbar
   ^4}{\beta _1 q_1^2}-\frac{2 \left(q_2 \hbar ^4-2 \beta _1^2 \beta _2 \hbar ^2\right)}{\beta _1 \left(q_1^2-\beta _1\right)}-\frac{4 \left(3 q_2 \hbar ^4-\beta _1^2 \beta _2 \hbar
   ^2\right)}{\left(q_1^2-\beta _1\right){}^2}-\frac{q_2 \hbar ^4}{16 q_1^4}
\end{aligned}\]
\begin{enumerate}
\item[(13)] \(\displaystyle H=\tfrac{1}{2}(p_1^2+p_2^2)+\frac{3\hbar^2q_1(q_1^3+2\beta_1)}{(q_1^3-\beta_1)^2}+\beta_2q_2\)
\end{enumerate}
There are three fourth-order integrals
\[\begin{aligned}
    X^{(1)}&=\{p_1^2,\{m_{12},m_{12}\}\}+\left\{p_1^2,\frac{36 \beta _1 q_1 q_2^2 \hbar ^2}{\left(q_1^3-\beta _1\right){}^2}+\frac{12 q_1 q_2^2 \hbar ^2}{q_1^3-\beta _1}-\beta _2 q_1^2 q_2+\frac{\hbar ^2}{2}\right\}\\
    &\qquad+\left\{p_1p_2,-\beta _1 \beta _2-\frac{54 \beta _1 q_2 q_1^2 \hbar ^2}{\left(q_1^3-\beta _1\right){}^2}-\frac{12 q_2 q_1^2 \hbar ^2}{q_1^3-\beta _1}+\beta _2 q_1^3\right\}\\
    &\qquad+\frac{p_2^2 \hbar ^2 \left(\beta _1^2+34 \beta _1 q_1^3+q_1^6\right)}{2 \left(q_1^3-\beta _1\right){}^2}+\frac{324 \beta _1^2 q_1^2 q_2^2 \hbar ^4}{\left(q_1^3-\beta _1\right){}^4}+\frac{27 \left(3 \beta _1^2 q_1 \hbar ^4+8 \beta _1 q_1^2 q_2^2 \hbar ^4\right)}{\left(q_1^3-\beta
   _1\right){}^3}\\
   &\qquad-\frac{13}{2} \beta _2 q_2 \hbar ^2+\frac{3 \left(q_1 \hbar ^4-6 \beta _1 \beta _2 q_2 \hbar ^2\right)}{q_1^3-\beta _1}+\frac{9 \left(7 \beta _1 q_1 \hbar ^4-2 \beta
   _1^2 \beta _2 q_2 \hbar ^2+4 q_1^2 q_2^2 \hbar ^4\right)}{\left(q_1^3-\beta _1\right){}^2}\\
   &\qquad+\frac{1}{4} \beta _2^2 q_1^4-\beta _1 \beta _2^2 q_1
\end{aligned}\]
\[\begin{aligned}
    X^{(2)}&=\{p_1^3,m_{12}\}+\left\{p_1^2,-\frac{36 \beta _1 q_1 q_2 \hbar ^2}{\left(q_1^3-\beta _1\right){}^2}-\frac{12 q_1 q_2 \hbar ^2}{q_1^3-\beta _1}+\frac{1}{2} \beta _2 q_1^2\right\}+\hbar^2\left\{p_1p_2,\frac{3 q_1^2 \left(7 \beta _1+2 q_1^3\right)}{\left(q_1^3-\beta _1\right){}^2}\right\}\\
    &\qquad+\frac{\beta _2 \hbar ^2}{4}-\frac{324 \beta _1^2 q_1^2 q_2 \hbar ^4}{\left(q_1^3-\beta _1\right){}^4}-\frac{216 \beta _1 q_1^2 q_2 \hbar ^4}{\left(q_1^3-\beta _1\right){}^3}+\frac{9
   \beta _1 \beta _2 \hbar ^2}{q_1^3-\beta _1}-\frac{9 \left(4 q_1^2 q_2 \hbar ^4-\beta _1^2 \beta _2 \hbar ^2\right)}{\left(q_1^3-\beta _1\right){}^2}
\end{aligned}\]
\[\begin{aligned}
    X^{(3)}&=p_1^3p_2+\left\{p_1^2,\frac{q_2 \hbar ^2}{\beta _1}+\beta _2 q_1\right\}+\hbar^2\left\{p_1p_2,\frac{q_1 \left(17 \beta _1^2+11 \beta _1 q_1^3-q_1^6\right)}{\beta _1 \left(q_1^3-\beta _1\right){}^2}\right\}-\frac{\beta _2 q_1^2 \hbar ^2}{2 \beta _1}\\
    &\qquad+\frac{18 \left(\beta _1 \beta _2 q_1^2 \hbar ^2+q_1 q_2 \hbar ^4\right)}{\left(q_1^3-\beta _1\right){}^2}+\frac{3 \left(\beta _1 \beta _2
   q_1^2 \hbar ^2+2 q_1 q_2 \hbar ^4\right)}{\beta _1 \left(q_1^3-\beta _1\right)}
\end{aligned}\]
\begin{enumerate}
\item[(14)] \(H=\tfrac{1}{2}(p_1^2+p_2^2)+\beta_1(q_1^2+4q_2^2)+\displaystyle\frac{\beta_2}{q_1^2}+\frac{\beta_3}{q_2^2}\)
\end{enumerate}
There is one fourth-order integral
\[\begin{aligned}X&=\{p_1^2,m_{12}\}+\left\{p_1^2,\frac{4 \beta _2 q_2^2}{q_1^2}+q_1^2 \left(\frac{2 \beta _3}{q_2^2}-4 \beta _1 q_2^2\right)+\frac{\hbar ^2}{2}\right\}+\left\{p_1p_2,\frac{4 q_2 \left(\beta _1 q_1^4-\beta _2\right)}{q_1}\right\}\\
&\qquad+\frac{\hbar^2p_2^2}{2}+3 \beta _1 q_1^2 \hbar ^2-2 \beta _1 q_2^2 \left(4 \beta _2+\hbar ^2\right)+\frac{\beta _3 \hbar ^2}{q_2^2}+\frac{\beta _2 \hbar ^2}{q_1^2}+4 \beta _1^2 q_2^2 q_1^4+\frac{4 \beta _2^2
   q_2^2}{q_1^4}\end{aligned}\]
\begin{enumerate}
\item[(14a)] \(H=\tfrac{1}{2}(p_1^2+p_2^2)+\beta_1(q_1^2+4q_2^2)+\displaystyle\frac{\beta_2}{q_1^2}+\frac{\hbar^2}{q_2^2}\)
\end{enumerate}
This is a special case of (14) where \(\beta_3=\hbar^2\). There is another fourth-order integral
\[\begin{aligned}
    X^{(1)}&=\{p_1p_2^2,m_{12}\}-\left(8\beta_1q_2^3+\frac{\hbar^2}{q_2}\right)p_1^2+\left\{p_1p_2,q_2\left(8\beta_1q_2^2+\frac{3\hbar^2}{q_2^2}\right)\right\}+2\left(\beta_1q_1^2-\frac{\beta_2}{q_1^2}\right)\{p_2^2,q_2\}\\
    &\qquad+4 \beta _1 q_2 \hbar ^2+\frac{2 \beta _1 q_1^2 \left(8 \beta _1 q_2^4+\hbar ^2\right)}{q_2}-\frac{2 \beta _2 \left(8 \beta _1 q_2^4+\hbar ^2\right)}{q_1^2 q_2}-\frac{3 \hbar ^4}{2
   q_2^3}
\end{aligned}\]
\begin{enumerate}
\item[(15)] \(\displaystyle H=\tfrac{1}{2}(p_1^2+p_2^2)+\beta_1(q_1^2+16q_2^2)+\frac{\beta_2}{q_1^2}\)
\end{enumerate}
There is one fourth-order integral
\[\begin{aligned}
    X&=\{p_1^3,m_{12}\}+4q_2\left\{p_1^2,3\beta_1q_1^2-\frac{\beta_2}{q_1^2}\right\}+2p_2\left\{p_1,\frac{\beta_2}{q_1}-\beta_1q_1^3\right\}\\
    &\qquad+2 \beta _1 q_2 \left(4 \beta _2+3 \hbar ^2\right)-4 \beta _1^2 q_2 q_1^4-\frac{4 \beta _2^2 q_2}{q_1^4}
\end{aligned}\]
\begin{enumerate}
\item[(16)] \(\displaystyle H=\tfrac{1}{2}(p_1^2+p_2^2)+\beta_1(9q_1^2+4q_2^2)+\frac{\beta_2}{q_1^2}\)
\end{enumerate}
There is one fourth-order integral
\[\begin{aligned}
    X&=\{p_1p_2^2,m_{12}\}+\frac{8}{3}\beta_1q_2^3p_1^2-24\beta_1\{p_1p_2,q_1q_2^2\}+2\left(9\beta_1q_1^2-\frac{\beta_2}{q_1^2}\right)\{p_2^2,q_2\}\\
    &\qquad-12 \beta _1 q_2 \hbar ^2-48 \beta _1^2 q_1^2 q_2^3+\frac{16 \beta _1 \beta _2 q_2^3}{3 q_1^2}
\end{aligned}\]
\begin{enumerate}
\item[(17)] \(\displaystyle H=\tfrac{1}{2}(p_1^2+p_2^2)+\beta_1(9q_1^2+4q_2^2)+\frac{\beta_2}{q_1^2}+\frac{\hbar^2}{q_2^2}\)
\end{enumerate}
There is one fourth-order integral
\[\begin{aligned}
    X&=\{p_1p_2^2,m_{12}\}+\left(\frac{8}{3}\beta_1q_2^3-\frac{\hbar^2}{q_2}\right)p_1^2+\left\{p_1p_2,\frac{3\hbar^2q_1}{q_2^2}-24\beta_1q_1q_2^2\right\}\\
    &\qquad+2\left(9\beta_1q_1^2-\frac{\beta_2}{q_1^2}\right)\{p_2^2,q_2\}-\frac{3 \hbar ^2 \left(8 \beta _1 q_2^4+\hbar ^2\right)}{2 q_2^3}-\frac{6 \beta _1 q_1^2 \left(8 \beta _1 q_2^4-3 \hbar ^2\right)}{q_2}\\
    &\qquad+\frac{2 \beta _2 \left(8 \beta _1 q_2^4-3
   \hbar ^2\right)}{3 q_1^2 q_2}
\end{aligned}\]
\begin{enumerate}
\item[(18)] \(\displaystyle H=\tfrac{1}{2}(p_1^2+p_2^2)+\beta_1\left(q_1^2+4q_2^2+\frac{\beta_2^2}{q_1^2}\right)+\frac{\beta_3}{q_2^2}+\hbar^2\left[\frac{2(q_1^2+\beta_2)}{(q_1^2-\beta_2)^2}-\frac{1}{8q_1^2}\right]\) (1)
\end{enumerate}
There is one fourth-order integral
\[\begin{aligned}
    X&=\{p_1^2,\{m_{12},m_{12}\}\}+\left\{p_1^2,\frac{16 \beta _2 q_2^2 \hbar ^2}{\left(q_1^2-\beta _2\right){}^2}+\frac{8 q_2^2 \hbar ^2}{q_1^2-\beta _2}+\frac{8 \beta _1 \beta _2^2 q_2^2-q_2^2 \hbar ^2}{2 q_1^2}\right.\\
    &\qquad\left.-\frac{2 q_1^2
   \left(2 \beta _1 q_2^4-\beta _3\right)}{q_2^2}+\frac{\hbar ^2}{2}\right\}+\left\{p_1p_2,-\frac{24 \beta _2 q_2 q_1 \hbar ^2}{\left(q_1^2-\beta _2\right){}^2}-\frac{8 q_2 q_1 \hbar ^2}{q_1^2-\beta _2}\right.\\
   &\qquad\left.+\frac{q_2 \hbar ^2-8 \beta _1 \beta _2^2 q_2}{2 q_1}+4 \beta _1 q_2
   q_1^3\right\}+\frac{p_2^2 \hbar ^2 \left(\beta _2^2+14 \beta _2 q_1^2+q_1^4\right)}{2 \left(q_1^2-\beta _2\right){}^2}+\frac{64 \beta _2^2 q_2^2 \hbar ^4}{\left(q_1^2-\beta _2\right){}^4}\\
   &\qquad+\frac{8 \left(3 \beta _2^2 \hbar ^4+8 \beta _2 q_2^2 \hbar ^4\right)}{\left(q_1^2-\beta _2\right){}^3}+3 \beta _1
   q_1^2 \hbar ^2-\frac{-\beta _3 \hbar ^2+17 \beta _1 q_2^4 \hbar ^2+8 \beta _1^2 \beta _2^2 q_2^4}{q_2^2}\\
   &\qquad+\frac{-\beta _2 \hbar ^4+8 \beta _1 \beta _2^3 \hbar ^2+128 \beta _1 \beta
   _2^2 q_2^2 \hbar ^2-16 q_2^2 \hbar ^4}{8 \beta _2 q_1^2}+\frac{-16 \beta _1 \beta _2^2 q_2^2 \hbar ^2+64 \beta _1^2 \beta _2^4 q_2^2+q_2^2 \hbar ^4}{16 q_1^4}\\
   &\qquad+\frac{2 \left(8 \beta
   _2^2 \beta _3 \hbar ^2+11 \beta _2 q_2^2 \hbar ^4+6 q_2^4 \hbar ^4\right)}{q_2^2 \left(q_1^2-\beta _2\right){}^2}+\frac{2 \left(8 \beta _2^2 \beta _3 \hbar ^2+\beta _2 q_2^2 \hbar
   ^4-16 \beta _1 \beta _2^2 q_2^4 \hbar ^2+q_2^4 \hbar ^4\right)}{\beta _2 q_2^2 \left(q_1^2-\beta _2\right)}\\
   &\qquad+4 \beta _1^2 q_1^4 q_2^2
\end{aligned}\]
\begin{enumerate}
\item[(18a)] \(\displaystyle H=\tfrac{1}{2}(p_1^2+p_2^2)+\beta_1\left(q_1^2+4q_2^2+\frac{\beta_2^2}{q_1^2}\right)+\hbar^2\left[\frac{2(q_1^2+\beta_2)}{(q_1^2-\beta_2)^2}-\frac{1}{8q_1^2}\right]\)
\end{enumerate}
This is a special case of (18) where \(\beta_3=0\). There is one more fourth-order integral
\[\begin{aligned}
    X&=\{p_1^3,m_{12}\}+q_2\left\{p_1^2,4 \beta _1 \beta _2-\frac{16 \beta _2 \hbar ^2}{\left(q_1^2-\beta _2\right){}^2}-\frac{8 \hbar ^2}{q_1^2-\beta _2}+\frac{\hbar ^2-8 \beta _1 \beta _2^2}{2 q_1^2}\right\}\\
    &\qquad+\left\{p_1p_2,\frac{12 \beta _2 q_1 \hbar ^2}{\left(q_1^2-\beta _2\right){}^2}+\frac{4 q_1 \hbar ^2}{q_1^2-\beta _2}+\frac{8 \beta _1 \beta _2^2-\hbar ^2}{4 q_1}+2 \beta _1 q_1^3-4 \beta _1 \beta
   _2 q_1\right\}\\
   &\qquad-\frac{64 \beta _2^2 q_2 \hbar ^4}{\left(q_1^2-\beta _2\right){}^4}-\frac{64 \beta _2 q_2 \hbar ^4}{\left(q_1^2-\beta _2\right){}^3}-\frac{12 q_2 \hbar ^4}{\left(q_1^2-\beta
   _2\right){}^2}-\frac{2 \left(q_2 \hbar ^4-8 \beta _1 \beta _2^2 q_2 \hbar ^2\right)}{\beta _2 \left(q_1^2-\beta _2\right)}\\
   &\qquad+\frac{16 \beta _1 \beta _2^2 q_2 \hbar ^2-64 \beta _1^2
   \beta _2^4 q_2-q_2 \hbar ^4}{16 q_1^4}+\frac{-17 \beta _1 \beta _2^2 q_2 \hbar ^2+8 \beta _1^2 \beta _2^4 q_2+2 q_2 \hbar ^4}{\beta _2 q_1^2}\\
   &\qquad+4 \beta _1^2 q_1^4 q_2-8 \beta _1^2
   \beta _2 q_1^2 q_2
\end{aligned}\]
\begin{enumerate}
\item[(19)] \(\displaystyle H=\tfrac{1}{2}(p_1^2+p_2^2)+\beta_1\left(q_1^2+16q_2^2+\frac{\beta_2^2}{q_1^2}\right)+\hbar^2\left[\frac{2(q_1^2+\beta_2)}{(q_1^2-\beta_2)^2}-\frac{1}{8q_1^2}\right]\)
\end{enumerate}
There is one fourth-order integral
\[\begin{aligned}
    X&=\{p_1^3,m_{12}\}+q_2\left\{p_1^2,-\frac{16 \beta _2 \hbar ^2}{\left(q_1^2-\beta _2\right){}^2}-\frac{8 \hbar ^2}{q_1^2-\beta _2}+\frac{\hbar ^2-8 \beta _1 \beta _2^2}{2 q_1^2}+12 \beta _1 q_1^2\right\}\\
    &\qquad+\left\{p_1p_2,\frac{12 \beta _2 q_1 \hbar ^2}{\left(q_1^2-\beta _2\right){}^2}+\frac{4 q_1 \hbar ^2}{q_1^2-\beta _2}+\frac{8 \beta _1 \beta _2^2-\hbar ^2}{4 q_1}-2 \beta _1 q_1^3\right\}-\frac{64 \beta _2^2 q_2 \hbar ^4}{\left(q_1^2-\beta _2\right){}^4}\\
    &\qquad-\frac{64 \beta _2 q_2 \hbar ^4}{\left(q_1^2-\beta _2\right){}^3}+\beta _1 q_2 \left(8 \beta _1 \beta _2^2+21 \hbar
   ^2\right)-\frac{2 \left(q_2 \hbar ^4-48 \beta _1 \beta _2^2 q_2 \hbar ^2\right)}{\beta _2 \left(q_1^2-\beta _2\right)}\\
   &\qquad-\frac{4 \left(3 q_2 \hbar ^4-16 \beta _1 \beta _2^2 q_2 \hbar
   ^2\right)}{\left(q_1^2-\beta _2\right){}^2}-\frac{2 \left(8 \beta _1 \beta _2^2 q_2 \hbar ^2-q_2 \hbar ^4\right)}{\beta _2 q_1^2}\\
   &\qquad+\frac{16 \beta _1 \beta _2^2 q_2 \hbar ^2-64 \beta
   _1^2 \beta _2^4 q_2-q_2 \hbar ^4}{16 q_1^4}-4 \beta _1^2 q_1^4 q_2
\end{aligned}\]
\begin{enumerate}
\item[(20)] \(\displaystyle H=\tfrac{1}{2}(p_1^2+p_2^2)+\hbar^2\left[\frac{4q_1^2+q_2^2}{32\beta_1^2}+\frac{2(q_1^2+\beta_1)}{(q_1^2-\beta_1)^2}\right]+\frac{\beta_2}{q_2^2}\)
\end{enumerate}
There is one fourth-order integral
\[\begin{aligned}
    X&=\{p_1^2p_2,m_{12}\}+\left(\frac{2 \beta _2}{q_2^2}-\frac{q_2^2 \hbar ^2}{16 \beta _1^2}\right)\left\{p_1^2,q_1 \right\}-\hbar^2\left\{p_1p_2,\frac{q_2 \left(20 \beta _1^3-6 \beta _1 q_1^4+33 \beta _1^2 q_1^2+q_1^6\right)}{4 \beta _1^2 \left(q_1^2-\beta _1\right){}^2}\right\}\\
    &\qquad+\frac{p_2^2 q_1 \hbar ^2 \left(20 \beta _1^3-6 \beta _1 q_1^4+17 \beta _1^2 q_1^2+q_1^6\right)}{4 \beta _1^2 \left(q_1^2-\beta _1\right){}^2}+\frac{12 \beta _1 q_1 \hbar ^4}{\left(q_1^2-\beta _1\right){}^3}-\frac{q_1^3 \hbar ^2 \left(q_2^4 \hbar ^2-32 \beta _1^2 \beta _2\right)}{64 \beta _1^4 q_2^2}\\
    &\qquad+\frac{q_1 \hbar ^2
   \left(-32 \beta _1^2 \beta _2-2 \beta _1 q_2^2 \hbar ^2+q_2^4 \hbar ^2\right)}{16 \beta _1^3 q_2^2}+\frac{32 \beta _1^2 \beta _2 q_1 \hbar ^2-q_1 q_2^4 \hbar ^4}{8 \beta _1^2 q_2^2
   \left(q_1^2-\beta _1\right)}\\
   &\qquad+\frac{6 \beta _1 q_1 q_2^2 \hbar ^4+32 \beta _1^2 \beta _2 q_1 \hbar ^2-q_1 q_2^4 \hbar ^4}{2 \beta _1 q_2^2 \left(q_1^2-\beta _1\right){}^2}
\end{aligned}\]
\begin{enumerate}
\item[(21)] \(\displaystyle H=\tfrac{1}{2}(p_1^2+p_2^2)+\hbar^2\left[\frac{9(4q_1^2+q_2^2)}{32\beta_1^2}+\frac{2(q_1^2+\beta_1)}{(q_1^2-\beta_1)^2}+\frac{1}{q_1^2}\right]+\frac{\beta_2}{q_2^2}\)
\end{enumerate}
There is one fourth-order integral
\[\begin{aligned}
    X&=\{p_1^2p_2,m_{12}\}+\left(\frac{2\beta_2}{q_2^2}-\frac{9\hbar^2q_2^2}{16\beta_1^2}\right)\{p_1^2,q_1\}\\
    &\qquad-\hbar^2\left\{p_1p_2,\frac{3 q_2 \left(4 \beta _1^4-14 \beta _1 q_1^6+31 \beta _1^2 q_1^4-8 \beta _1^3 q_1^2+3 q_1^8\right)}{4 \beta _1^2 q_1^2 \left(q_1^2-\beta _1\right){}^2}\right\}\\
    &\qquad+\frac{p_2^2 \hbar ^2 \left(4 \beta _1^4-42 \beta _1 q_1^6+69 \beta _1^2 q_1^4-8 \beta _1^3 q_1^2+9 q_1^8\right)}{4 \beta _1^2 q_1 \left(q_1^2-\beta _1\right){}^2}+\frac{3 \hbar ^4}{2 q_1^3}\\
    &\qquad+\frac{12 \beta _1 q_1 \hbar ^4}{\left(q_1^2-\beta _1\right){}^3}-\frac{9 q_1^3 \hbar ^2 \left(9 q_2^4 \hbar ^2-32 \beta _1^2 \beta _2\right)}{64 \beta _1^4 q_2^2}+\frac{3 q_1 \hbar ^2
   \left(-32 \beta _1^2 \beta _2-3 \beta _1 q_2^2 \hbar ^2+9 q_2^4 \hbar ^2\right)}{8 \beta _1^3 q_2^2}\\
   &\qquad+\frac{32 \beta _1^2 \beta _2 \hbar ^2-9 q_2^4 \hbar ^4}{16 \beta _1^2 q_1
   q_2^2}+\frac{32 \beta _1^2 \beta _2 q_1 \hbar ^2-9 q_1 q_2^4 \hbar ^4}{8 \beta _1^2 q_2^2 \left(q_1^2-\beta _1\right)}+\frac{6 \beta _1 q_1 q_2^2 \hbar ^4+32 \beta _1^2 \beta _2
   q_1 \hbar ^2-9 q_1 q_2^4 \hbar ^4}{2 \beta _1 q_2^2 \left(q_1^2-\beta _1\right){}^2}
\end{aligned}\]
\begin{enumerate}
\item[(22)] \(\displaystyle H=\tfrac{1}{2}(p_1^2+p_2^2)+\hbar^2\left[\frac{4q_1^2+9q_2^2}{32\beta_1^2}+\frac{2(q_1^2+\beta_1)}{(q_1^2-\beta_1)^2}\right]+\frac{\beta_2}{q_2^2}\)
\end{enumerate}
There is one fourth-order integral
\[\begin{aligned}
    X&=\{p_1^2p_2,m_{12}\}+\left(\frac{2\beta_2}{q_2^2}-\frac{9\hbar^2q_2^2}{16\beta_1^2}\right)\{p_1^2,q_1\}+\hbar^2\left\{p_1p_2,\frac{3 q_2 \left(-8 \beta _1^3-2 \beta _1 q_1^4-7 \beta _1^2 q_1^2+q_1^6\right)}{4 \beta _1^2 \left(q_1^2-\beta _1\right){}^2}\right\}\\
    &\qquad+\frac{p_2^2 q_1 \hbar ^2 \left(72 \beta _1^3+2 \beta _1 q_1^4+23 \beta _1^2 q_1^2-q_1^6\right)}{12 \beta _1^2 \left(q_1^2-\beta _1\right){}^2}+\frac{12 \beta _1 q_1 \hbar ^4}{\left(q_1^2-\beta _1\right){}^3}+\frac{3 q_1 \hbar ^4}{8 \beta _1^2}+\frac{q_1^3 \hbar ^2 \left(9 q_2^4 \hbar ^2-32 \beta _1^2 \beta _2\right)}{192
   \beta _1^4 q_2^2}\\
   &\qquad-\frac{q_1 \left(-32 \beta _1^2 \beta _2 \hbar ^2-6 \beta _1 q_2^2 \hbar ^4+9 q_2^4 \hbar ^4\right)}{2 \beta _1 q_2^2 \left(q_1^2-\beta _1\right){}^2}+\frac{32
   \beta _1^2 \beta _2 q_1 \hbar ^2-9 q_1 q_2^4 \hbar ^4}{8 \beta _1^2 q_2^2 \left(q_1^2-\beta _1\right)}
\end{aligned}\]
\end{document}